\theoremstyle{plain}
\newtheorem{theorem}{Theorem}
\newtheorem{lemma}[theorem]{Lemma}
\theoremstyle{definition}
\newtheorem{definition}[theorem]{Definition}
\newtheorem{example}[theorem]{Example}
\newtheorem{assumption}{Assumption}
\newcommand*\circled[1]{\tikz[baseline=(char.base)]{
            \node[shape=circle,draw,inner sep=1pt] (char) {#1};}}
\definecolor{myblue}{RGB}{80,80,160}
\definecolor{mygreen}{RGB}{80,160,80}
\def\final{0}  
\def\iflong{\iffalse}
\newcommand{\rnote}[1]{}
\begin{document}

\title{Asymptotic Existence of Class Envy-free Matchings}
\author[1]{Tomohiko Yokoyama}
\author[1]{Ayumi Igarashi}
\affil[1]{The University of Tokyo}
\affil[ ]{\texttt{ \{\href{mailto:tomohiko_yokoyama@mist.i.u-tokyo.ac.jp}{tomohiko\_yokoyama},\href{mailto:igarashi@mist.i.u-tokyo.ac.jp}{igarashi}\}@mist.i.u-tokyo.ac.jp}}

\date{}

\maketitle


\begin{abstract}
We consider a one-sided matching problem where agents who are partitioned into disjoint classes and each class must receive fair treatment in a desired matching. This model, proposed by Benabbou et al.~\cite{Benabbou2019}, aims to address various real-life scenarios, such as the allocation of public housing and medical resources across different ethnic, age, and other demographic groups. Our focus is on achieving class envy-free matchings, where each class receives a total utility at least as large as the maximum value of a matching they would achieve from the items matched to another class. While class envy-freeness for worst-case utilities is unattainable without leaving some valuable items unmatched, such extreme cases may rarely occur in practice. To analyze the existence of a class envy-free matching in practice, we study a distributional model where agents' utilities for items are drawn from a probability distribution. Our main result establishes the asymptotic existence of a desired matching, showing that a round-robin algorithm produces a class envy-free matching as the number of agents approaches infinity. 
\end{abstract}

\section{Introduction}
One-sided matching is a fundamental problem that forms the economic foundation of numerous practical applications, spanning from kidney exchange~\cite{kidneyexchange}, assigning drivers to customers~\cite{banerjee2019ride}, to house allocation~\cite{Sonmez1998,Hylland1979,ZHOU1990}. An instance of the one-sided matching problem consists of a set of agents, a set of indivisible items, and preferences of the agents over the items. The goal is to find an assignment of items to agents while ensuring desirable normative properties. In particular, guaranteeing fairness is of paramount importance in scenarios such as allocating tasks to workers or providing social housing to residents. 

A substantial body of the literature is dedicated to ensuring fairness among individuals, focusing on concepts such as \emph{envy-freeness}~\cite{BOGOMOLNAIA2001}. 
Envy-freeness ensures that no agent prefers another agent's allocated item to their own. 
On the other hand, when addressing various real-life resource allocation problems, it becomes increasingly crucial to consider fairness requirements among different classes of agents, particularly when implementing solutions in large-scale systems. 
For instance, consider the allocation of scarce medical resources to different regions in a country. Another motivating example involves student placement in public schools among different ethnic groups and social housing allocation among different income groups.
To gain public acceptance, the social planner must ensure an equitable allocation of resources across various regional groups or classes. 

The scenarios we consider are effectively captured by a model proposed by Benabbou et al.~\cite{Benabbou2019}. This model consists of $m$ items and $n$ agents, divided into $k$ disjoint classes, with each agent allocated at most one item. A key feature of this framework is the use of \emph{assignment valuations} to evaluate {\em envy} among classes. Assignment valuations quantify the potential value a class could derive from the set of items allocated to another class, determined by the optimal matching between the items and the members of the class. By adapting fairness notions from fair division literature into the one-sided matching problem, Benabbou et al.~\cite{Benabbou2019} introduced the notion of \emph{class envy-freeness}~\cite{Benabbou2019,HosseiniHuangIgarashiShah2022}, which requires that no class prefers the set of items assigned to any other class over its own bundle, in the sense of assignment valuations.

Unfortunately, a class envy-free matching is not guaranteed to exist without wasting any items. 
Consider a simple example with one item and two single-agent classes, both valuing the item. In any nonempty matching, one class receives nothing whereas the other class receives one item. 
However, this strong conflict between fairness and efficiency does not preclude the existence of practical solutions that balance fairness and efficiency. 
The question then becomes: can we achieve a fair and efficient matching in practical scenarios? 

In the context of fair division, related works have examined the existence of envy-free allocations in probabilistic settings, where agents' utilities for items are modeled using probability distributions~\cite{Amanatidis2017,BaiGolz2022,BaiEC2022,BenadeHalpernPsomasVerma2023,Dickerson2014,Kurokawa2016,ManurangsiSuksompong2017,ManurangsiSuksompong2020,ManurangsiSuksompong2021,Suksompong2016}. 
These approaches of asymptotic analysis are not limited to the fair division but have been widely studied in broader economic models; examples include one-to-one house allocation~\cite{GanSuksompongVoudouris2019,ManurangsiSuksompong2021}, two-sided matchings~\cite{kojima2009incentives,KPR13a,Ashlagi2014}, and voting theory~\cite{slinko2002asymptotic,xia2020smoothed}. 
These studies provide insights into the behavior of mechanisms in large-scale scenarios, and offer both theoretical foundations and practical implications for real-world applications.

\paragraph{Our contributions.}
In this paper, to investigate the asymptotic existence of a matching that is efficient and fair among classes on average, 
we introduce a distributional model where each agent's utilities are drawn from a probability distribution.

We first prove that when $m$ increases quadratically with respect to $n$, a welfare-maximizing mechanism, which returns a maximum-weight matching between agents and items, is asymptotically class envy-free. Importantly, this matching is also efficient since it maximizes the total utility of the agents.
While this result provides asymptotic fairness and efficiency guarantees, it may not be satisfactory in all scenarios. 
Specifically, the mechanism may fail to treat classes fairly for worst-case inputs; for example, the welfare-maximizing mechanism could allocate all items to one class if the members of that class assign large values to each item.

To address this limitation, we integrate the \emph{round-robin algorithm} from fair division into our framework. Specifically, in each round of the algorithm, each class selects an item with the highest marginal utility among the remaining items, creating a maximum-weight matching.  
Our main result establishes that this algorithm asymptotically yields a class envy-free matching as the number of agents approaches infinity, contingent upon milder assumptions regarding the number and sizes of classes.
Furthermore, based on the recent result of \cite{Amanatidis2023}, the round-robin algorithm is known to produce a matching that satisfies an approximation of \emph{class envy-freeness up to one item}  (class EF1) for every input. We also prove that it achieves \emph{non-wastefulness}, satisfying the property that no item can be reallocated to increase one class's valuation without decreasing another's.

While our algorithm shares similarities with the round-robin algorithm designed for fair division instances with additive agents, the non-additivity of valuations introduces notable distinctions in our analysis compared to the additive setting~\cite{ManurangsiSuksompong2021}. This leads to unique technical challenges. Specifically, analyzing the behavior of the round-robin algorithm becomes complex because the set of items available at each round depends on the selections made in prior rounds.

In the case of additive valuations, Manurangsi and Suksompong~\cite{ManurangsiSuksompong2021} utilized a key property: envy between any pair of agents, based on the output of the round-robin algorithm, can be decomposed into the maximum value of an item obtained by another agent and the differences in value between the items each agent received and those received by another agent in consecutive rounds. Exploiting this fact, they showed that these differences can ``catch up to'' the maximum single-item value when there are sufficiently many items.
However, achieving such a decomposition becomes challenging for assignment valuations due to the combinatorial structure of the matchings. In our setting, the marginal contribution of an item to different bundles may vary, and the domination property that an item chosen in an earlier round has a value greater than or equal to an item chosen later no longer holds. Consequently, unlike the additive case, the round-robin may not yield a class EF1 matching (see Example 1 in \cite{HosseiniHuangIgarashiShah2022}).

To overcome these challenges, our proof critically leverages techniques from random assignment theory~\cite{Aldous2001, BuckChanRobbins2002, FriezeJohansson2017, Talagrand1995, Wastlund2005}. This theory considers a bipartite graph with random edge weights, primarily focusing on analyzing the expected value of a minimum weight perfect matching (which is essentially equivalent to a maximum-weight perfect matching). 

In proofs, instead of examining individual pairs of items allocated to two classes, $p$ and $q$, in each round, we focus on the marginal utility that class $p$ receives in each round. This approach helps us analyze the expected value of the items allocated to class $p$. First, we derive a lower bound on the expected total utility for class $p$ by applying the novel techniques introduced by \cite{Wastlund2005, Wastlund2009} and \cite{FriezeJohansson2017}, which involve introducing a special vertex whose behavior explicitly determines the expected marginal gain. Next, we evaluate the expected value of the items allocated to class $q$ from class $p$'s perspective, using a randomly selected bundle of the same size as class $p$'s. Finally, we discuss the concentration of probability around these expected values, showing that the edge weights chosen by the round-robin algorithm are sufficiently large. 
We discuss our proof techniques in Sections~\ref{sec:round-robin}.

\subsection{Related Work}
\if0
Our problem can be formulated as a fair division problem with assignment valuations, which generalize the commonly assumed additive valuations.
Here, each class is represented by a ``meta'' agent whose valuation for a set of items is determined by the maximum total weight of an optimal matching between class members and items.
An assignment valuation, also known as an OXS valuation, form an important subclass of gross substitutes valuations and submodular valuations~\cite{Lehmann2006,Leme2017}.
For binary assignment valuations, an allocation satisfying both EF1 (a relaxed notion of envy-freeness) and non-wastefulness always exist and can be computed in polynomial time~\cite{BabaioffEzraFeige2021,Benabbou2019,Benabbou2021}. 
However, the compatibility between EF1 and non-wastefulness for general assignment valuations remains an open question.
\fi

\paragraph{Asymptotic fair division.}
Our work is closely related to the growing literature on asymptotic fair division \cite{Amanatidis2017,BaiGolz2022,BaiEC2022,Dickerson2014,Kurokawa2016,ManurangsiSuksompong2017,ManurangsiSuksompong2020,ManurangsiSuksompong2021,Suksompong2016}. 
Dickerson et al.~\cite{Dickerson2014} initiated the study of asymptotic fair division. They assumed that valuations are additive and drawn from a distribution with positive variances. 
Although the non-existence of an envy-free allocation also holds in this setting, Dickerson et al.~\cite{Dickerson2014} demonstrated that a welfare-maximizing algorithm for additive agents produces an envy-free allocation with a probability that approaches $1$ as $m$ goes to infinity when $m=\mathrm{\Omega}(n\log n)$. 
%

Following \cite{Dickerson2014}, Manurangsi and Suksompong~\cite{ManurangsiSuksompong2020} proved that, assuming that utilities are drawn from a polynomial-bounded distribution when $m$ is divisible by $n$, an envy-free allocation exists for agents with additive valuations with a probability that approaches $1$ as $m\to\infty$.
Moreover, Manurangsi and Suksompong~\cite{ManurangsiSuksompong2021} showed that under the assumption that utilities are drawn from a PDF-bounded distribution, agents have additive valuations and $m=\mathrm{\Omega}(n\log n/\log \log n)$, the round-robin algorithm returns an envy-free allocation with a probability that approaches $1$ as $n\to\infty$. Apart from requiring fewer items for establishing asymptotic envy-freeness, the round-robin algorithm has another advantage over the welfare-maximizing algorithm; it achieves an approximation of envy-freeness, specifically envy-freeness up to one item (EF1), for additive agents~\cite{Caragiannis2019}.

Bai and G\"{o}lz~\cite{BaiGolz2022} extend these results to the case where agents have asymmetric distributions when distributions are PDF-bounded.
Furthermore, Benad\`{e} et al.~\cite{BenadeHalpernPsomasVerma2023} demonstrated that the round-robin algorithm produces an SD envy-free allocation with a probability that approaches $1$ as $m\to\infty$ when agents have order-consistent valuation functions, items are renamed by a uniformly random permutation, $m$ is divisible by $n$, and $m=\omega(n^2)$.
Several papers have studied the asymptotic existence of allocations that satisfy other fairness notions, such as proportionality~\cite{Suksompong2016} or a maximin share guarantee~\cite{Amanatidis2017,Kurokawa2016}.

\paragraph{Asymptotic house allocation.}
In the one-to-one house allocation problem, Gan et al.~\cite{GanSuksompongVoudouris2019} studied the asymptotic existence of an envy-free allocation, which requires that there is no envy between every pair of individual agents. 
They demonstrated that when the agents' preferences are drawn uniformly at random and $m=\mathrm{\Omega}(n\log n)$, the probability that an envy-free allocation exists converges to $1$ as $n$ goes to infinity.
Furthermore, Manurangsi and Suksompong~\cite{ManurangsiSuksompong2021} showed that, if $m/n\geq\mathrm{e}+\varepsilon$, where $\mathrm{e}$ is Napier's constant and $\varepsilon>0$ is any constant, an envy-free allocation can be found with a probability that converges to $1$ as $n\to \infty$. Conversely, if $m/n\leq\mathrm{e}-\varepsilon$, there is no envy-free allocation with a probability that converges to $1$ as $n\to \infty$. 
In the house assignment problem, the cardinal model is equivalent to the ordinal setting because agents must compare only individual items. However, in our setting, classes must compare bundles of items. 
As mentioned in Section~\ref{sec:mainresults}, the asymptotic existence of a class envy-free matching where each agent is matched to exactly one item readily follows from the result in the house allocation problem~\cite{GanSuksompongVoudouris2019,ManurangsiSuksompong2021}, though such a matching may be wasteful.

\if0
Several group fairness notions have been explored in various settings~\cite{KyropoulouSuksompongVoudouris2020,ManurangsiSuksompong2017,ManurangsiSuksompong_group_2022,SegalSuksompong2019}. 
The majority of these studies concentrate on scenarios where each group possesses an additive valuation.
Manurangsi and Suksompong~\cite{ManurangsiSuksompong2017} investigated the asymptotic existence of allocations satisfying specific group fairness criteria. Their model differs fundamentally from ours in its treatment of intra-group item allocation. 
In their framework, agents within a group collectively share the items allocated to that group, with an individual agent's utility derived from the aggregate utility of all items assigned to their group. See Appendix~\ref{sec:relatedworks} for a more extensive discussion about the further related work. Omitted proofs can be found in Appendix \ref{app:nesting_lemma} -- \ref{appendix:finalproof}. 
\fi

\paragraph{Assignment valuations.}
Our problem can be formulated as a fair division problem with \emph{assignment valuations}, which may violate the additive assumption prevalent in fair division literature. Specifically, each class is represented by a ``meta'' agent, and the meta agent's valuation for each set of items is determined by the maximum total weight of an optimal matching between the class members and the items in a given bipartite graph.  
An assignment valuation is a generalization of additive valuations. That is, given an arbitrary additive valuation $v$ for $m$ items, an equivalent assignment valuation can be created by representing each agent $i$ with a class $i$ with $m$ copies of agents, with every copy having the same edge weight $v_i(j)$ towards each item $j$. 
However, the same construction cannot be used to represent the distributional model for additive valuations using assignment valuations.
%

\paragraph{Random assignment.}
We briefly review the literature on the theory of random assignments. 
Let $C_{n,m,r}$ be the minimum total weight of the matching with $r$ edges in a bipartite graph with $n$ and $m$ vertices on each side when edge weights are independently assigned from the exponential distribution with rate $1$. Here, we assume that $n\leq m$. 
When $n=m=r$, Walkup~\cite{Walkup1979} showed that the expected value is bounded above when $n$ goes to infinity. 
Following numerous papers on experimental results and improved bounds~(see the introduction in~\cite{Wastlund2009} for details), Karp~\cite{Karp1987} improved the upper bound and showed that the expected value is smaller than $2$ for any $n$. Aldous~\cite{Aldous1992,Aldous2001} showed that $\mathbb{E}[C_{n,n,n}]$ converges to $\frac{\pi^2}{6}$ as $n$ goes to infinity. 
For a more general combination of $n,m$, and $r$, Linusson and W\"{a}stlund~\cite{LinussonWastlund2001} and Nair, Prabhakar and Sharma~\cite{NairPrabhakarSharma2006}
obtained a concrete formula for the expected minimum total weight of a matching
given by $\mathbb{E}[C_{n,m,r}]  = \sum_{i=1}^r \frac{1}{n} \sum_{j=0}^{i-1} \frac{1}{m-j}$.
W\"{a}stlund~\cite{Wastlund2009} provided a concise and elegant proof for this result, by analyzing the expected difference between the minimum weight of matching with $r$ edges and that with $r-1$ edges and showing that $\mathbb{E}[C_{n,m,r}] - \mathbb{E}[C_{n,m,r-1}] = \frac{1}{n}\sum_{j=0}^{r-1}\frac{1}{m-j}$.
Frieze and Johansson~\cite{FriezeJohansson2017} and Frieze~\cite{Frieze2021} explored a similar approach for random bipartite graphs and non-bipartite graphs, and W\"{a}stlund~\cite{Wastlund2012} and Larsson~\cite{Larsson2021} extended the result in~\cite{Wastlund2009} to more general distributions in some pseudo-dimension.
Recently, the random assignment problem where edge weights are drawn independently from a standard Gaussian distribution is investigated~\cite{MordantSegers2021}.


\section{Model}\label{sec:Preliminaries}

We use $[k]$ to denote the set $\{1, 2,\ldots, k\}$. Let $N=[n]$ be the set of $n$ agents, and $I=[m]$ be the set of $m$ items. 
The set of agents $N$ is partitioned into $k$ classes, labeled as $N_1, N_2, \ldots, N_k$. 
Let $n_p = |N_p|$ for each class $p$. We assume that $n_p \geq 1$ for every $p\in [k]$. Each $N_p$ is referred to as class $p$. 
We call a subset of $I$ a \textit{bundle}. 

We consider a matching problem where each item in $I$ is matched to at most one agent in $N$, and each agent receives at most one item. 
Each agent $i \in N$ is endowed with a non-negative utility $u_i(j)$ for every item $j \in I$ where $u_i(j)$ ranges within the interval $[0,1]$. We assume that $u_i(j)$ is drawn from a distribution over $[0,1]$. Detailed assumptions on distributions are presented in Section~\ref{sec:Preliminaries}. 

We define a complete bipartite graph $G=(N\cup I,E)$,  where the set of agents in $N$ forms the left vertices and the set of items in $I$ forms the right vertices. Here, $E$ denotes the set of edges. 
We consider the weights of edges where the weight of edge $\{i,j\}\in E$ is given by $u_i(j)$ for each $i\in N$ and $j\in I$. 
A \textit{matching} $M$ in $G$ is defined as a set of edges wherein each vertex appears at most one edge of $M$. 
For $S \subseteq N$, let $M(S)$ be the set of items which are assigned to some agent in $S$ by matching $M$, i.e., $M(S)=\{j \in I \mid \exists\, i \in S: \{i,j\} \in M\}$. Each matching $M$ induces an allocation that assigns the bundle $M(N_p)$ to every class $p$. 
For bundle $I'\subseteq I$, let $\mathcal{M}(N_p,I')$ denote the set of possible matchings between $N_p$ and $I'$ in $G$.
We define the total utility obtained by class $p$ under matching $M$ as $u_p(M) = \sum_{\{i,j\} \in M, i\in N_p}u_{i}(j)$. 

To define envy between classes, we formalize how much hypothetical value each class can derive from a bundle allocated to another class. To do this, we introduce an \emph{assignment valuation}.
\begin{definition}[Assignment valuation]\label{def:assignment_val}
    An \textit{assignment valuation} $v_p(I')$ of class $p$ for a bundle $I' \subseteq I$ is defined as the maximum total weight of a matching between the agents in $N_p$ and the items in $I'$. Namely, $v_p(I')$ is given by $\max_{M\in \mathcal{M}(N_p,I')}\sum_{\{i,j\}\in M}u_i(j)$.
\end{definition}
It is worth noting that the assignment valuation $v_p(I')$ for bundle $I'\subseteq I$ is upper bounded by the size $n_p$ of each class $p\in[k]$ since $u_i(j)\le 1$ for all $i\in N$ and $j\in I$.
Here, each $v_p(I')$ can be computed in polynomial time by computing a maximum-weight matching in the given bipartite graph with edge weights.
See Section~9 in~\cite{LovaszPlummer2009}. 

Next, we introduce the concept of fairness among classes---\emph{class envy-freeness}. This principle requires that the total utility each class receives must be greater than or equal to the maximum total utility that the class can derive from the items allocated to other classes. 
\begin{definition}[Class envy-freeness]\label{def:ef_for_assignment_val}
For a matching $M$, we say that class $p$ \emph{envies} class $q$ if $u_p(M) < v_p(M(N_q))$. 
    A matching $M$ is called \textit{class envy-free} if no class envies another class, i.e., $u_p(M) \geq v_p(M(N_q))$ holds for every pair $p,q \in [k]$ of two distinct classes.
\end{definition}

If we allow each class to optimally reassign items within the members of the class, then the class would select a maximum-weight matching between the members of the class and their bundle.
In such a scenario, the class envy-freeness requirement is equivalent to the above-mentioned definition, where the left-hand side $u_p(M)$ is replaced by $v_p(M(N_p))$.

As is observed in~\cite{Benabbou2019,HosseiniHuangIgarashiShah2022}, unfortunately, there exists an input where a class envy-free matching may not exist without allowing us to dispose items. Thus, the following approximation of class envy-freeness has been considered in \cite{HosseiniHuangIgarashiShah2022}. 
A matching $M$ is \emph{$\alpha$-class envy-free matching up to one item (CEF1)} if for every pair of classes $p, q \in [k]$, either class $p$ does not envy class $q$, or there exists an item $j\in M(N_q)$ such that $\alpha^{-1} \cdot u_p(M) \geq v_p(M(N_q)\setminus \{j\})$. If $\alpha=1$, we call such a matching \emph{CEF1}~\cite{HosseiniHuangIgarashiShah2022}.

Next, we define a measure of efficiency called \emph{non-wastefulness}~\cite{Benabbou2019}. Non-wastefulness requires that valuable items are not wasted. 
\begin{definition}[Non-wastefulness]\label{def:nowasteful}
    An item $j \in I$ is said to be \textit{wasted} for a matching $M$ if either
    \begin{enumerate}[label=$(\mathrm{\alph*})$]
    \item\label{item:consition-a-of-non-wasteful} item $j$ is an unallocated and can increase the total utility of some class, i.e., we have $j \notin M(N)$ and $v_p(M(N_p) \cup j) - v_p(M(N_p)) >0$ for some class $p$, or 
    \item\label{item:consition-b-of-non-wasteful} item $j$ can be reallocated from class $q$ to class $p$ in a way that increases the total utility of class $p$ without reducing the total utility of class $q$, i.e., there exists class $q$ such that $j \in M(N_q)$ and $v_p(M(N_q)) - v_p(M(N_q)\setminus \{j\})=0$
    but $v_p(M(N_p) \cup j) - v_p(M(N_p))>0$ for some class $p$.
    \end{enumerate}
    A matching is \textit{non-wasteful} if no item is wasted.
\end{definition}
As mentioned in Introduction, a class envy-free matching that satisfies non-wastefulness may not exist. For example, consider the case of a single item being divided between two classes. 
Additionally, we remark that if we do not impose non-wastefulness, a CEF1 matching that allocates all items always exists and can be found in polynomial time using the envy-graph algorithm introduced by Lipton~\cite{Lipton2004}. 
However, the matching produced by the envy-graph algorithm may not satisfy non-wastefulness as pointed out by Benabbou et al.~\cite{Benabbou2019}.


\paragraph{Distributions.}
For each agent $i \in N$ and item $j \in I$, 
the utility $u_i(j)$ is independently drawn from a given distribution $\mathcal{D}$ supported on $[0,1]$. 
Let $f_\mathcal{D}$ and $F_\mathcal{D}$ denote the probability density function (PDF) and the cumulative distribution function of $\mathcal{D}$, respectively.
A distribution is said to be \emph{non-atomic} if it does not assign a positive probability to any single point. 

We say that a distribution $\mathcal{D}$ is \emph{$(\alpha,\beta)$-PDF-bounded} for constants $0 < \alpha \leq \beta$ if it is non-atomic and $\alpha \leq f_{\mathcal{D}}(x) \leq \beta$ for all $x \in [0,1]$. 
When $\alpha=\beta=1$, $\mathcal{D}$ represents the uniform distribution over $[0,1]$ since $f_{\mathcal{D}}(x)=1$ for all $x \in [0,1]$. 
The PDF-boundedness assumption is introduced by~\cite{ManurangsiSuksompong2021} in the context of fair division as a natural class of distributions, which includes, for example, the uniform distribution and the truncated normal distribution.

Let $\text{Exp}(\lambda)$ denote the exponential distribution with rate $\lambda$ over $[0, \infty)$. 
Furthermore, let $\mathrm{ReExp}(\lambda)$ denote a distribution with the density function $f_{\lambda}(x) = \lambda \mathrm{e}^{-\lambda (1-x)}$ on the interval $(- \infty,1]$. We call this probability distribution the \emph{reversed exponential distribution}, which mirrors the exponential distribution $\text{Exp}(\lambda)$ across the line $x=1/2$. The cumulative distribution of $\mathrm{ReExp}(\lambda)$ is given by $F_{\mathrm{ReExp}(\lambda)}(x)=\mathrm{e}^{-\lambda(1-x)}$.
We say that an event occurs \emph{almost surely} if it occurs with probability $1$.


\paragraph{Known results on maximum-weight matchings.}
We present several known results on maximum-weight matchings in a bipartite graph with random edge weights that are used throughout the entire paper. 
For completeness, we show the proofs in Appendix~\ref{app:nesting_lemma}. 

Let $H$ be a complete bipartite graph with bipartition $(A,B)$. 
For $A'\subseteq A$ and $B'\subseteq B$, let $H[A',B']$ denote the subgraph of $H$ induced by $A'$ and $B'$. 
We first present the following lemma, which can be proven by a proof similar to that of the isolation lemma~\cite{Jukna2001,Spencer1996}. Note that after edge weights on $H$ have been sampled, we can select $A'$ and $B'$ while referring to the values of those edge weights.
\begin{restatable}
{lemma}{UniquenessMaximumMatching}\label{lemma:uniqueness_of_maximum_weight_matching}
    Let $H$ be a complete bipartite graph with bipartition $(A,B)$ whose edge weights are drawn independently from a non-atomic distribution on $[0,1]$. Let $A'\subseteq A$ and $B'\subseteq B$. 
    Then, no pair of distinct matchings has the same total weight in $H[A',B']$ almost surely.
\end{restatable}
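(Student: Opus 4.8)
The plan is to mimic the classical isolation-lemma argument. Fix the subgraph $H[A',B']$ and suppose, for contradiction, that with positive probability two distinct matchings $M_1 \neq M_2$ achieve the same total weight. Since $M_1 \neq M_2$, their symmetric difference $M_1 \triangle M_2$ is nonempty and contains at least one edge; pick any such edge $e \in M_1 \triangle M_2$. The idea is to condition on all edge weights \emph{except} the weight $w(e)$ of this single edge $e$, and then argue that, conditionally, the equality $w(M_1) = w(M_2)$ forces $w(e)$ to equal a specific deterministic value (determined by the other weights and by whether $e \in M_1$ or $e \in M_2$). Because the distribution is non-atomic, $w(e)$ hits that prescribed value with conditional probability $0$, and integrating over the conditioning gives probability $0$ for this fixed pair $(M_1, M_2)$.

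More carefully, write $w(M_1) - w(M_2) = \pm w(e) + R$, where $R$ collects the weights of the other edges in $M_1$ and $M_2$ (with appropriate signs) and the coefficient of $w(e)$ is $+1$ or $-1$ according to whether $e\in M_1\setminus M_2$ or $e\in M_2\setminus M_1$; in either case the coefficient is nonzero. Conditioned on all weights other than $w(e)$, the quantity $R$ is a constant, so $w(M_1) = w(M_2)$ holds only if $w(e) = \mp R$, a fixed real number. Since $w(e)$ is drawn from a non-atomic distribution independently of the other weights, $\Pr[w(e) = \mp R \mid \text{other weights}] = 0$, hence $\Pr[w(M_1) = w(M_2)] = 0$. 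Finally, take a union bound over the finitely many ordered pairs of distinct matchings in $H[A',B']$ (there are at most $(2^{|A'|\cdot|B'|})^2$ of them, certainly finite) to conclude that almost surely no two distinct matchings share the same total weight.

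One small point worth addressing explicitly: the statement allows $A'$ and $B'$ to be chosen \emph{after} inspecting the edge weights, which might seem to break independence. This is not an issue, because the conclusion ``no two distinct matchings in $H[A',B']$ have equal weight'' is implied by the stronger event ``for \emph{every} pair $A'\subseteq A$, $B'\subseteq B$, no two distinct matchings in $H[A',B']$ have equal weight,'' and the latter is a finite intersection of probability-$1$ events (one per choice of $(A',B')$), hence itself has probability $1$. So we may simply prove the statement for arbitrary but fixed $A', B'$ and then intersect over all choices.

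I do not anticipate a genuine obstacle here; the only thing to be careful about is making sure the coefficient of the isolated edge weight is nonzero, which is immediate since a fixed edge $e$ lies in at most one of $M_1, M_2$ when $e \in M_1 \triangle M_2$ (so its net coefficient is exactly $\pm 1$, never $0$). The argument is a direct adaptation of the isolation lemma and the non-atomicity does all the work.
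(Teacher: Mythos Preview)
Your proposal is correct and follows essentially the same isolation-lemma approach as the paper: fix a pair of distinct matchings, isolate a single edge $e$ in their symmetric difference, condition on all other weights so that $w(M_1)=w(M_2)$ forces $w(e)$ to equal a determined constant, invoke non-atomicity, and finish with a union bound over the finitely many pairs. Your extra paragraph handling the possibility that $A',B'$ are chosen after seeing the weights (via a finite intersection over all $(A',B')$) is a nice clarification that the paper only mentions in passing.
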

Lemma~\ref{lemma:uniqueness_of_maximum_weight_matching} implies that, almost surely, a maximum-weight matching of a fixed size in $H[A',B']$ is uniquely determined. 
We next explain the nesting lemma, which follows from Lemma $3$ in \cite{BuckChanRobbins2002} or Lemma $2.1$ in \cite{Wastlund2009}. 
\begin{restatable}[The nesting lemma]{lemma}{NestingLemma}\label{lemma:nesting_lemma}
 Let $H$ be a complete bipartite graph with bipartition $(A,B)$ whose edge weights are drawn independently from a non-atomic distribution on $[0,1]$. Let $A'\subseteq A$ and $B'\subseteq B$. 
 Then, every vertex that appears as an element of the maximum-weight matching with $r-1$ edges in $H[A',B']$ also appears as an element of the maximum-weight matching with $r$ edges in $H[A',B']$ almost surely. Here, $1<r\leq \min(|A'|,|B'|)$.
\end{restatable}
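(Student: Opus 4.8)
The plan is to reduce the statement to the classical nesting property for random assignment problems, as established in Lemma~3 of \cite{BuckChanRobbins2002} (see also Lemma~2.1 of \cite{Wastlund2009}). First I would fix the induced subgraph $H[A',B']$ once and for all, so that we may work with a complete bipartite graph on vertex sets $A'$ and $B'$ whose edge weights are i.i.d.\ from a non-atomic distribution on $[0,1]$. By Lemma~\ref{lemma:uniqueness_of_maximum_weight_matching}, almost surely no two distinct matchings in $H[A',B']$ share the same total weight; hence for each $r$ with $1 \le r \le \min(|A'|,|B'|)$ the maximum-weight matching with exactly $r$ edges is unique almost surely. Call it $M_r$, and let $V(M_r)$ denote the set of vertices it covers. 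The claim is that $V(M_{r-1}) \subseteq V(M_r)$ almost surely, for every $1 < r \le \min(|A'|,|B'|)$; since there are only finitely many values of $r$, it suffices to prove this for a fixed $r$ and then take a union over the finitely many exceptional null sets.

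The core of the argument is an exchange/augmenting-path argument. I would consider the symmetric difference $M_{r-1} \triangle M_r$, which decomposes into vertex-disjoint paths and even cycles in $H[A',B']$ since both are matchings. Because $|M_r| - |M_{r-1}| = 1$, exactly one component is an augmenting path $P$ for $M_{r-1}$ (odd number of edges, one more from $M_r$ than from $M_{r-1}$), and all other components are alternating paths of even length or alternating cycles. If some vertex of $M_{r-1}$ were not covered by $M_r$, then that vertex is an endpoint of one of the even alternating components, say a path $Q$ starting and ending with $M_{r-1}$-edges. Now swapping along $Q$: the matching $M_r \triangle Q$ has $r$ edges (since $Q$ contributes equally many $M_r$ and $M_{r-1}$ edges... more precisely $Q$ has one more $M_{r-1}$-edge than $M_r$-edge, so I would instead do the swap the other way, combining $P$ and $Q$ to build both a better $r$-edge matching or a better $(r-1)$-edge matching). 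Concretely: if $V(M_{r-1}) \not\subseteq V(M_r)$, one produces from $M_{r-1} \triangle M_r$ two matchings $M'$ with $r$ edges and $M''$ with $r-1$ edges such that $w(M') + w(M'') = w(M_r) + w(M_{r-1})$, with $\{M',M''\} \ne \{M_r, M_{r-1}\}$, by rerouting one even component between the two. Then $w(M') \le w(M_r)$ and $w(M'') \le w(M_{r-1})$ by optimality, forcing equality, which contradicts the almost-sure uniqueness from Lemma~\ref{lemma:uniqueness_of_maximum_weight_matching} unless $\{M',M''\} = \{M_r, M_{r-1}\}$ — and the only way that happens is if every even component is empty, i.e.\ $V(M_{r-1}) \subseteq V(M_r)$.

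The main obstacle I anticipate is the bookkeeping in the exchange step: one has to verify carefully that the rerouting operation, applied to an even alternating component, genuinely yields a pair of matchings of sizes $r$ and $r-1$ distinct from $(M_r, M_{r-1})$, rather than simply recovering the original pair. The clean way to handle this is to pick a \emph{minimal} bad component (an even alternating path $Q$ with both endpoints $M_{r-1}$-covered-but-not-$M_r$-covered, or the appropriate parity), move its edges across, and check the edge counts component by component; the non-atomicity then does all the remaining work through Lemma~\ref{lemma:uniqueness_of_maximum_weight_matching}. Since this is precisely the content of Lemma~3 in \cite{BuckChanRobbins2002}, I would in the write-up either cite it directly or reproduce this short exchange argument in Appendix~\ref{app:nesting_lemma}, taking care that all optimality comparisons are among matchings of the correct fixed size within $H[A',B']$.
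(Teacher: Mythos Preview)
Your proposal is correct and matches the paper's approach: both analyze the symmetric difference $M_{r-1}\triangle M_r$ and use an exchange argument together with the uniqueness from Lemma~\ref{lemma:uniqueness_of_maximum_weight_matching} to derive a contradiction. The paper's version proves the slightly stronger intermediate claim that $M_{r-1}\triangle M_r$ is a single alternating path (which immediately gives nesting), and your caution about the bookkeeping is warranted since the ``exactly one augmenting component'' claim need not hold a priori---the paper handles this by pairing components of opposite net contribution, exactly the maneuver you anticipate.
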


We introduce notations and definitions related to matchings in a bipartite graph. For a bipartite graph $H$ and each vertex $i$ that appears in a matching $M$ of $H$ (namely $\{i,j\}\in M$ for some $j$), we denote by $M(i)$ the vertex matched to $i$ under $M$. 
An \emph{alternating} path $P$ (resp. a cycle $C$) of matching $M$ in bipartite graph $H$ is a path (resp. a cycle) in $H$ where, for every pair of consecutive edges on $P$, one of them is in $M$ and the other one is not in $M$.


\section{Maximum-Weight Matching}\label{sec:mainresults}

We introduce our first result, which states if the number of items is quadratically large, then the probability that a maximum-weight matching is class envy-free approaches $1$ as $m\to \infty$. 

\begin{restatable}{theorem}{ThmMaxmumWeightMatching}\label{thm:maxmatching}
    Suppose that $\mathcal{D}$ is non-atomic and there exists a constant $c>0$ with $m/ \log m \ge c \cdot (k^2 \max_{p \in [k]} n_p)^2 $. 
    Then, as $m \to \infty$, the probability that a maximum-weight matching is class envy-free approaches $1$.
\end{restatable}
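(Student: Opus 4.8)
The plan is to show that, with high probability, a maximum-weight matching $M$ in $G$ assigns to \emph{every} class $p$ a bundle on which $p$ already attains close to its maximal possible per-agent utility, while the bundle $M(N_q)$ received by any other class $q$ has $n_q \le n_p$ items whose best matching into $N_p$ falls short of that. Concretely, fix a pair of distinct classes $p,q$. Since $u_p(M)\ge v_p(M(N_p))$ is not automatic we will instead bound $u_p(M)$ directly from below via the optimality of $M$, and bound $v_p(M(N_q))$ from above. For the upper bound: $M(N_q)$ is a set of at most $n_q \le \max_r n_r$ items, and $v_p(M(N_q))$ is the weight of a matching using at most $n_q$ of the $n_p$ agents in $N_p$; each such edge weight is an i.i.d.\ draw from $\mathcal D$, supported on $[0,1]$. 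The key point is that $M(N_q)$ is \emph{not} adversarially chosen against $N_p$ — it is determined by the global optimum — so we can afford a union bound over all $\binom{m}{n_q}$ possible bundles of that size, asking that no bundle of $\le \max_r n_r$ items admits a matching into $N_p$ of weight exceeding $n_q - \delta$ for a suitable slack $\delta$. Standard tail bounds (a Chernoff/union-bound argument over the at most $(n_p)^{n_q}$ matchings within a fixed bundle, then over bundles) give that this fails with probability at most $\mathrm{poly}(m)\cdot\exp(-\Omega(\delta^2))$, and here the quadratic lower bound $m/\log m \ge c(k^2\max_p n_p)^2$ is exactly what lets us take $\delta$ shrinking slowly enough that $\mathrm{poly}(m)$ is absorbed.

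For the lower bound on $u_p(M)$: because $M$ is a maximum-weight matching over all of $G$, and because $m$ is enormous relative to $n=\sum n_r$, the $n_p$ agents of class $p$ can each be matched to an item on which they have utility very close to $1$ — indeed among $m$ i.i.d.\ draws the top $n$ order statistics are all $\ge 1 - O(n\log m / m)$ with high probability for each agent, and with $m$ this large there are enough near-perfect items to go around all $n$ agents simultaneously. More carefully, I would argue that any maximum-weight matching must give total weight at least $n - o(1)$ summed appropriately, so in particular $u_p(M) \ge n_p - o(1)$; combining with the upper bound $v_p(M(N_q)) \le n_q - \delta \le n_p - \delta$ yields $u_p(M) \ge v_p(M(N_q))$ once $o(1) < \delta$. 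Then take a union bound over the at most $k^2$ ordered pairs $(p,q)$; the factor $k^2$ is harmless against the exponential savings, which is why only $\log$-scale dependence on $k$ (hidden in the polynomial/quadratic condition) is needed.

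The main obstacle I anticipate is making the two estimates \emph{quantitatively compatible}: the upper-bound union bound over bundles costs roughly $\exp(n_q \log m)$, so the deviation $\delta$ we can certify is on the order of $\sqrt{n_q \log m / (\text{something})}$, and we need this to still be $o(1)$ relative to the slack we lose in the lower bound — that is, we must ensure $\delta$ can be chosen with $\delta \to 0$ yet $\delta \gg (\text{lower-bound error})$. Balancing these forces exactly the scaling $m \gtrsim \log m \cdot (k^2 \max_p n_p)^2$: the square comes from needing the per-agent shortfall $\delta \sim \sqrt{\max_p n_p \cdot \log m / m}$ to beat a competing term of order $\max_p n_p \cdot \log m / m$ coming from how close to $1$ we can push each matched edge. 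A secondary technical nuisance is that $u_p(M)$, unlike $v_p(M(N_p))$, is the weight of the \emph{globally optimal} matching restricted to $N_p$, which need not be the within-$N_p$ optimum for the bundle $M(N_p)$; I would handle this by noting we only ever need a lower bound on $u_p(M)$, and a greedy/exchange argument shows the global optimum cannot do worse on $N_p$ than matching its agents to distinct high-utility items, which the abundance of items guarantees. The uniqueness and nesting lemmas (Lemmas~\ref{lemma:uniqueness_of_maximum_weight_matching} and~\ref{lemma:nesting_lemma}) are available if needed to talk about \emph{the} maximum-weight matching, though I expect the argument to go through without invoking uniqueness essentially.
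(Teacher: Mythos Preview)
Your union-bound step is where the argument breaks. Take the equal-size case $n_p=n_q$ for concreteness. You want simultaneously (i) a lower bound $u_p(M)\ge n_p-\epsilon$ and (ii) an upper bound $v_p(B)\le n_p-\delta$ valid for \emph{every} size-$n_p$ bundle $B$, with $\delta>\epsilon$. But (ii) applies in particular to $B=M(N_p)$, and since always $u_p(M)\le v_p(M(N_p))$, (ii) forces $u_p(M)\le n_p-\delta<n_p-\epsilon$, contradicting (i). So the two bounds can never be made quantitatively compatible, no matter how $m$ scales or how you tune $\delta$. The underlying reason is the one you flag and then discard: $M(N_q)$ is not adversarial for $N_p$, yet a union bound over all $\binom{m}{n_q}$ bundles is exactly the worst-case argument, and the adversary can always pick a bundle as good for $N_p$ as class $p$'s own allocation. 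Concretely, for large $m$ there always exist bundles $B$ with $v_p(B)$ arbitrarily close to the maximum attainable value, so the event ``no bundle exceeds $n_q-\delta$'' has probability tending to $0$, not to $1$.

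The paper's proof is entirely different and uses no concentration or tail bounds. By non-atomicity each class $p$ almost surely has a unique favorite bundle $I_p\in\argmax_{|I'|=n_p}v_p(I')$; by symmetry $I_p$ is uniform over size-$n_p$ subsets of $I$, and the $I_p$ are independent across $p$ since they depend on disjoint families of edge weights. One then simply computes the probability that $I_1,\ldots,I_k$ are pairwise disjoint --- a product of ratios $\binom{m-\sum_{r<p}n_r}{n_p}/\binom{m}{n_p}$ --- and checks it tends to $1$ under the hypothesis on $m$. Whenever the favorite bundles are disjoint, assigning each $I_p$ optimally to $N_p$ is itself a maximum-weight matching in $G$ (every class simultaneously attains its global maximum), and hence any maximum-weight matching must also give every class its maximum value, which is trivially class envy-free.
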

To prove Theorem~\ref{thm:maxmatching}, we rely on the following simple observation: if each class desires a disjoint bundle, then the maximum-weight matching is class envy-free. 
\begin{restatable}{lemma}{LemDistinctBundle}\label{lem:bundle_distinct}
Suppose that there exist $k$ disjoint bundles $I_1,I_2,\ldots,I_k$ such that such that for every $p\in [k]$, $I_p$ is a most favorite bundle of class $p$, i.e., $I_p \in \mathrm{argmax}_{I'\subseteq I, |I'|=n_p} v_p(I')$. Then, any maximum-weight matching in $G$ is class envy-free and non-wasteful.
\end{restatable}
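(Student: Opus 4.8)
The plan is to exhibit the matching that respects the disjoint partition $I_1,\dots,I_k$ as a maximum-weight matching, and then read off the two conclusions from the equalities this forces. The workhorse will be a simple monotonicity/truncation fact: for every class $p$ and every bundle $I'\subseteq I$, $v_p(I')\le v_p(I_p)$. Indeed, an optimal matching witnessing $v_p(I')$ uses at most $n_p$ items, all lying in some $J\subseteq I'$ with $|J|\le n_p$; since all weights are non-negative this same matching is also a matching between $N_p$ and any superset of $J$, and padding $J$ up to a set of size exactly $n_p$ inside $I$ (possible because the $I_p$ are disjoint, so $m\ge n\ge n_p$) yields $v_p(I')\le\max_{I''\subseteq I,\,|I''|=n_p}v_p(I'')=v_p(I_p)$.

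Next I would form $M^\star$ by placing, for each $p$, an optimal matching between $N_p$ and $I_p$; disjointness of the $I_p$ (and of the $N_p$) makes this a legal matching in $G$ of weight $\sum_{p\in[k]}v_p(I_p)$. For an arbitrary matching $M$, the edges of $M$ meeting $N_p$ form a matching between $N_p$ and $M(N_p)$, so $u_p(M)\le v_p(M(N_p))\le v_p(I_p)$ by the fact above; summing over $p$ gives $u(M)\le\sum_{p\in[k]}v_p(I_p)=u(M^\star)$. Hence $M^\star$ is maximum-weight, and whenever $M$ is itself maximum-weight the inequality chain must be tight term by term, so $u_p(M)=v_p(M(N_p))=v_p(I_p)$ for every $p$.

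The two desired properties then follow quickly. For class envy-freeness, take distinct $p,q$; the fact gives $v_p(M(N_q))\le v_p(I_p)=u_p(M)$, so $p$ does not envy $q$. For non-wastefulness, apply the fact to $I'=M(N_p)\cup\{j\}$ for any item $j$: $v_p(M(N_p)\cup\{j\})\le v_p(I_p)=v_p(M(N_p))$, so $j$ has non-positive marginal value to $p$'s bundle, which contradicts the strict improvement demanded in both condition (a) and condition (b) of Definition~\ref{def:nowasteful}. I do not expect a genuine obstacle here; the only step requiring care is the truncation argument, specifically that $M(N_p)$ (or the witness set $J$) may contain strictly fewer than $n_p$ items and that such a bundle can always be padded back up to size $n_p$ because the disjoint favorite bundles force $m\ge n$.
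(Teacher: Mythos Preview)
Your argument is correct and follows essentially the same route as the paper: build the ``partitioned'' matching $M^\star$, show it is maximum-weight by summing the per-class inequalities $u_p(M)\le v_p(I_p)$, and then use tightness to conclude that every maximum-weight matching gives each class its optimal value. You are in fact more careful than the paper, which asserts $u_p(M_{\mathrm{OPT}})\le u_p(M)$ and ``clearly, $M_{\mathrm{OPT}}$ is non-wasteful'' without spelling out the padding step or the zero-marginal argument that you make explicit.
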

\begin{proof} 
    Let $M$ denote the matching that results from optimally assigning each $I_p$ to class $N_p$ for $p =1,2,\ldots,k$. This means that $u_p(M) = v_p(M(N_p))$ for $p =1,2,\ldots,k$.

    Take any maximum-weight matching $M_{\mathrm{OPT}}$ in $G$.
    Let $w(M)$ and $w(M_{\mathrm{OPT}})$ be the total weights of $M$ and $M_{\mathrm{OPT}}$ each. 
    Then since $u_p(M_{\mathrm{OPT}}) \leq u_p(M)$ for every $p \in [k]$,
    \[
        w(M_{\mathrm{OPT}})= \sum_{p\in [k]}u_p(M_{\mathrm{OPT}})  \leq \sum_{p\in [k]} u_p(M) = w(M).
    \]
    Thus, $M$ is a maximum-weight matching, and we have $ w(M_{\mathrm{OPT}}) = w(M)$. Therefore, we obtain $u_p(M_{\mathrm{OPT}}) = u_p(M)$ for every $p\in [k]$.
    Hence, each class is matched to its most favorite bundle under $M_{\mathrm{OPT}}$, and therefore matching $M_{\mathrm{OPT}}$ is class envy-free. Clearly, $M_{\mathrm{OPT}}$ is non-wasteful.
\end{proof}

\begin{proof}[Proof of Theorem~\ref{thm:maxmatching}]
%
By Lemma~\ref{lemma:uniqueness_of_maximum_weight_matching}, a maximum-weight matching in $G$ of size $n$ is uniquely determined almost surely.
Since each edge weight is drawn from the same non-atomic distribution, we have that $\mathbf{Pr}\left[ \text{argmax}_{I''\subseteq I, |I''|=n_p} v_p(I'') = I' \right] = \frac{1}{{m \choose n_p}}$
for all $p\in [k]$ and $I'\subseteq I$ with $|I'|=n_p$.

Let $\mathcal{A}$ be the event that there exists a class envy-free and non-wasteful matching.
Let $\mathcal{B}$ be the event that there exist $k$ disjoint bundles $I_1,I_2,\ldots,I_k$ such that $I_p \in \mathrm{argmax}_{I'\subseteq I, |I'|=n_p} v_p(I')$ for every $p\in [k]$. 
By Lemma~\ref{lem:bundle_distinct}, if any two classes' the most favorite bundles are disjoint, then there exists a class envy-free and non-wasteful matching. Then, we have $\mathbf{Pr}[\mathcal{A}] \geq \mathbf{Pr}[\mathcal{B}]$. 
Let $P$ denote the set of partitions of the $m$ items into disjoint bundles of sizes $n_1,n_2,\ldots,n_k$.
We provide a lower bound for $\mathbf{Pr}[\mathcal{B}]$ as follows.
\begin{align*}
    \mathbf{Pr}[\mathcal{B}] 
    &= \mathbf{Pr}\left[ \text{there are $k$ disjoint bundles $I_1,I_2,\ldots,I_k$ s.t. $\underset{I'\subseteq I,\, |I'|=n_p} {\operatorname{argmax}}  v_p(I') = I_p$ for every $p\in [k]$} \right]  \\
    &= 
        \sum_{(I_1,I_2,\ldots,I_k)\in P} \mathbf{Pr}\left[ \text{$\underset{I'\subseteq I,\, |I'|=n_p} {\operatorname{argmax}} v_p(I') = I_p$ for every $p\in [k]$}\right]  \\
    &= 
        \sum_{(I_1,I_2,\ldots,I_k)\in P} \frac{1}{{m \choose n_1}}\cdot \frac{1}{{m \choose n_2}}\cdot \cdots\cdot  \frac{1}{{m \choose n_k}}  \\
    &= 
        \frac{{m \choose n_1}}{{m \choose n_1}}\cdot \frac{{m-n_1 \choose n_2}}{{m \choose n_2}}\cdot \cdots\cdot  \frac{{m-\sum_{i=1}^{k-1}n_i \choose n_k}}{{m \choose n_k}} \\
    &\ge  
        \left( 1- \frac{n_2}{m-n_1+1} \right)^{n_1} \cdot  \left( 1- \frac{n_3}{m-n_1-n_2+1} \right)^{n_1+n_2} \cdot
    \cdots 
        \cdot  \left( 1- \frac{n_k}{m-{\sum_{i=1}^{k-1}n_i}+1} \right)^{\sum_{p=1}^{k-1}n_p} \\
    &\ge 
        \exp \left( - \frac{n_1 n_2}{m-n_1+1} - \frac{(n_1+n_2)n_3}{m-n_1-n_2+1} - \cdots -\frac{(\sum_{i=1}^{k-1}n_i)n_k}{m-\sum_{p=1}^{k-1}n_p+1} \right) \\
    &\ge  
        \exp \left( - \frac{k^2 (\max_{p\in [k]} n_p)^2}{m-\sum_{p=1}^{k-1}n_p+1} \right)\\
    &\ge  
        \exp \left( - \frac{1/c \cdot m\log m}{m-\sqrt{1/c \cdot m\log m}+1} \right).
\end{align*}
For the last inequality, we use $k^2 (\max_{p\in [k]} n_p)^2 \le 1/c \cdot m\log m$ and $\sum_{p=1}^{k-1}n_p \le \sqrt{1/c \cdot m\log m}$.
From this, we have $\mathbf{Pr}[\mathcal{A}] \to 1$ as $m \to \infty$.
\end{proof}

We note that the asymptotic existence of a class envy-free matching where every agent obtains exactly one item can be readily derived from the existing result on the one-to-one house allocation problem given by~\cite{ManurangsiSuksompong2021}. They showed that an envy-free assignment can be obtained by considering a greedy algorithm. 
This algorithm selects, in each step, an agent who has not yet been assigned an item. This agent then chooses their favorite item from those that have not been discarded (including items that have already been assigned to other agents).
If the selected item was previously chosen by another agent in an earlier step, it is removed from further consideration.  
When there are sufficiently many items, this algorithm asymptotically produces a matching where each agent receives their most preferred item among those that were not discarded, resulting in a class envy-free matching. Formally, see Proposition~\ref{complete_class_envy-free} in  Appendix~\ref{app:for_contributions}.


As previously mentioned, both the maximum-weight matching and the matching produced by the greedy algorithm of \cite{ManurangsiSuksompong2021} can be inherently unfair for worst-case inputs. This raises the question of whether there exists a matching mechanism that is fair for both worst-case and average-case inputs.


\section{Round-Robin Algorithm}\label{sec:round-robin}

We next present our second result that a round-robin algorithm, tailored for assignment valuations and presented as Algorithm~\ref{alg:main_alg_1}, produces a $1/2$-CEF1 non-wasteful matching that is class envy-free asymptotically. 
By the recent results of \cite{Amanatidis2023,Montanari2024}, we can show that the round-robin algorithm always returns a matching satisfying an approximate notion of class envy-freeness, $1/2$-CEF1. Furthermore, we prove that it is non-wasteful. 

\begin{restatable}{proposition}{RoundRobinNWEF}\label{proposition:1_2_ef1_non_waste}
    The matching produced by the round-robin algorithm is $1/2$-CEF1
    and non-wasteful. 
\end{restatable}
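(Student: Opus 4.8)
The plan is to prove the two assertions of Proposition~\ref{proposition:1_2_ef1_non_waste} separately, treating each class as a ``meta-agent'' equipped with the assignment valuation $v_p$.

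\textbf{Non-wastefulness.} First I would argue that the matching $M$ returned by the round-robin algorithm is non-wasteful. Recall that in each round every class that still has an ``unsaturated'' agent picks an item maximizing its marginal gain $v_p(M_{\text{cur}}(N_p)\cup\{j\})-v_p(M_{\text{cur}}(N_p))$ among the currently available items, and the algorithm stops only when no class can improve, i.e., when all remaining items have zero marginal value for every class whose agents are not all matched. To rule out case~\ref{item:consition-a-of-non-wasteful}: if an item $j$ is unallocated at the end, then at the last round in which class $p$ was active, $j$ was available and class $p$ chose an item of weakly larger marginal value; but since assignment valuations are submodular (being OXS), the marginal value of $j$ to the final bundle $M(N_p)$ is at most its marginal value to the earlier bundle, hence $\le$ the marginal gain class $p$ actually realized. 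A careful version of this argument, using that $p$ eventually ran out of useful items or out of unmatched agents, shows $v_p(M(N_p)\cup\{j\})-v_p(M(N_p))=0$ for every $p$, so $j$ is not wasted. Case~\ref{item:consition-b-of-non-wasteful} is similar: if $j\in M(N_q)$ contributes nothing to class $q$'s bundle (so it could be removed harmlessly), then at the round $j$ was added to $q$ it had positive marginal value; submodularity then forces that any class $p$ wanting $j$ would have been able to take it earlier — I would track the first round at which $j$ became available and $p$ was active, and compare the marginal value $p$ realized there with $v_p(M(N_p)\cup\{j\})-v_p(M(N_p))$, again invoking submodularity to reach a contradiction unless the latter is $0$.

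\textbf{$1/2$-CEF1.} For the fairness part I would appeal to the cited result of Amanatidis et al.~\cite{Amanatidis2023} (and Montanari et al.~\cite{Montanari2024}): for the round-robin algorithm run on agents with \emph{submodular} valuations, the resulting allocation is $1/2$-EF1, meaning for every ordered pair $(p,q)$ there is an item $j\in M(N_q)$ with $v_p(M(N_p))\ge \tfrac12 v_p(M(N_q)\setminus\{j\})$. The only thing to check is that our setting is a legitimate instance of theirs: the $k$ classes play the role of the agents, the valuation of class $p$ is $v_p$, and $v_p$ is monotone and submodular because assignment (OXS) valuations are submodular — this is the standard fact I would cite from~\cite{Lehmann2006} or~\cite{Leme2017}. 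One subtlety is that our round-robin picks, in each round, a \emph{maximum-weight matching} between the not-yet-saturated agents of each class and the chosen items, rather than literally ``one item per agent-copy'': I would note that choosing the item of maximum marginal value for the meta-agent is exactly the greedy step of the submodular round-robin, so the two descriptions coincide, and the cyclic order over classes matches the cyclic order over agents in~\cite{Amanatidis2023}. Hence $1/2$-CEF1 follows directly. Writing $u_p(M)\le v_p(M(N_p))$ (since $u_p(M)$ is the weight of the particular intra-class matching, while $v_p$ optimizes) shows the CEF1 guarantee also holds with $u_p(M)$ on the left-hand side if one prefers the $u_p$-form of the definition — but the definition as stated already uses $v_p$-style bundles, so no extra work is needed.

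\textbf{Main obstacle.} I expect the delicate point to be the non-wastefulness argument, specifically making rigorous the interaction between ``a class ran out of unmatched agents'' versus ``a class ran out of valuable items,'' since the algorithm can terminate for either reason and the marginal-value domination across rounds is only one-directional (submodularity gives $\le$, not $=$). The clean way to handle this is: for each class $p$, consider the last round $t_p$ in which $p$ added an item; for any item $j\notin M(N_p)$, either $j$ was unavailable at round $t_p$ (hence allocated to some other class by then, covering a sub-case of~\ref{item:consition-b-of-non-wasteful}) or $j$ was available and $p$'s chosen marginal gain was $\ge$ the marginal gain of $j$ to the round-$t_p$ bundle, which by submodularity is $\ge v_p(M(N_p)\cup\{j\})-v_p(M(N_p))$; combined with the stopping condition that forces $p$'s last realized gain to be $0$ whenever $p$ still had an unmatched agent after round $t_p$, every relevant marginal value vanishes. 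The bookkeeping of which sub-case applies, and confirming that in the remaining sub-case of~\ref{item:consition-b-of-non-wasteful} the ``harmless removal'' hypothesis $v_p(M(N_q))-v_p(M(N_q)\setminus\{j\})=0$ itself yields a contradiction with $j$ having had positive marginal value when $q$ picked it, is the part I would write out most carefully.
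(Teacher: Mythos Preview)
Your treatment of the $1/2$-CEF1 claim is fine and matches the paper: both simply invoke \cite{Amanatidis2023,Montanari2024} after noting that assignment valuations are submodular and that the per-class greedy step is exactly the submodular round-robin step.

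The non-wastefulness argument, however, has a real gap. You rely only on submodularity, and submodularity is not strong enough for the step you yourself flag as delicate. Concretely, for condition~\ref{item:consition-b-of-non-wasteful} you want a contradiction from ``$j$ had positive marginal for $q$ at the round $q$ picked it'' together with ``$v_q(M(N_q))-v_q(M(N_q)\setminus\{j\})=0$''. But submodularity only says the former is \emph{at least} the latter; a positive number being $\ge 0$ is no contradiction. Likewise, your handling of case~\ref{item:consition-a-of-non-wasteful} does not cover the sub-case where class $p$ stops because $|M(N_p)|=n_p$ rather than because remaining marginals vanish: there your chain ``marginal of $j$ to the final bundle $\le$ $p$'s last realized (positive) gain'' does not force that marginal to be~$0$.

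The paper closes exactly this gap by using a property of assignment valuations that goes strictly beyond submodularity: the greedy rule of repeatedly adding a max-marginal item is \emph{exactly optimal at every cardinality} (Theorem~17.2 in~\cite{Schrijver}). This yields, for each class $p$ and each round $r$,
\[
M^{r}(N_p)\ \in\ \operatorname*{argmax}\bigl\{\, v_p(I') : |I'|=r,\ I' \subseteq M^{r-1}(N_p)\cup I_p^{r}\,\bigr\}.
\]
With this in hand the paper proves the stronger statement that no class holds an item of zero marginal to its own bundle: if $v_q(M(N_q))=v_q(M(N_q)\setminus\{j\})$, then $M(N_q)\setminus\{j\}$ is a set of size $r^*-1$ (with $r^*$ the last round $q$ picked) whose value equals $v_q(M(N_q))>v_q(M^{r^*-1}(N_q))$ (strict, since every picked item had strictly positive marginal), contradicting the greedy-optimality displayed above. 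That single stronger statement kills condition~\ref{item:consition-b-of-non-wasteful} outright, and condition~\ref{item:consition-a-of-non-wasteful} follows directly from the while-loop's termination criterion. So the missing ingredient in your plan is this greedy-is-exact property (a matroid-type fact specific to OXS/assignment valuations), not a more careful submodularity bookkeeping.
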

Moreover, we demonstrate that the matching produced by the round-robin algorithm is asymptotically class envy-free.

\begin{restatable}{theorem}{MainTheorem}\label{thm:main_theorem_1}
    Suppose that $\mathcal{D}$ is $(\alpha,\beta)$-PDF-bounded, and the following three conditions~\ref{item:assumption-b},~\ref{item:assumption-c} and~\ref{item:assumption-a} hold.
    \begin{enumerate}[label=$(\mathrm{\alph*})$]
        \item\label{item:assumption-b} The number of items $m$ is sufficiently large such that $m \geq k\cdot \max_{p\in [k]} (n_p+2)$,
        \item\label{item:assumption-c} the class sizes are almost proportional to the total population; more precisely, there exists a constant $C>0$ such that $n\leq C\cdot (\min_{p\in [k]} n_p)^{5/4}$, and
        \item\label{item:assumption-a} the number of classes $k$ satisfies that $k> \max\left(\frac{1}{2\alpha},\frac{\beta}{\alpha^2}\right)$ and $k^2=O(n^{1/6})$.
    \end{enumerate}
    Then, as $n\to \infty$, the probability that the round-robin algorithm produces a class envy-free and non-wasteful matching converges to $1$.
\end{restatable}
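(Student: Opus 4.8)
The overall strategy is to analyze the round-robin algorithm (Algorithm~\ref{alg:main_alg_1}) from the perspective of a fixed ordered pair of classes $(p,q)$ and show that the marginal utility class $p$ accrues in each of its $n_p$ rounds is, with high probability, large enough that $u_p(M) \geq v_p(M(N_q))$. By a union bound over the at most $k^2$ pairs and over any lower-order failure events, it then suffices to prove each such event fails with probability $o(1/k^2)$; condition~\ref{item:assumption-a} ($k^2 = O(n^{1/6})$) is exactly what makes this union bound affordable. I would first replace the bounded PDF distribution $\mathcal{D}$ by an auxiliary exponential-type distribution (the $\mathrm{ReExp}$ distribution introduced in the model section, mirrored so large values are favored) via a coupling/monotone-transform argument: because $\mathcal{D}$ is $(\alpha,\beta)$-PDF-bounded, its order statistics and matching weights are sandwiched between those of rescaled exponentials, so the sharp formulas from random assignment theory (\cite{Wastlund2009}, \cite{FriezeJohansson2017}) apply up to constant factors controlled by $\alpha,\beta$ — this is why the hypothesis $k > \max(1/(2\alpha), \beta/\alpha^2)$ appears.

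Next I would establish the lower bound on $u_p(M)$, the total utility class $p$ actually obtains. Using the nesting lemma (Lemma~\ref{lemma:nesting_lemma}) and uniqueness of maximum-weight matchings (Lemma~\ref{lemma:uniqueness_of_maximum_weight_matching}), the marginal gain of class $p$ in a given round is the increment $\mathbb{E}[C_{n_p, m', r}] - \mathbb{E}[C_{n_p, m', r-1}]$ in the expected optimal matching weight as the matching size grows, where $m'$ is the number of items still available. The key trick, following W\"astlund, is to introduce a dummy vertex whose edge-weight behavior isolates this expected marginal gain; combined with the concrete formula $\mathbb{E}[C_{n,m,r}] = \sum_{i=1}^r \tfrac1n \sum_{j=0}^{i-1}\tfrac1{m-j}$ this gives a clean expression. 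Summing over the $n_p$ rounds and using condition~\ref{item:assumption-b} ($m \geq k \max_p(n_p+2)$, which guarantees enough items remain in every round so the relevant harmonic sums stay bounded away from degeneracy), I get a lower bound on $\mathbb{E}[u_p(M)]$ of order $n_p / k$ times an explicit constant.

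Then I would upper-bound $v_p(M(N_q))$. Here the subtle point flagged in the introduction is that $M(N_q)$ is not a uniformly random bundle — it is correlated with class $p$'s own picks. I would handle this by first showing $M(N_q)$ is, from class $p$'s viewpoint, stochastically dominated by (or close to) a uniformly random bundle $R$ of size $n_q$ disjoint from what $p$ could have grabbed, using that round-robin lets $q$ pick only after the earlier classes in each round; then bound $\mathbb{E}[v_p(R)]$, again via the random-assignment formula with $r = \min(n_p, n_q)$ edges. Condition~\ref{item:assumption-c} ($n \leq C(\min_p n_p)^{5/4}$, so all classes are polynomially comparable in size) ensures $n_p$ and $n_q$ are within controlled ratios, so that the "catch-up" inequality $\mathbb{E}[u_p(M)] > \mathbb{E}[v_p(M(N_q))]$ holds with a constant multiplicative slack once $k$ exceeds the stated threshold.

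**The main obstacle.** The hard part will be the concentration step: converting the inequality between expectations into a high-probability statement for the actual realized quantities. Both $u_p(M)$ and $v_p(M(N_q))$ are functions of $\Theta(nm)$ independent edge weights, but they are not simple sums — $u_p(M)$ is built round-by-round with adaptive item removal, and $v_p(\cdot)$ is a max over matchings — so I cannot directly apply a Chernoff bound. I would instead use a bounded-differences / Azuma–McDiarmid argument (changing one edge weight changes an optimal matching value by at most $O(1)$, since weights lie in $[0,1]$), or a Talagrand-type concentration inequality as in \cite{Talagrand1995}, to get that each of $u_p(M)$ and $v_p(M(N_q))$ deviates from its mean by more than the $\Theta(n_p/k)$ slack with probability at most $\exp(-\Omega(n_p/k^2))$ or similar. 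The delicate bookkeeping is making the deviation bound beat $1/k^2$ after the union bound, which is precisely where the polynomial-growth constraints $n \leq C(\min_p n_p)^{5/4}$ and $k^2 = O(n^{1/6})$ — hence $n_p \geq \Omega(n^{4/5})$ and $n_p/k^2 \geq \Omega(n^{4/5 - 1/6}) = \Omega(n^{19/30}) \to \infty$ — are doing their work. Assembling these pieces and then invoking Proposition~\ref{proposition:1_2_ef1_non_waste} for the non-wastefulness half of the conclusion completes the proof.
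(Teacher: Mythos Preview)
Your high-level architecture is right: bound $\mathbb{E}[u_p(M)]$ from below and $\mathbb{E}[v_p(M(N_q))]$ from above via the W\"astlund dummy-vertex trick, then concentrate and union-bound over pairs. But two linked mis-estimates create a real gap in your concentration step.

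First, the scale of the slack. Both $u_p(M)$ and $v_p(M(N_q))$ are extremely close to their trivial upper bounds $n_p$ and $\min(n_p,n_q)$, because with $m \geq k\max_p(n_p+2)$ items available each matched edge has weight $1 - O(\mathrm{polylog}\, n / n)$. The expected gap $\mathbb{E}[X_p] - \mathbb{E}[X_{pq}]$ the paper actually extracts (Lemma~\ref{lemma:expected_diff}) is only of order a \emph{constant} when $n_p \geq n_q$, and of order $n_p/n_q$ when $n_p < n_q$ --- not $\Theta(n_p/k)$ as you assume. So you need concentration to within $o(1)$, not to within $\Theta(n_p/k)$.

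Second, with the correct $\Theta(1)$ slack, neither bounded-differences nor Talagrand with an $O(1)$ Lipschitz constant can work. McDiarmid over $\Theta(nm)$ coordinates with Lipschitz $1$ gives deviation bounds $\exp(-t^2/(nm))$, useless for $t=\Theta(1)$; Talagrand with certificate size $n_p$ gives $\exp(-\Theta(1)/n_p)$, equally useless. The paper's crucial extra ingredient, which you are missing, is Lemma~\ref{lem:no_heavy_edge}: with probability $1-O(n_p^{-3})$ every edge used in the relevant maximum-weight matchings has weight at least $1-\delta$ with $\delta = O((\log n)^c/n)$. This is proved via expander/short-alternating-cycle arguments \`a la Frieze--Johansson. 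Once you know this, you pass to the truncated weights $\overline{W}(i,j)=\max(W(i,j),1-\delta)$ (so changing one edge now perturbs the outcome by only $O(\delta)$, not $O(1)$), count that only $O(n\log n)$ edges lie on the algorithm's augmenting paths, and apply Efron--Stein to get $\mathrm{Var}[X_p], \mathrm{Var}[X_{pq}] = O(\delta^2 n\log n) = o(1)$. Chebyshev then suffices against the $\Theta(1)$ gap, and the union bound over $k^2=O(n^{1/6})$ pairs goes through. Without the heavy-edges lemma and the resulting $O(\delta)$ Lipschitz constant, your concentration argument does not close.

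A smaller point: the paper does not couple $\mathcal{D}$ to an exponential; it works directly with the $(\alpha,\beta)$-PDF-bounded density inside the W\"astlund integrals, and $\mathrm{ReExp}(\lambda)$ is used only for the auxiliary vertex $\hat{j}$.
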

Note that the first condition~\ref{item:assumption-b} is slightly stronger than the condition where $m\ge n$.
The third condition~\ref{item:assumption-a} is very mild for some distributions, e.g. for uniform distributions, this condition is equivalent to $k > 1$ since $\alpha=\beta=1$. Moreover, we consider the situation where the number of groups are not large comparing the number of agents in each class.
A perhaps more intuitive but stronger condition of~\ref{item:assumption-c} is the case where the total number of agents is within a constant factor of the minimum size of a class, i.e., $n =  c \cdot \min_{p\in [k]} n_p$, where $c \geq 1$ is a constant. This is relevant in scenarios where the class sizes under consideration are proportional to the total population, such as gender groups, ethnic groups, and groups of people with the same political interests.

\paragraph{Algorithmic description.} 
Let us now explain the round-robin algorithm (Algorithm~\ref{alg:main_alg_1}). 

\begin{algorithm}[H]
    \caption{The round-robin algorithm for classes with assignment valuations}
    \label{alg:main_alg_1}
    \begin{algorithmic}[1]
        \REQUIRE $N=N_1\cup N_2\cup\cdots\cup N_k$, $I$, $\{u_i(j)\}_{i\in N, j\in I}$
        \ENSURE Matching $M$
        \STATE $M \leftarrow \emptyset$, $I_0 \leftarrow I$, $r \leftarrow 1$, and $M^0_{p} \leftarrow \emptyset$ $\forall p \in [k]$
        \WHILE{there is a remaining item which some class desires \label{line:while_start}}
        \FOR{$p=1,2,\ldots,k$}
        \IF{there is a remaining item which class $p$ desires}
        \STATE Let $j^{r}_p \in \textrm{argmax}_{j \in I_0} v_p(M(N_p) \cup \{j\})- v_p(M(N_p))$\label{line:higestmarginal}
        \STATE $M^r_p \leftarrow$ the maximum-weight matching between $N_p$ and $M(N_p)\cup\{j^{r}_p\}$\label{line:updatematching}
        \STATE $M \leftarrow (M\setminus M_p^{r-1}) \cup M^r_p$\label{line:update_whole_matching}
        \STATE $I_0 \leftarrow I_0 \setminus \{j^{r}_p\}$
        \ENDIF
        \ENDFOR
        \STATE $r \leftarrow r+1$
        \ENDWHILE \label{line:while_end}
        \RETURN $M$.
    \end{algorithmic}
\end{algorithm}

Each iteration of the {\bf while} loop (Lines~\ref{line:while_start}--\ref{line:while_end}) in Algorithm~\ref{alg:main_alg_1} is referred to as a \textit{round}. In each round $r$, each class selects its most preferred item, which has the highest marginal utility to the current bundle (Line~\ref{line:higestmarginal}) 
and updates its matching to create a new maximum-weight matching with an additional edge (Line~\ref{line:updatematching}). 
If a class encounters several items with the highest marginal utility, it selects one item arbitrarily among them.
In Line~\ref{line:update_whole_matching}, the matching between $N$ and $I$ is updated to reflect the new item acquired by the class $p$. Observe that in Algorithm~\ref{alg:main_alg_1}, we allow each class to optimally reassign items within its members in each round, thereby selecting a maximum-weight matching between its members and the items allocated to them. 
Consequently, the total utility $u_p(M)$ that class $p$ receives under Algorithm~\ref{alg:main_alg_1} is $v_p(M(N_p))$.

In contrast to the round-robin algorithm for additive valuations, 
Algorithm~\ref{alg:main_alg_1} may not produce CEF1 matching; in fact, the factor of $1/2$ is the best that can be achieved by the round-robin algorithm; see Example 1 of \cite{HosseiniHuangIgarashiShah2022}.

\begin{example}\label{example:round-robin}
Consider an instance with two classes, each consisting of two agents, and four items in Table~\ref{table_house_allocation}.
The unique maximum-weight matching for this instance is given by $\{\{i_1,j_1\}$, $\{i_2,j_4\}$, $\{i_3,j_2\}$, $\{i_4,j_3\}\}$. However, this matching does not satisfy class envy-freeness as the second class receives a total utility of $3$ despite having a maximum-weight matching of value $5$ with the bundle allocated to the first class.
In contrast, Algorithm~\ref{alg:main_alg_1} produces matching $\{\{i_1,j_1\},\{i_2,j_2\},$ $\{i_3,j_4\},\{i_4,j_3\}\}$, which can be easily checked to be both class envy-free and non-wasteful.
\begin{table}[h]
\begin{center} 
    \caption{Maximum-weight matching (left) and the matching produced by round-robin algorithm (right).}
    \label{table_house_allocation}
    \begin{tabular}{cc}
    \begin{tabular}{|c|c||c|c|c|c|}
      \hline
      & & $j_1$ & $j_2$ & $j_3$ & $j_4$  \\
      \hline
      \multirow{2}{*}{$N_1$}  &$i_1$ & \circled{5} & $0$ & $0$ & $0$  \\ 
      \cline{2-6}
       & $i_2$ & $0$ & $1$ & $0$ & \circled{5}  \\
      \hline
      \multirow{2}{*}{$N_2$}  & $i_3$ & $2$ & \circled{1} & $0$ & $3$ \\
      \cline{2-6}
       & $i_4$ & $1$ & $0$ & \circled{2} & $0$  \\
       \hline
    \end{tabular}
    \begin{tabular}{|c|c||c|c|c|c|}
      \hline
      & & $j_1$ & $j_2$ & $j_3$ & $j_4$  \\
      \hline
      \multirow{2}{*}{$N_1$}  &$i_1$ & \circled{5} & $0$ & $0$ & $0$  \\ 
      \cline{2-6}
       & $i_2$ & $0$ & \circled{1} & $0$ & $5$  \\
      \hline
      \multirow{2}{*}{$N_2$}  & $i_3$ & $2$ & $1$ & $0$ & \circled{3} \\
      \cline{2-6}
       & $i_4$ & $1$ & $0$ & \circled{2} & $0$  \\
       \hline
    \end{tabular}
    \end{tabular}
    \end{center}
\end{table}
\end{example}


\subsection{Outline of the Proof}
The remainder of this paper is devoted to proving Theorem~\ref{thm:main_theorem_1}.
Throughout this section, let $M$ denote the matching produced by Algorithm~\ref{alg:main_alg_1} and $M^{r}$ denote the matching at the end of round $r$ in the algorithm. 
Furthermore, $M^{r}(N_p)$ is defined as the set of items matched to an agent in class $p$ under matching $M^r$.
Let $I_p^{r}$ denote the set of remaining items just before class $p$ selects an item in round $r$, i.e., $I_p^{r} = I\setminus \bigl(M^{r}(N_1)\cup \cdots \cup M^{r}(N_{p-1})\cup M^{r-1}(N_p)\cup \cdots \cup M^{r-1}(N_{k}) \bigr)$. 
We fix two classes $p$ and $q$ and analyze the behavior of the following random variables:
\[
    X_p = v_p(M(N_p))\quad \text{and}\quad X_{pq} = v_p(M(N_q)).
\]

To prove Theorem~\ref{thm:main_theorem_1}, we first examine expected marginal weights of maximum-weight matchings (Lemma~\ref{lemma:difference_between_two_expected_values}) in Section~\ref{sec:roundrobin:prelim}.
This lemma claims that the difference in expected values can be expressed in terms of the probability that the special vertex belongs to a maximum matching in a modified graph. 
In the proof of Lemma~\ref{lemma:difference_between_two_expected_values}, we adopt a technique pioneered by \cite{Wastlund2005,Wastlund2009}. 
These works introduced an additional vertex and connected it to every vertex on the other side by an edge with weights following an exponential distribution to analyze the expected marginal weight of minimum weight matchings; we adapt this technique to the context of maximum-weight matchings. 

By utilizing Lemma~\ref{lemma:difference_between_two_expected_values}, we establish bounds on the expected values of $X_p$ and $X_{pq}$ (Lemmas~\ref{claim:expected_diff_p} and~\ref{claim:expected_diff_q}).
In the proof, we analyze the probabilities that special vertices belong to maximum matchings in graphs between $N_p$ and $M(N_p)$ or $M(N_q)$.
Due to the uniqueness of the maximum-weight matching (Lemma~\ref{lemma:uniqueness_of_maximum_weight_matching}), we consider the unique augmenting path updating it in each round.

In Section~\ref{sec:final_together_proof},
by Lemmas~\ref{claim:expected_diff_p} and~\ref{claim:expected_diff_q}, and by demonstrating that the edge weights in maximum-weight matchings between $N_p$ and $M(N_q)$ are sufficiently ``heavy'' ((b) in Lemma~\ref{lem:no_heavy_edge}), 
we prove that the difference in the expected values of $X_p$ and $X_{pq}$ is lower-bounded (Lemma~\ref{lemma:expected_diff}). 
Finally, we achieve stochastic concentrations on the expectations of $X_p$ and $X_{pq}$, establishing Theorem~\ref{thm:main_theorem_1}.


\subsection{Expected Marginal Weight of a Maximum-Weight Matching}\label{sec:roundrobin:prelim}
In this section, we show Lemma~\ref{lemma:difference_between_two_expected_values} by examining the difference in expected values arising from a maximum-weight matching of consecutive sizes.
Consider a complete bipartite graph $H$ with the left set $A$ of vertices and the right set $B$ of vertices. The weights of all edges in $H$ are derived from non-atomic distributions over the interval $[0,1]$, without the requirement for these weights to be drawn independently. We make the following assumptions: 
\begin{assumption}\label{assumption:no_distinct}
    No two distinct matchings have the same total weight almost surely in $H$.
\end{assumption}
\begin{assumption}\label{assumption:belongs_maximum_weight_matching}
For every size $r =1,2,\ldots,\min\{|A|,|B|\}$ and for any pair of vertices $i,i' \in A$, the probability that $i$ belongs to the maximum-weight matching of size $r$ in $H$ is the same as that for $i'$.
\end{assumption}

Now, we modify $H$ by introducing a new vertex $\hat{j}$ to $B$, and create edges $\{i,\hat{j}\}$ for all $i\in A$ (see Figure~\ref{fig:bipartite_graph}). 
The weight of each edge 
$\{i,\hat{j}\}$ for $i\in A$ is independently drawn from $\mathrm{ReExp}(\lambda)$ on $(-\infty, 1]$ where $0<\lambda\leq 1$. Let $\hat{H}$ denote the modified bipartite graph. 
%
Later, we will consider the limit probability of $\hat{j}$ being included in a maximum-weight matching when $\lambda$ converges to $0$.
Let $\hat{B} = B \cup \{\hat{j}\}$. 
Note that edges with negative weights between $A$ and $\{\hat{j}\}$ will not be part of any maximum-weight matching of size up to $\min\{|A|,|B|\}$ in $\hat{H}$ since the edges in $H$ are non-negative. 
Let $\hat{B}^{r}$ represent the set of vertices in $\hat{B}$ under the maximum-weight matching of size $r$ between $A$ and $\hat{B}$. Note that no two distinct matchings have the same total weight almost surely in $\hat{H}$ by non-atomicity of $\mathrm{ReExp}(\lambda)$, by the assumption that the weights of edges incident to ${\hat j}$ are drawn independently, and by the assumption that no two distinct matchings in $H$ have the same total weight.

Let $X^r$ denote the maximum-weight of a matching of size $r$ in $H$.
Lemma~\ref{lemma:difference_between_two_expected_values} states that the difference in expected values of $X^r$ and $X^{r-1}$ can be expressed in terms of the probability of $\hat{j}$ being included in $\hat{B}^r$.

\begin{figure}[tb]
    \centering
\begin{tikzpicture}[line width=1pt, scale=0.7]
\draw (0,0) rectangle (1,3) node[midway] {$A$};
\draw (3,0) rectangle (4,3) node[midway] {$B$};
\draw[decoration={brace,amplitude=7pt,raise=2pt},decorate] (4.2,3) -- node[right=8pt] {$\hat{B}$} (4.2,-0.75);

\node at (3.5,-0.5) {$\hat{j}$};
\draw[densely dotted] (0.7,2.5) -- (3.3,-0.4);
\draw[densely dotted] (0.7,1.5) -- (3.3,-0.4);
\draw[densely dotted] (0.7,0.5) -- (3.3,-0.4);

\draw[rounded corners=10pt, ultra thin] (-0.5,-1.0) rectangle (5.7,3.5);
\node at (-1.3,3) {$\hat{H}$};
\end{tikzpicture}
    \caption{The modified graph $\hat{H}$ with the additional vertex $\hat{j}$.}
    \label{fig:bipartite_graph}
\end{figure}
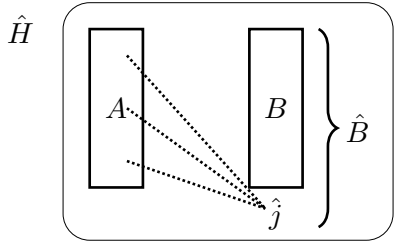

\begin{lemma}\label{lemma:difference_between_two_expected_values}
    For every $r=1,2,\ldots,\min(|A|,|B|)$, we have
     \begin{equation}\label{eq:expected difference_lemma_4}
        \mathbb{E}[X^r]-\mathbb{E}[X^{r-1}] = 1- \frac{1}{r}\lim_{\lambda \to 0}\frac{1}{\lambda}\mathbf{Pr}\left[\hat{j} \in \hat{B}^{r}\right].
    \end{equation}
\end{lemma}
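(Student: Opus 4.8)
The plan is to adapt the ghost-vertex device of W\"astlund~\cite{Wastlund2005,Wastlund2009} to maximum-weight matchings: condition on the weights of $H$, give a purely combinatorial description of the event $\{\hat j\in\hat B^r\}$, evaluate its probability using the explicit form of $\mathrm{ReExp}(\lambda)$, and then pass to the limit $\lambda\to0$, finally using the nesting lemma and the symmetry assumption to recognize the answer as $r$ times the marginal $\mathbb{E}[X^r]-\mathbb{E}[X^{r-1}]$.

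Step 1 (combinatorial characterization). Fix the realization of the edge weights of $H$, and for $i\in A$ let $Z_i^{r-1}$ denote the maximum weight of a matching of size $r-1$ in $H[A\setminus\{i\},B]$, and put $\tau_i := X^r-Z_i^{r-1}$. A maximum-weight size-$r$ matching of $\hat H$ either omits $\hat j$, in which case it coincides with the maximum-weight size-$r$ matching of $H$ and has weight $X^r$, or it matches $\hat j$ to a unique $i\in A$ and completes optimally inside $H$ on $A\setminus\{i\}$, giving weight $w_{\{i,\hat j\}}+Z_i^{r-1}$; by Lemma~\ref{lemma:uniqueness_of_maximum_weight_matching} ties do not occur almost surely. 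Hence $\hat j\in\hat B^r$ if and only if $w_{\{i,\hat j\}}>\tau_i$ for some $i\in A$. I would also record that $0\le\tau_i\le1$: a largest size-$(r-1)$ matching of $H$ can be extended by a nonnegative edge, so $Z_i^{r-1}\le X^{r-1}\le X^r$; and removing from the (a.s.\ unique) largest size-$r$ matching of $H$ the edge incident to $i$ (or any edge if $i$ is unmatched there) produces a size-$(r-1)$ matching avoiding $i$ whose weight is within one edge weight, hence within $1$, of $X^r$.

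Step 2 (probability and the limit). The weights $\{w_{\{i,\hat j\}}\}_{i\in A}$ are mutually independent and independent of $H$, with $\mathbf{Pr}[w_{\{i,\hat j\}}\le x]=\mathrm{e}^{-\lambda(1-x)}$ for $x\le1$. Conditioning on the weights of $H$ and using $\tau_i\in[0,1]$,
\[
\mathbf{Pr}\bigl[\hat j\notin\hat B^r \,\big|\, H\bigr]=\prod_{i\in A}\mathrm{e}^{-\lambda(1-\tau_i)}=\exp\Bigl(-\lambda\sum_{i\in A}(1-\tau_i)\Bigr),
\]
so $\tfrac1\lambda\mathbf{Pr}[\hat j\in\hat B^r\mid H]=\tfrac1\lambda\bigl(1-\mathrm{e}^{-\lambda\sum_{i}(1-\tau_i)}\bigr)\to\sum_{i\in A}(1-\tau_i)$ pointwise as $\lambda\to0$. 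Since these quantities lie in $[0,|A|]$, bounded convergence yields $\lim_{\lambda\to0}\tfrac1\lambda\mathbf{Pr}[\hat j\in\hat B^r]=\mathbb{E}\bigl[\sum_{i\in A}(1-\tau_i)\bigr]$.

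Step 3 (identification, the crux). It remains to show $\mathbb{E}\bigl[\sum_{i\in A}(1-\tau_i)\bigr]=r\bigl(1-(\mathbb{E}[X^r]-\mathbb{E}[X^{r-1}])\bigr)$. Here I would use the nesting lemma (Lemma~\ref{lemma:nesting_lemma}): the a.s.\ unique largest size-$(r-1)$ matching of $H$ is vertex-nested in the largest size-$r$ matching, so for every $i\in A$ unsaturated by the former we have $Z_i^{r-1}=X^{r-1}$ and thus $\tau_i=X^r-X^{r-1}$; for the remaining vertices one needs the ``second-largest matching avoiding a fixed vertex'' together with an alternating-path analysis of how $M^{r-1}$ and $M^{r}$ differ, and then the symmetry Assumption~\ref{assumption:belongs_maximum_weight_matching} to average these per-vertex deficits. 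I expect this bookkeeping — reconciling the local deficits $\tau_i$ with the global increment $X^r-X^{r-1}$ under the nesting and symmetry hypotheses, with the normalizing factor $r$ emerging from the symmetry — to be the main obstacle; the probabilistic part in Step 2 is comparatively routine once the characterization of $\{\hat j\in\hat B^r\}$ in Step 1 is in place.
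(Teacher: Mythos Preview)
Your Steps 1 and 2 are correct and in fact cleaner than the paper's argument: you compute the conditional probability exactly via the product form $\prod_{i\in A}\mathrm{e}^{-\lambda(1-\tau_i)}$, whereas the paper uses a union bound with an $O(\lambda^2)$ remainder. The problem is Step 3. The identity $\mathbb{E}\bigl[\sum_{i\in A}(1-\tau_i)\bigr]=r\bigl(1-(\mathbb{E}[X^r]-\mathbb{E}[X^{r-1}])\bigr)$ you set out to prove is \emph{false} whenever $|A|>r$. Take $r=1$, $|A|=2$: then $\tau_i=X^1-Z_i^0=X^1$ for both $i$, so $\sum_{i\in A}(1-\tau_i)=2(1-X^1)$ almost surely, while the right-hand side equals $1-\mathbb{E}[X^1]$. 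More generally, every $i\notin A^{r-1}$ contributes exactly $1-(X^r-X^{r-1})$ to your sum and there are $|A|-r+1$ such vertices, so the sum scales with $|A|$, not with $r$; no symmetry-averaging over the remaining $r-1$ vertices of $A^{r-1}$ can close this gap.

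The paper's fix is to work with $\hat H[A^r,\hat B]$ rather than $\hat H[A,\hat B]$, where $A^r$ is the (random) set of $r$ left vertices saturated by the size-$r$ optimum of $H$. With only $r$ candidate partners for $\hat j$, the analogous sum has exactly $r$ terms $1-X^r+X_i^{r-1}$ (with $X_i^{r-1}$ the best size-$(r-1)$ matching on $A^r\setminus\{i\}$), and the identification reduces to showing that $\frac{1}{r}\sum_{i\in A^r}\mathbb{E}[X_i^{r-1}]=\mathbb{E}[X^{r-1}]$, which the paper derives from Assumption~\ref{assumption:belongs_maximum_weight_matching}. So the missing idea is not more bookkeeping but the restriction of the left side from $A$ to $A^r$. (This also means the probability in the lemma should really be read as the probability that $\hat j$ is matched in $\hat H[A^r,\hat B]$; your computation is correct for $\hat H[A,\hat B]$ as the paper's text literally defines $\hat B^r$, but that quantity does not satisfy the claimed formula.)
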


\begin{proof}
    Let $W(i,\hat{j})$ denote the random variable representing the weight of the edge $\{i,\hat{j}\}$ for each $i \in A$.
Let $A^{r}$ denote the set of vertices in $A$ that are matched under the maximum-weight matching of size $r$ in $H$. By 
The maximum-weight matching of size $r$ between $A^r$ and $\hat{B}$ is denoted by ${\hat M}$, and for each $i \in A^r$, let $X^{r-1}_i$ be the maximum-weight of matchings of size $r-1$ between $A^r \setminus \{i\}$ and $B$.

Select $i$ from $A^{r}$ uniformly at random.
By definition, the maximum-weight of a matching of size $r$ between $A^r$ and $\hat{B}$ under the constraint that the edge $\{i,\hat{j}\}$ is included is $X^{r-1}_i+W(i,\hat{j})$. 
We claim that $\mathbf{Pr}\big[ X^{r-1}_i+W(i,\hat{j}) > X^r \big] = \mathbf{Pr}\big[\{i, \hat{j}\} \in {\hat M}\big] + O(\lambda^2).$
If edge $\{i,\hat{j}\}$ is included in ${\hat M}$, 
we have $X^{r-1}_i+W(i,\hat{j}) > X^r$ by Assumption~\ref{assumption:no_distinct}. Thus, $\mathbf{Pr}[\{i, \hat{j}\} \in {\hat M}] \leq \mathbf{Pr}[ X^{r-1}_i+W(i,\hat{j}) > X^r ]$. 

Next, suppose that $W(i,\hat{j}) > X^r - X^{r-1}_i$. Then, ${\hat M}$ must include an edge incident to $\hat{j}$. 
If no $i' \in A^r$ other than $i$ satisfies the inequality $W(i',\hat{j}) > X^r - X^{r-1}_{i'}$,  
then we get $\{i, \hat{j}\} \in {\hat M}$. 
{Let $\mathcal{F}_i$ denote the event that there exists no vertex $i' \neq i$ such that it satisfies $X^{r-1}_{i'} + W(i',\hat{j}) > X^r$. 
}
By the above argument, we have
\begin{align*}
    \mathbf{Pr}\big[ X^{r-1}_{i}+W(i,\hat{j}) > X^r \big] 
    &= 
        \mathbf{Pr}\big[ (X^{r-1}_{i}+W(i,\hat{j}) > X^r) \land \mathcal{F}_i\big]
    +
        \mathbf{Pr}\big[ (X^{r-1}_{i}+W(i,\hat{j}) > X^r) \land \overline{\mathcal{F}_i}\big] \\
    &\le 
        \mathbf{Pr}\big[\{i, \hat{j}\} \in {\hat M}\big]
    +
        \mathbf{Pr}\big[ (X^{r-1}_{i}+W(i,\hat{j}) > X^r) \land \overline{\mathcal{F}_i}\big] .
\end{align*}
The right term of the above inequality can be bounded as follows:
\begin{align*}
    \mathbf{Pr}\big[ X^{r-1}_{i}+W(i,\hat{j}) > X^r ~\land~ \overline{\mathcal{F}_i}\big] 
    &\leq 
        \sum_{i' \in A^{r}\setminus \{i\}}\mathbf{Pr}\left[X^{r-1}_{i} + W(i,\hat{j}) > X^r ~\land~ X^{r-1}_{i'} + W(i',\hat{j}) > X^r \right] \\
    &= 
        \sum_{i' \in A^{r}\setminus \{i\}} \mathbb{E}\left[(1-\mathrm{e}^{-\lambda(1-X^r+X^{r-1}_{i})})(1-\mathrm{e}^{-\lambda(1-X^r+X^{r-1}_{i'})})\right] \\
    &= 
        O(\lambda^2),
\end{align*}
where we use the fact that $1-\mathrm{e}^{-x} \le x$ for any $x$ for the last relation. 
Thus, summing over all $i \in A^r$, we get
\begin{align*}
    \mathbf{Pr}\big[\hat{j} \in \hat{B}^{r}\big] 
    &= 
        \sum_{i\in A^{r}} \mathbf{Pr}\big[\{i, \hat{j}\} \in {\hat M}\big] \\
    &=
        \sum_{i\in A^{r}} \left(\mathbf{Pr}\big[ X^{r-1}_i + W(i,\hat{j}) > X^r \big] - O(\lambda^2) \right)\\
    &=
        \sum_{i\in A^{r}} \left(\mathbb{E}\left[1-\mathrm{e}^{-\lambda(1-X^r+X^{r-1}_i)}\right]- O(\lambda^2) \right). 
\end{align*}
By Assumption~\ref{assumption:belongs_maximum_weight_matching}, for all $i_1,i_2\in A$, both $\mathbf{Pr}[i_1\in A^r] = \mathbf{Pr}[i_2\in A^r]$ and $\mathbf{Pr}[i_1\notin A^{r-1}] = \mathbf{Pr}[i_2\notin A^{r-1}]$ hold. 
This implies that $\mathbb{E}[X^{r-1}] = \mathbb{E}[X^{r-1}_i]$ since $i$ is selected from $A^{r}$ uniformly at random.
Thus, $\mathbb{E}[X^{r-1}] =\frac{1}{r} \sum_{i'\in A^{r}} \mathbb{E}[X^{r-1}_{i'}]$. Hence, we have
$$
    \lim_{\lambda \to 0}\frac{1}{\lambda}\mathbf{Pr}\left[\hat{j} \in \hat{B}^{r}\right] 
    =
        \sum_{i'\in A^{r}} \mathbb{E}\left[1-X^{r}+X^{r-1}_{i'}\right] 
    =
        r-r(\mathbb{E}[X^{r}] - \mathbb{E}[X^{r-1}]),
$$
which concludes the proof.
\end{proof}



\subsection{Bounds on the Expected Values of $X_p$ and $X_{pq}$}\label{sec:bound}

Next, we establish a lower bound on the expected value of $X_p$, and an upper bound on the expected value of $X_{pq}$.
By utilizing the linearity of expectation, we decompose the expected value into the expected difference accumulated in each round.
%
We then leverage Lemma~\ref{lemma:difference_between_two_expected_values} to analyze the difference between the expected values achieved by each class in two consecutive rounds.

A key observation is that the augmenting path updating the maximum-weight matching in each round can be uniquely determined due to Lemma~\ref{lemma:uniqueness_of_maximum_weight_matching}. 
This uniqueness allows us to identify an edge incident to a newly added vertex in the path and condition on the weight of such an edge. 
Consequently, for the expected value of $X_{p}$, we obtain an upper bound on the limiting probability of the special vertex being included in a maximum-weight matching. Moreover, following a similar proof strategy, we derive an upper bound on the expected value of $X_{pq}$. 

It is important to note that calculating the exact expectation of $X_p$ and $X_{pq}$ proves challenging. While exact computations of the expected minimum total weight of matchings have been explored in random assignment theory, these studies typically assume that the edge weight is drawn from the exponential distribution, whereas in our setting, the edge weight is drawn from the $(\alpha, \beta)$-PDF-bounded distribution that lacks the memorylessness property. Nevertheless, we are able to establish lower and upper bounds on $X_p$ and $X_{pq}$ by carefully applying Lemma~\ref{lemma:difference_between_two_expected_values}.


\subsubsection{Lower bound on $X_p$}
We now present the lower bound on the expected value of $X_p$.
\begin{restatable}{lemma}{ClaimExpexctedXp}\label{claim:expected_diff_p}
    Suppose that $\mathcal{D}$ is $(\alpha,\beta)$-PDF-bounded and condition~\ref{item:assumption-b}. Then, we have
    \begin{align}\label{inequality:expected_diff_p_below}
        \mathbb{E}[X_p] 
        \geq n_p -  \frac{1}{\alpha} \sum_{r=1}^{n_p}\frac{1}{r} \sum_{r'=1}^r  
        \frac{1}{m - r'\cdot k}.
    \end{align}
\end{restatable}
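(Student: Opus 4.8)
The plan is to decompose $\mathbb{E}[X_p]$ into a telescoping sum over the $n_p$ rounds in which class $p$ picks an item, and to bound each increment from below using Lemma~\ref{lemma:difference_between_two_expected_values}. Write $X_p = v_p(M(N_p)) = \sum_{r=1}^{n_p}\bigl(v_p(M^r(N_p)) - v_p(M^{r-1}(N_p))\bigr)$, so that by linearity $\mathbb{E}[X_p] = \sum_{r=1}^{n_p} \mathbb{E}[\Delta_p^r]$, where $\Delta_p^r$ is the marginal gain of class $p$ in round $r$. The key point is that, conditioned on the set $I_p^r$ of items available to class $p$ just before its pick in round $r$, the bundle $M^r(N_p)$ is exactly the maximum-weight matching of size $r$ between $N_p$ and $M^{r-1}(N_p)\cup\{j_p^r\}$, and since $j_p^r$ is chosen to maximize the marginal value, $\Delta_p^r$ equals the increment $X^r - X^{r-1}$ of the maximum-weight-matching value in the graph $H$ between $N_p$ and the available item set. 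I would verify that $H$ (with its non-atomic i.i.d.\ edge weights, restricted to the relevant vertex sets) satisfies Assumptions~\ref{assumption:no_distinct} and~\ref{assumption:belongs_maximum_weight_matching}: uniqueness comes from Lemma~\ref{lemma:uniqueness_of_maximum_weight_matching}, and the symmetry of Assumption~\ref{assumption:belongs_maximum_weight_matching} holds because all agents in $N_p$ are exchangeable (their edge weights are i.i.d.). This lets me invoke Lemma~\ref{lemma:difference_between_two_expected_values} to write
\[
\mathbb{E}[\Delta_p^r] \;=\; 1 - \frac{1}{r}\lim_{\lambda\to 0}\frac{1}{\lambda}\mathbf{Pr}\bigl[\hat{j}\in\hat{B}^r\bigr],
\]
where $\hat{j}$ is the special vertex attached to the available-item side with $\mathrm{ReExp}(\lambda)$ weights.

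The heart of the argument is then an upper bound on $\lim_{\lambda\to 0}\frac1\lambda\mathbf{Pr}[\hat j\in\hat B^r]$. By the uniqueness of maximum-weight matchings, whenever $\hat j$ enters the size-$r$ matching there is a unique alternating/augmenting path that absorbs it; $\hat j$ is matched to some agent $i\in N_p$ along an edge $\{i,\hat j\}$ of weight $W(i,\hat j)$, and $\hat j\in\hat B^r$ forces $W(i,\hat j) > X^r - X_i^{r-1}$ for that particular $i$ (the threshold being the "price" of $i$ against the current matching). Summing over the $r$ candidates $i\in A^r$ and using $\mathbf{Pr}[W(i,\hat j) > t] = F_{\mathrm{ReExp}(\lambda)}$-tail $= 1 - \mathrm e^{-\lambda(1-t)} \le \lambda(1-t)$ gives, in the $\lambda\to 0$ limit, $\sum_{i\in A^r}\mathbb{E}[1 - X^r + X_i^{r-1}]$, which is precisely what appears in the proof of Lemma~\ref{lemma:difference_between_two_expected_values}. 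To turn this into the claimed $\frac1\alpha\sum_{r'=1}^r \frac{1}{m-r'k}$ bound I need a lower bound on the threshold $X^r - X_i^{r-1}$, equivalently an upper bound on the probability that $\hat j$ is absorbed — and this is where the $(\alpha,\beta)$-PDF-boundedness and the item-count condition~\ref{item:assumption-b} enter. The number of items available to class $p$ in round $r'$ is at least $m - r'k$ (at most $k$ items are removed per round, one per class), and PDF-boundedness from below by $\alpha$ means each such item gives agent $i$ a "fresh" weight whose density is at least $\alpha$, so the probability that $\hat j$'s weight beats all of them — i.e.\ that the special vertex displaces an available item — is controlled: heuristically, $\mathbf{Pr}[\hat j\text{ absorbed into slot }r'] \lesssim \frac{1}{\alpha(m - r'k)}$, because $\hat j$ must beat roughly $\alpha \cdot (m-r'k)$ worth of competing density. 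Iterating this over the nested sequence of matchings (using the nesting Lemma~\ref{lemma:nesting_lemma} to track which agents persist) yields $\lim_{\lambda\to0}\frac1\lambda\mathbf{Pr}[\hat j\in\hat B^r] \le \frac{1}{\alpha}\sum_{r'=1}^r\frac{1}{m-r'k}$, and substituting into the telescoped sum gives the stated inequality.

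I expect the main obstacle to be making the last step rigorous: the clean formula in random assignment theory (e.g.\ $\mathbb{E}[C_{n,m,r}]-\mathbb{E}[C_{n,m,r-1}] = \frac1n\sum_{j=0}^{r-1}\frac1{m-j}$) relies crucially on the memorylessness of the exponential distribution, which $(\alpha,\beta)$-PDF-bounded distributions lack. So instead of an exact recursion I must carefully couple the $(\alpha,\beta)$-PDF-bounded weights with exponential-type weights, or directly estimate the relevant limiting probability. The natural device is: conditioned on the size-$(r-1)$ matching and the threshold values $\{1 - X^r + X_i^{r-1}\}$, the event that some available item (not $\hat j$) would beat the same threshold has probability at least $\alpha$ times the length of the corresponding interval — here a stochastic-dominance / conditioning argument replaces memorylessness. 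Careful bookkeeping of which $r'$-indexed "slot" each available item competes for, together with the crude-but-sufficient count $|I_p^{r'}| \ge m - r'k$ from condition~\ref{item:assumption-b}, should then deliver the $\frac1{m-r'k}$ denominators. I would keep the estimates deliberately loose, since only an asymptotic bound is ultimately needed in Theorem~\ref{thm:main_theorem_1}.
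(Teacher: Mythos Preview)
Your high-level plan---telescoping $\mathbb{E}[X_p]$ over rounds, verifying Assumptions~\ref{assumption:no_distinct} and~\ref{assumption:belongs_maximum_weight_matching}, and invoking Lemma~\ref{lemma:difference_between_two_expected_values}---matches the paper's exactly. One framing point worth sharpening: the paper applies Lemma~\ref{lemma:difference_between_two_expected_values} to the \emph{fixed} graph $H=G[N_p,M(N_p)]$, not to the round-varying available set. This works because the greedy/matroid property (Equation~\eqref{eq:greedy}) guarantees that the size-$r$ maximum-weight matching in $G[N_p,M(N_p)]$ coincides with that in $G[N_p,M^{r-1}(N_p)\cup I_p^r]$, so $X_p^r=v_p(M^r(N_p))$ for a single fixed $H$.

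Where your proposal has a genuine gap is in converting $\lim_{\lambda\to 0}\frac{1}{\lambda}\mathbf{Pr}[\hat j\in\hat B^r]$ into the sum $\frac{1}{\alpha}\sum_{r'=1}^r\frac{1}{m-r'k}$. You propose to lower-bound the thresholds $X^r-X_i^{r-1}$ directly, but that is circular: by exchangeability those thresholds have expectation $\mathbb{E}[X^r]-\mathbb{E}[X^{r-1}]$, which is precisely the quantity you are trying to bound. Your fallback suggestions (coupling with exponentials, stochastic dominance on intervals) are plausible-sounding but not obviously workable here, since the thresholds depend intricately on all the edge weights. The paper's device is different and quite specific: it first writes
\[
\mathbf{Pr}[\hat j\in\hat B^r]\;=\;1-\prod_{r'=1}^r\Bigl(1-\mathbf{Pr}\bigl[\hat j\in\hat B^{r'}\mid \hat j\notin\hat B^{r'-1}\bigr]\Bigr),
\]
and then bounds each conditional factor by conditioning on (i) the identity of the agent $i_\ell$ at the item-end of the augmenting path, (ii) all edge weights except those from $i_\ell$ into $I_p^{r'}\cup\{\hat j\}$, and (iii) the value $w^{(r')}$ of the maximum of those remaining weights. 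Under this conditioning, the event $\hat j\in\hat B^{r'}$ becomes exactly ``$W(i_\ell,\hat j)$ is the argmax over $I_p^{r'}\cup\{\hat j\}$'', whose probability is a ratio of explicit integrals. A change of variables $y=F_{\mathcal D}(x)/F_{\mathcal D}(w^{(r')})$ combined with $f_{\mathcal D}\ge\alpha$ then yields the clean bound $\frac{1}{\alpha(m^{r'}+1)}$, uniformly in $w^{(r')}$, so the conditioning can be removed. This argmax-under-conditioning trick is the missing ingredient in your plan; once you have it, the rest (expanding the product to first order in $\lambda$ and using $m^{r'}+1\ge m-r'k$) is routine.
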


\begin{proof}
By construction of the algorithm, we have $M^{r}(N_p) \subseteq M(N_p)$ and $M^{r-1}(N_p) \cup I^{r}_p \supseteq M(N_p)$. Combining these with Equation~\eqref{eq:greedy} in Proposition~\ref{proposition:1_2_ef1_non_waste} implies that
\[
    M^{r}(N_p) \in \mathrm{argmax} \bigl\{\, v_p(I')\, \bigm|\, \, |I'|=r ~\land~I' \subseteq M(N_p)\,\bigr\}.
\]
Thus, for each $r=1,2,\ldots,n_p$, the maximum-weight matching of size $r$ in $G[N_p,M(N_p)]$ is identical to the maximum-weight matching of size $r$ in $G[N_p,M^{r-1}(N_p) \cup I^{r}_p]$, which is precisely the one created by class $p$ in round $r$ of the round-robin algorithm.

We now proceed to consider the expected value of $X_p=v_p(M(N_p))$. Note that since all edge weights are positive almost surely and $m\geq n$, class $p$ obtains $n_p$ items at the end of the algorithm, i.e., $|M(N_p)|=n_p$. 
For each $r=1,2,\ldots,n_p$, let $X_p^{r}$ denote the maximum-weight of matchings of size $r$ between $N_p$ and $M(N_p)$.
From the previous discussion, $X_p^r = v_p(M^r(N_p))$. 
By linearity of expectation, we can decompose $\mathbb{E}[X_p]$ into the expected marginal utilities of a new item in each round, i.e.,
\begin{equation}\label{eq:linearity_of_expectation_0:main}
    \mathbb{E}[X_p] = \sum_{r=1}^{n_p}\big(\mathbb{E}[X_p^r]-\mathbb{E}[X_p^{r-1}]\big).    
\end{equation}

To apply Lemma~\ref{lemma:difference_between_two_expected_values}, we next add to $G$ a new item $\hat{j}$ and edges from $\hat{j}$ to all $i\in N_p$ with edge-weights independently generated from 
$\mathrm{ReExp}(\lambda)$ on $(-\infty, 1]$. Let ${\hat G}$ denote the resulting graph. 
Let $\hat{B}= M(N_p) \cup \{\hat{j}\}$. 

Let $\hat{M}^r$ denote the maximum-weight matching of size $r$ in ${\hat G}[N_p,\hat{B}]$, and let $\hat{B}^r$ denote the set of items matched to some agent in $N_p$ under $\hat{M}^r$. 

To apply Lemma~\ref{lemma:difference_between_two_expected_values} to $H=G[N_p,M(N_p)]$, let us check whether Assumptions~\ref{assumption:no_distinct} and~\ref{assumption:belongs_maximum_weight_matching} hold in this graph.
By Lemma~\ref{lemma:uniqueness_of_maximum_weight_matching}, Assumption~\ref{assumption:no_distinct} of Lemma~\ref{lemma:difference_between_two_expected_values} is satisfied for $G[N_p,M(N_p)]$. Assumption~\ref{assumption:belongs_maximum_weight_matching} is also satisfied, since by symmetry of distributions, the probability that an agent in $N_p$ is selected until round $r$ by the round-robin algorithm is uniform, which means that the probability that an agent in $N_p$ belongs to a maximum-weight matching of size $r$ in $G[N_p,M(N_p)]$ is uniform across all agents in $N_p$. More formally, consider two agents $i$ and $i'$ in class $p$.
Given the graph $G$, we define $G_{i,i'}$ as the graph obtained by swapping the roles of $i$ and $i'$ in $G$, namely, $i$ (resp. $i'$) and $j$ has an edge weight $w$ in $G$ if and only if $i'$ (resp. $i$) and $j$ has an edge weight $w$ in $G_{i,i'}$. Now, let $Y$ denote a random variable for $G$ and $Z$ denote a random variable such that $Y=G$ if and only if $Z=G_{i,i'}$. 
Let $\mathcal{A}_i$ be the event that agent $i$ appears in a maximum-weight matching of size $r$ in $G[N_p,M(N_p)]$ where $G$ is generated from the distribution of $Y$. Similarly, let $\mathcal{B}_{i}$ (resp. $\mathcal{B}_{i'}$) be the corresponding event for $Z$. 
Then, we have 
$\mathbf{Pr}[\mathcal{A}_i] = \mathbf{Pr}[\mathcal{B}_{i'}] = \mathbf{Pr}[\mathcal{A}_{i'}]$, where the first equality holds by construction of $Z$ and the second equality holds by the fact that in both $G$ and $G_{i,i'}$, the weights of edges are drawn independently from a distribution $\mathcal{D}$. 
Therefore, for any agent, the probability of being included in a maximum-weight matching of fixed size $r$ in $G[N_p,M(N_p)]$ is the same.
Thus, we can apply Lemma~\ref{lemma:difference_between_two_expected_values} to $H=G[N_p,M(N_p)]$ and get
\begin{equation}\label{eq:difference_between_two_B_r_0:main}
    \mathbb{E}[X_p^r]-\mathbb{E}[X_p^{r-1}] 
    = 1- \frac{1}{r}\lim_{\lambda \to 0}\frac{1}{\lambda}\mathbf{Pr}\left[\hat{j} \in \hat{B}^{r}\right].
\end{equation}

Next, we present an upper bound on the limited probability $\lim_{\lambda \to 0} \frac{1}{\lambda}\mathbf{Pr}[ \hat{j} \in \hat{B}^{r}]$. 
We claim that 
\begin{equation}\label{inequality:hat_j_bound:main}
    \frac{1}{r}\lim_{\lambda \to 0}\frac{1}{\lambda} \mathbf{Pr}\left[\hat{j} \in \hat{B}^{r}\right] 
    \leq \frac{1}{\alpha } \sum_{r=1}^{n_p}\frac{1}{r} \sum_{r'=1}^r  
        \frac{1}{m - r' \cdot k}.
\end{equation}
Inequality~\eqref{inequality:expected_diff_p_below} in Lemma~\ref{claim:expected_diff_p} can be established by combining Equations~\eqref{eq:linearity_of_expectation_0:main},~\eqref{eq:difference_between_two_B_r_0:main} and Inequality~\eqref{inequality:hat_j_bound:main}.

To show Inequality~\eqref{inequality:hat_j_bound:main}, we calculate the conditional probability $\mathbf{Pr}[\hat{j} \in \hat{B}^{r} \bigm| \hat{j} \notin \hat{B}^{r-1} ]$. Namely, we consider  the probability that $\hat{j}$ appears in matching $\hat{M}^r$ under the condition that $\hat{j}$ does not appear in $\hat{M}^{r-1}$. Let us assume that $\hat{j}$ does not appear in $\hat{M}^{r-1}$. Then $\hat{M}^{r-1} = M^{r-1}$ since both represent the maximum-weight matching of size $r-1$ between $N_p$ and $M(N_p)$ in ${\hat G}$. Further, $\hat{M}^r$ is a maximum-weight matching of size $r$ in the graph ${\hat G}[N_p,{M}^{r-1}(N_p) \cup I^r_p \cup \{\hat j\}]$ since the total weight of $\hat{M}^r$ is at least as large as that of ${M}^r$ and ${M}^r$ is a maximum-weight matching of size $r$ in the graph $G[N_p,M^{r-1}(N_p) \cup I^{r}_p]$. 

Consider an augmenting path $P$ of $M^{r-1}$ in the bipartite graph ${\hat G}[N_p,M^{r-1}(N_p) \cup I^r_p \cup \{\hat j\}]$ that yields $\hat{M}^r$. This path is uniquely determined since $\hat{M}^r$ is uniquely determined almost surely by Lemma~\ref{lemma:uniqueness_of_maximum_weight_matching}. Suppose that $P$ starts from an agent $i^r$ and ends with a new item $\hat{j}^{r}$. 
Then, we have 
$$
\mathbf{Pr}\left[\hat{j} \in \hat{B}^{r} \bigm| \hat{j} \notin \hat{B}^{r-1}  \right] = \mathbf{Pr}\left[\hat{j}^{r} = \hat{j}  \bigm| \hat{j} \notin \hat{B}^{r-1}  \right].
$$

For $r=1,2,\ldots,n_p$, let $W^{(r)}$ be the random variable representing the weight of the edge which is incident to $\hat{j}^{r}$ in the augmenting path.
Let $W(i,j)$ denote the random variable representing the weight of the edge $\{i,j\}$ for each $i\in N_p$ and $j \in M(N_p) \cup \{\hat{j}\}$.
Let $i_{\ell}$ represent the agent in $N_p$ that is adjacent to $\hat{j}^{r}$ in the augmenting path. 
We condition on a new agent who participates in a maximum-weight matching of size $r$ together with weights of the edges involving that agent. 
Specially, we consider the following three conditions: 
\begin{itemize}
\item[$($i$)$] the agent $i_{\ell}$ matched to a new item $\hat{j}^{r}$ in the augmenting path is fixed, 
\item[$($ii$)$] the weights of all edges other than the edges $\{i_{\ell},j\}$ $(j \in I^{r}_p\cup \{\hat{j}\})$ are fixed, and 
\item[$($iii$)$] $W^{(r)} = w^{(r)}$, which implies that the weights $W(i_{\ell},j)$ for all $j \in I^{r}_p\cup \{\hat{j}\} $ are upper bounded by $w^{(r)}$, i.e., $\max \{W(i_{\ell},j) \mid j \in I^{r}_p\cup \{\hat{j}\}\} \leq w^{(r)}$. 
\end{itemize}
If we condition on $($i$)$, $($ii$)$, and $($iii$)$, then the probability that $\hat{j}^{r} = \hat{j}$ equals the probability that $W(i_{\ell},\hat{j})$ is the maximizer among $\{W(i_{\ell},j) \mid j \in I^{r}_p\cup \{\hat{j}\} \}$.
Thus, we obtain
\begin{align*}
    &\mathbf{Pr}\left[ \hat{j} \in \hat{B}^{r} \Bigm| 
    \hat{j}\notin \hat{B}^{r-1} ~\land~ \text{(i), (ii) and (iii)} \right]  \nonumber \\
    &=\mathbf{Pr}\left[ w(i_{\ell},\hat{j}) = \max \{ W(i_{\ell},j) \mid j \in I^{r}_p\cup \{\hat{j}\} \} \Bigm| 
    \hat{j}\notin \hat{B}^{r-1} ~\land~ \text{(i), (ii) and (iii)} \right] .
\end{align*}
Let $m^{r'} =|I^{r'}_p|$ for each $r'=1,2,\ldots,r$.
For each $i\in N_p$ and each $j\in I^r_p$, $W(i,j)$ is drawn from $\mathcal{D}$, which is $(\alpha,\beta)$-PDF bounded, and for each $i\in N_p$, $W(i,\hat{j})$ is drawn from the reversed exponential distribution $\mathrm{ReExp}(\lambda)$.
For the sake of simplicity, we will refer to conditions (i), (ii), and (iii) collectively as $\mathcal{E}_{w^{(r)}}$.
Then, we get
\begin{align*}
    \mathbf{Pr}\left[ \max \{W(i_{\ell},j) \mid j \in I^{r}_p\cup \{\hat{j}\}\} \leq w^{(r)} \Bigm| \mathcal{E}_{w^{(r)}} \right]  
    &= \int_{-\infty}^{w^{(r)}} \lambda \mathrm{e}^{-\lambda (1-x)}  \mathrm{d}x \cdot F_{\mathcal{D}}(w^{(r)})^{m^r} \\
    &= \mathrm{e}^{-\lambda (1-w^{(r)})}  F_{\mathcal{D}}(w^{(r)})^{m^r},
\end{align*}
and
\begin{align*}
    &\mathbf{Pr}\left[ W(i_{\ell}, \hat{j}) = \max \{ W(i_{\ell},j) \mid j \in I^{r}_p\cup \{\hat{j}\}\} ~\land~ 
    \max \{W(i_{\ell},j) \mid j \in I^{r}_p\cup \{\hat{j}\}\}  \leq w^{(r)} \Bigm| \mathcal{E}_{w^{(r)}}\right] \\
    & = \int_{0}^{w^{(r)}} 
    \mathbf{Pr}\left[ W(i_{\ell}, \hat{j}) =  x\right] \cdot
    \mathbf{Pr}\left[ W(i_{\ell}, j) \leq  x~\mbox{for every }j \in I^r_p\right] 
    \mathrm{d}x \\
    &= \int_{0}^{w^{(r)}} \lambda \mathrm{e}^{-\lambda (1-x)}  F_{\mathcal{D}}(x)^{m^r}\mathrm{d}x.
\end{align*}
From the above two equations, we obtain
\begin{align*}
    &\mathbf{Pr}\left[ \hat{j} \in \hat{B}^{r} \Bigm| \hat{j} \notin \hat{B}^{r-1}~\land~ \mathcal{E}_{w^{(r)}} \right] \\
    &= \frac{\mathbf{Pr}\left[ W(i_{\ell}, \hat{j}) = \max \{ W(i_{\ell},j) \mid j \in I^{r}_p\cup \{\hat{j}\}\} \}\ ~\land~ \max \{W(i_{\ell},j) \mid j \in I^{r}_p\cup \{\hat{j}\}\} \} \leq w^{(r)} \Bigm| \mathcal{E}_{w^{(r)}}\right]}
    {\mathbf{Pr}\left[ \max \{W(i_{\ell},j) \mid j \in I^{r}_p\cup \{\hat{j}\}\} \leq w^{(r)} \Bigm| \mathcal{E}_{w^{(r)}}\right]} \\
    &= \lambda \mathrm{e}^{\lambda (1-w^{(r)})} \int_{0}^{w^{(r)}} \mathrm{e}^{-\lambda (1-x)} \left( \frac{F_{\mathcal{D}}(x)}{F_{\mathcal{D}}(w^{(r)})}\right)^{m^r} \mathrm{d}x.
\end{align*}
This implies that
\begin{align*}
    \mathbf{Pr}\left[\hat{j} \in \hat{B}^{r} \Bigm| \mathcal{E}_{w^{(1)}},\ldots,\mathcal{E}_{w^{(r)}} \right] 
    &= 1 - 
    \prod_{r'=1}^r \mathbf{Pr}\left[ \hat{j} \notin \hat{B}^{r'} \Bigm| \hat{j} \notin \hat{B}^{r'-1}~\land~ \mathcal{E}_{w^{(r')}} \right]  \\ 
    &= 1 - 
    \prod_{r'=1}^r \left(1- \mathbf{Pr}\left[ \hat{j} \in \hat{B}^{r'} \Bigm| \hat{j} \notin \hat{B}^{r'-1} ~\land~ \mathcal{E}_{w^{(r')}} \right]\right)  \\ 
    &= 1 -  \prod_{r'=1}^r \left( 1 - \lambda  \mathrm{e}^{\lambda  (1-w^{(r')})} \int_{0}^{w^{(r')}} \mathrm{e}^{-\lambda (1-x)}  \left( \frac{F_{\mathcal{D}}(x)}{F_{\mathcal{D}}(w^{(r')})}\right)^{m^{r'}} \mathrm{d}x \right) \\  
    &=\sum_{r'=1}^r \lambda  \mathrm{e}^{\lambda   (1-w^{(r')})} \int_{0}^{w^{(r')}} \mathrm{e}^{-\lambda (1-x)}  \left( \frac{F_{\mathcal{D}}(x)}{F_{\mathcal{D}}(w^{(r')})}\right)^{m^{r'}} \mathrm{d}x + O_{\lambda}(\lambda^2).
\end{align*}
Furthermore, from uniform convergence, we achieve
\begin{align*}\label{eq:limit_of_prob}
    \lim_{\lambda \to 0} \frac{1}{\lambda}\mathbf{Pr} \left[\hat{j} \in \hat{B}^{r} \bigm| \mathcal{E}_{w^{(1)}},\ldots,\mathcal{E}_{w^{(r)}} \right] 
    &= \lim_{\lambda \to 0} \sum_{r'=1}^r \mathrm{e}^{\lambda  (1-w^{(r')})} \int_{0}^{w^{(r')}} \mathrm{e}^{-\lambda (1-x)} \left( \frac{F_{\mathcal{D}}(x)}{F_{\mathcal{D}}(w^{(r')})}\right)^{m^{r'}} \mathrm{d}x  \\
    &=\sum_{r'=1}^r \int_{0}^{w^{(r')}}  \left( \frac{F_{\mathcal{D}}(x)}{F_{\mathcal{D}}(w^{(r')})}\right)^{m^{r'}} \mathrm{d}x.
\end{align*}
Set $y = \frac{F_{\mathcal{D}}(x)}{F_{\mathcal{D}}(w^{(r')})}$. Here we have $\mathrm{d}y = \frac{f_{\mathcal{D}}(x)}{F_{\mathcal{D}}(w^{(r')})}\, \mathrm{d}x$. Since $\mathcal{D}$ is $(\alpha,\beta)$-PDF-bounded, we have $f_{\mathcal{D}}(x) \geq \alpha$. Also, we have $F_{\mathcal{D}}(x) \leq 1$ for all $x \in [0,1]$. Hence, for all $r'=1,2,\ldots,r$, we have
\begin{align*}
    \int_{0}^{w^{(r')}}  \left( \frac{F_{\mathcal{D}}(x)}{F_{\mathcal{D}}(w^{(r')})}\right)^{m^{r'}} \mathrm{d}x
    =\int_{0}^{1}  y^{m^{r'}}\frac{F_{\mathcal{D}}(w^{(r')})}{f_{\mathcal{D}}(x)}\mathrm{d}y 
    \leq \frac{1}{\alpha}\int_{0}^{1}  y^{m^{r'}}\mathrm{d}y 
    =\frac{1}{\alpha} \frac{1}{m^{r'}+1}.
\end{align*}
The right hand side of this inequality does not depend on $w^{(1)},w^{(2)},\ldots,w^{(r)}$. Thus, we can remove the conditioning about $w^{(1)},w^{(2)},\ldots,w^{(r)}$ and then we obtain
\begin{equation}\label{eq:upper_bound_j_in_B_r}
    \lim_{\lambda \to 0} \frac{1}{\lambda}\mathbf{Pr}\left[ \hat{j} \in \hat{B}^{r}\right] 
    \leq \sum_{r'=1}^r \frac{1}{\alpha} \frac{1}{m^{r'}+1}.
\end{equation}

Finally, we consider the lower bound on the number $m^{r'}$ of remaining items when class $p$ selects a new item in round $r'$. 
The number of items already taken just before class $p$ chooses an item in round $r'$ is given by 
\[
    m^{r'} = |I^{r'}_p| = m - \sum_{p'= 1}^k \min(r'-1,n_{p'}) - \sum_{p'= 1}^{p-1} 1\cdot \mathbbm{1}(r' \le n_{p'}).
\]
This implies that $m^{r'} = |I^{r'}_p| \geq m - (r'-1)\cdot k - (p-1)$. 
Since $k\geq p-2$, we obtain
\begin{align}\label{eq:m_r}
    m^{r'}+1
    &\ge m - (r'-1)\cdot k - (p-1) + 1 \nonumber\\
    &\ge m -(r' -1) \cdot  k  -p+2 \nonumber\\
    &\ge m - r' \cdot k
\end{align}
for each $r'=1,2,\ldots,n_p$. 
From Inequalities~\eqref{eq:upper_bound_j_in_B_r} and~\eqref{eq:m_r}, we obtain Inequality~\eqref{inequality:hat_j_bound:main}.
\end{proof}


\subsubsection{Upper bound on $X_{pq}$}

We next turn our attention to the upper bound on the expected value of $X_{pq} = v_p(M(N_q))$. 
Since there are almost surely no edges with zero weight between $N_p$ and $M(N_q)$, the size of the maximum-weight matching between $N_p$ and $M(N_q)$ is $\min(n_p,n_q)$ almost surely. Also, by Lemma~\ref{lemma:uniqueness_of_maximum_weight_matching},
the maximum-weight matching of a fixed size between $N_p$ and $M(N_q)$ is uniquely determined. 
Now, for each $r=1,2,\ldots,\min(n_p,n_q)$,
let $j^{r}_q$ denote the item which appears in the maximum-weight matching of size $r$ but does not appear in that of size $r-1$. Here, $j^{r}_q$ is also uniquely determined from Lemma~\ref{lemma:nesting_lemma}.
In addition, let $W^{(r)}_q$ be the random variable representing the weight of the edge that is adjacent to $j^{r}_q$ and included in the maximum-weight matching of size $r$ between $N_p$ and $M(N_q)$.

With these in hand, we present the upper bound on the expected value of $X_{pq}$. 
\begin{restatable}{lemma}{ClaimExpexctedq}\label{claim:expected_diff_q}
    Suppose that $\mathcal{D}$ is $(\alpha,\beta)$-PDF-bounded and $m\ge n$. 
    Then, we have
    \begin{align*}
        \mathbb{E}\big[X_{pq}\big]
        \leq \min(n_p,n_q)-   
        \frac{\alpha}{\beta} \sum_{r=1}^{\min(n_p,n_q)} 
        \frac{1}{r}\sum_{r'=1}^r 
        \frac{\mathbb{E}\big[W^{(r')}_q\big]}{n_q-r'+2}.  
    \end{align*}
\end{restatable}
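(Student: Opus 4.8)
The plan is to run the same machinery as in the proof of Lemma~\ref{claim:expected_diff_p}, but with the inequalities reversed: there one needed a lower bound on $\mathbb{E}[X_p]$, hence an \emph{upper} bound on the limiting probability that the special vertex enters the maximum-weight matching; here we need an upper bound on $\mathbb{E}[X_{pq}]$, hence a \emph{lower} bound on that probability.

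First I would set up the decomposition. Since $m\ge n$ and all edge weights are positive almost surely, class $q$ receives exactly $n_q$ items, so $|M(N_q)|=n_q$ and the maximum-weight matching between $N_p$ and $M(N_q)$ has size $\min(n_p,n_q)$; writing $X_{pq}^{r}$ for the maximum weight of a size-$r$ matching between $N_p$ and $M(N_q)$, linearity of expectation gives $\mathbb{E}[X_{pq}]=\sum_{r=1}^{\min(n_p,n_q)}\bigl(\mathbb{E}[X_{pq}^{r}]-\mathbb{E}[X_{pq}^{r-1}]\bigr)$. I would then apply Lemma~\ref{lemma:difference_between_two_expected_values} to $H=G[N_p,M(N_q)]$: Assumption~\ref{assumption:no_distinct} holds by Lemma~\ref{lemma:uniqueness_of_maximum_weight_matching}, and Assumption~\ref{assumption:belongs_maximum_weight_matching} holds by the same relabelling argument as in the proof of Lemma~\ref{claim:expected_diff_p}, since swapping two agents of $N_p$ leaves the set $M(N_q)$ unchanged and merely relabels $H$. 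Adjoining a special item $\hat{j}$ to $M(N_q)$, joined to every agent of $N_p$ by an edge of weight drawn from $\mathrm{ReExp}(\lambda)$, this yields
\[
  \mathbb{E}[X_{pq}^{r}]-\mathbb{E}[X_{pq}^{r-1}]=1-\frac{1}{r}\lim_{\lambda\to 0}\frac{1}{\lambda}\mathbf{Pr}\bigl[\hat{j}\in\hat{B}^{r}\bigr],
\]
so it remains to bound $\lim_{\lambda\to 0}\frac{1}{\lambda}\mathbf{Pr}[\hat{j}\in\hat{B}^{r}]$ from below.

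For this, condition on $\hat{j}\notin\hat{B}^{r-1}$ (an event of probability $1-O(\lambda)$); then the size-$(r-1)$ maximum-weight matching of $\hat{H}$ equals $M^{r-1}$, and the augmenting path producing $\hat{M}^{r}$ is unique by Lemma~\ref{lemma:uniqueness_of_maximum_weight_matching} and ends in a new item lying in $(M(N_q)\setminus\hat{B}^{r-1})\cup\{\hat{j}\}$, a set of size $n_q-r+2$ (without $\hat{j}$ this new item would be $j^{r}_q$, with incident weight $W^{(r)}_q$). Conditioning, exactly as in the proof of Lemma~\ref{claim:expected_diff_p}, on the agent $i_\ell$ adjacent to the new item, on all edge weights except those from $i_\ell$ to $(M(N_q)\setminus\hat{B}^{r-1})\cup\{\hat{j}\}$, and on the weight of the edge incident to the new item, the conditional probability that $\hat{j}$ is the new item equals the probability that $W(i_\ell,\hat{j})$ is the maximum among those $n_q-r+2$ weights. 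A pleasant feature of the lower-bound direction is that the sign of the conditioning bias works in our favour: all competing items lie in $M(N_q)$ and so were never selected by class $p$, hence the conditional law of $W(i,j)$ for $i\in N_p,\,j\in M(N_q)$ is stochastically dominated by $\mathcal{D}$, so replacing these weights by i.i.d.\ $\mathcal{D}$ can only decrease the chance that $\hat{j}$ wins and therefore preserves a valid lower bound. Running the computation of Lemma~\ref{claim:expected_diff_p} with $n_q-r+1$ competing real items and letting $\lambda\to 0$ leaves the integral $\int_0^{w^{(r)}}\bigl(F_{\mathcal{D}}(x)/F_{\mathcal{D}}(w^{(r)})\bigr)^{n_q-r+1}\mathrm{d}x$, and the substitution $y=F_{\mathcal{D}}(x)/F_{\mathcal{D}}(w^{(r)})$ together with $f_{\mathcal{D}}\le\beta$ and $F_{\mathcal{D}}(w)\ge\alpha w$ gives
\[
  \int_{0}^{w^{(r)}}\Bigl(\tfrac{F_{\mathcal{D}}(x)}{F_{\mathcal{D}}(w^{(r)})}\Bigr)^{n_q-r+1}\mathrm{d}x=\int_{0}^{1}y^{\,n_q-r+1}\,\frac{F_{\mathcal{D}}(w^{(r)})}{f_{\mathcal{D}}(x)}\,\mathrm{d}y\ \ge\ \frac{\alpha\,w^{(r)}}{\beta\,(n_q-r+2)}.
\]
Taking expectations over the conditioning weight, which converges to $W^{(r)}_q$ as $\lambda\to 0$ because $\hat{j}$ is asymptotically never the new item, and chaining the estimate over the steps $1,\dots,r$ via Lemma~\ref{lemma:nesting_lemma}, I would obtain $\lim_{\lambda\to 0}\frac{1}{\lambda}\mathbf{Pr}[\hat{j}\in\hat{B}^{r}]\ge\frac{\alpha}{\beta}\sum_{r'=1}^{r}\frac{\mathbb{E}[W^{(r')}_q]}{n_q-r'+2}$.

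Substituting this into the displayed identity gives $\mathbb{E}[X_{pq}^{r}]-\mathbb{E}[X_{pq}^{r-1}]\le 1-\frac{\alpha}{\beta r}\sum_{r'=1}^{r}\frac{\mathbb{E}[W^{(r')}_q]}{n_q-r'+2}$, and summing over $r=1,\dots,\min(n_p,n_q)$ yields the claimed bound. The step I expect to be most delicate is the one inherited directly from Lemma~\ref{claim:expected_diff_p}: making rigorous, through the conditioning on $i_\ell$ and the remaining edge weights, the claim that the new item of the augmenting path is the $W(i_\ell,\cdot)$-maximizer among the available items, and that the competing weights are (stochastically at most) i.i.d.\ $\mathcal{D}$ after conditioning, the circularity being that the competitor set $M(N_q)\setminus\hat{B}^{r-1}$ itself depends on the realized edge weights. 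A secondary point is the interchange of $\lambda\to 0$ with the expectation: one must check that the $O(\lambda)$-probability event that $\hat{j}$ is the new item at some step contributes negligibly, which follows since on that event the weight of the $\hat{j}$-edge used is bounded below by a constant depending only on the (fixed) class sizes $n_p,n_q$.
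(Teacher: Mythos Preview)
Your proposal follows the same overall approach as the paper: apply Lemma~\ref{lemma:difference_between_two_expected_values} to $G[N_p,M(N_q)]$, condition on the augmenting-path agent $i_\ell$ and the terminal edge weight, bound the resulting integral via $(\alpha,\beta)$-PDF-boundedness in the direction opposite to Lemma~\ref{claim:expected_diff_p}, and chain over $r'=1,\dots,r$.

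The one notable difference is precisely at the step you flag as most delicate. You argue by direct stochastic domination: the weights $W(i_\ell,j)$ for $j\in M(N_q)$ are biased downward because class~$p$ passed over these items, so replacing them by fresh i.i.d.\ $\mathcal{D}$ samples can only lower the probability that $\hat{j}$ wins, preserving a valid lower bound. The paper instead introduces a \emph{uniformly random} size-$n_q$ subset $B_u\subseteq I$, computes on $B_u$ where the competing weights really are i.i.d.\ $\mathcal{D}$, and then conditions on the event $B_u=M(N_q)$ via a Bayes'-theorem positive-association argument: the event ``$\hat{j}$ wins'' forces the weights $W(i_\ell,j)$ for $j\in I_u^r$ to be small, which makes it \emph{more} likely that those items land in $M(N_q)$, so $\mathbf{Pr}[\hat{j}\text{ wins}\mid B_u=M(N_q)]\ge\mathbf{Pr}[\hat{j}\text{ wins}]$. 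The $B_u$ device gives a concrete way to decouple the competitor set from the realized weights and sidesteps the circularity you mention, but the two formulations are really the same correlation statement read from opposite sides of Bayes' rule, and the paper's justification of the positive association is itself at an informal level comparable to your stochastic-domination claim.
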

A natural way to obtain the upper bound is to estimate the expected value $\mathbb{E}[X_{pq}]$ directly. However, it is difficult to do this since we do not know how class $p$ evaluates the marginal gain class $q$ enjoys at each round. 
Instead, we consider a bundle $B_u$ of size $n_q$, which is selected uniformly at random from $I$. By conditioning on $M(N_q)=B_u$ and analyzing the maximum matching between $N_p$ and $B_u$, we circumvent the difficulty and apply a similar argument to that for Lemma~\ref{claim:expected_diff_p} to establish Lemma~\ref{claim:expected_diff_q}.

\begin{proof}[Proof of Lemma~\ref{claim:expected_diff_q}]
Since we assume $m\ge n$, we have $|M(N_q)|=n_q$.
Let $X_{pq}^{r}$ denote the maximum weight of matchings of size $r$ between $N_p$ and $M(N_q)$. 
We introduce a new item, $\hat{j}$, to $M(N_q)$ and connect it to every agent in $N_p$ by an edge whose weight is generated independently from the reversed exponential distribution $\mathrm{ReExp}(\lambda)$.
Let $\hat{G}$ denote the resulting graph.
Let $\hat{B}^r_{pq}$ be the set of items that appears in the maximum-weight matching between $N_p$ and $M(N_q) \cup \{\hat{j}\}$ in $\hat{G}$.

To apply Lemma~\ref{lemma:difference_between_two_expected_values} to $H=G[N_p,M(N_q)]$, we verify that Assumptions~\ref{assumption:no_distinct} and~\ref{assumption:belongs_maximum_weight_matching} hold for this graph.
Lemma~\ref{lemma:uniqueness_of_maximum_weight_matching} ensures that Assumption~\ref{assumption:no_distinct} of Lemma~\ref{lemma:difference_between_two_expected_values} is satisfied for $G[N_p,M(N_q)]$. 
Assumption~\ref{assumption:belongs_maximum_weight_matching} is also satisfied by a similar argument to the one in the proof of \ref{claim:expected_diff_p}, namely, for any two agents in $N_p$, the probability of each agent being included in a maximum-weight matching of fixed size $r$ in $G[N_p,M(N_q)]$ is the same. 

Finally, from Lemma~\ref{lemma:difference_between_two_expected_values} we obtain that
\begin{align}\label{eq:diff_X_pq_r}
    \mathbb{E}\big[X_{pq}^r \big] - \mathbb{E}\big[X_{pq}^{r-1}\big] = 1 - \frac{1}{r}\lim_{\lambda \to 0}\frac{1}{\lambda} \mathbf{Pr}\big[ \hat{j} \in \hat{B}^{r}_{pq} \big].
\end{align}
Thus, our goal is to establish a lower bound for the limiting probability $\lim_{\lambda \to 0}\frac{1}{\lambda} \mathbf{Pr}\big[ \hat{j} \in \hat{B}^{r}_{pq} \big]$.

Analyzing $\mathbf{Pr}\big[\hat{j} \in \hat{B}^{r}_{pq} \big]$ directly is challenging since the only information we have is that the items in $M(N_q)$ are not selected by class $p$ but we do not know how class $p$ evaluates the marginal gain class $q$ enjoys from $p$'s viewpoint over the course of the round-robin algorithm. 

Now, we select a subset of size $n_q$ from $I$ uniformly at random, denoted by $B_u \subseteq I$. 
Let $M_u^r$ denote the maximum-weight matching of size $r$ between $N_p$ and $B_u$ in $\hat{G}$. 
Let $B^r_u$ denote the set of items which appear in matching $M_u^r$. By Lemma~\ref{lemma:nesting_lemma}, $B^1_u \subseteq B^2_u \subseteq \cdots \subseteq B^{\min(n_p,n_q)}_u$. 
Let $I_u^r = B_u \setminus B_u^{r-1}$ and $\hat{B}_u= B_u \cup \{\hat{j}\}$.
For $r = 1,2,\ldots,\min(n_p,n_q)$, let $\hat{M}_u^r$ denote the maximum-weight matching of size $r$ between $N_p$ and $\hat{B}_u$ in $\hat{G}$, and let $\hat{B}^r_u$ denote the set of items which appear in matching $\hat{M}^r_u$. By symmetry, for all $S\subseteq I$ such that $|S|=n_q$, the probability that $S$ coincides with $M(N_q)$ is the same.
Here, we have
\begin{equation}\label{eq:B_r_u}
    \mathbf{Pr}\big[\hat{j} \in \hat{B}^{r}_{pq} \big]
    = \sum_{B_u\in I[n_q]} 
    \mathbf{Pr}\big[\hat{j} \in \hat{B}^{r}_u \bigm| 
    B_u = M(N_q) \big] 
    \cdot \mathbf{Pr}[B_u = M(N_q)],
\end{equation}
where $I[n_q]$ denotes the family of subsets of size $n_q$ of $I$.

We now proceed to consider the discussion found in the proof of Inequality~\eqref{inequality:hat_j_bound:main}. 
We begin by considering the conditional probability
$
\mathbf{Pr}\left[\hat{j} \in \hat{B}^{r}_u  \bigm| \hat{j} \notin \hat{B}^{r-1}_u  ~\land~ B_u=M(N_q) \right].
$

Let us assume that $\hat{j} \notin \hat{B}_u^{r-1}$ and $B_u=M(N_q)$. Then, we have $\hat{M}_u^{r-1} = M_u^{r-1}$. 
Again, consider an augmenting path of matching $\hat{M}_u^{r-1}$ in the modified graph $\hat{G}$ restricted to $N_p$ and $B_u\cup \{\hat{j}\}$ that results in a maximum-weight matching of size $r$ between $N_p$ and $B_u\cup \{\hat{j}\}$. 
Abuse of notation, let $\hat{j}^{r} \in I_u\cup \{\hat{j}\}$ denote the new item of the augmenting path that does not appear in $\hat{B}_u^{r-1}$. We now consider the probability that $\hat{j}^{r} = \hat{j}$ similarly to the proof of Lemma~\ref{claim:expected_diff_p}.
Let $i_{\ell}$ denote the vertex adjacent to $\hat{j}^{r}$ in the augmenting path. 
Let $W_u^{(r)}$ denote the random variable representing the weight of the edge $\{i_{\ell}, \hat{j}^{r}\}$.
Let $i^{r} \in N_p$ represent the left endpoint of the alternating path. 
The augmenting path begins with $i^{r}$ and 
ends with $\hat{j}^{r}$.

We now consider the following three conditions: 
\begin{itemize}
\item[$($i$)$] the agent $i_{\ell}$ matched to a new item $\hat{j}^{r}$ in the augmenting path is fixed, 
\item[$($ii$)$] the weights of all edges other than the edges $\{i_{\ell},j\}$ $(j \in I_u^{r}\cup \{\hat{j}\})$ are fixed, and 
\item[$($iii$)$] $ W_u^{(r)} = w^{(r)}_u$, which implies that the weights $W(i_{\ell},j)$ for all $j \in I^{r}_u\cup \{\hat{j}\} $ are upper bounded by $w^{(r)}_u$, i.e., $\max \{W(i_{\ell},j) \mid j \in I^{r}_u\cup \{\hat{j}\}\} \leq w^{(r)}_u$. 
\end{itemize}
Here, we obtain
\begin{align}\label{eq:condioning_eqation_2_q}
    &\mathbf{Pr}\left[\hat{j} \in \hat{B}_u^{r} \Bigm| 
    \hat{j}\notin \hat{B}_u^{r-1} ~\land~\text{(i), (ii) and (iii)}~\land~B_u=M(N_q)  \right]  \nonumber \\
    &=\mathbf{Pr}\left[ W(i_{\ell}, \hat{j}) = \max \{ W(i_{\ell},j) \mid j \in I_u^{r}\cup \{\hat{j}\}\} \Bigm| 
    \hat{j}\notin \hat{B}_u^{r-1}~\land~\text{(i), (ii) and (iii)}~\land~B_u=M(N_q) \right].
\end{align}

Next, let $\mathcal{S}$ represent the event that $W(i_{\ell}, \hat{j}) = \max \{ W(i_{\ell},j) \mid j \in I_u^{r}\cup \{\hat{j}\}\}$, 
$\mathcal{T}$ denote the event that $\hat{j} \notin \hat{B}^{r-1}_u$ and (i), (ii) and (iii), 
and $\mathcal{U}$ denote the event that $B_u = M(N_q)$.
The right-hand side of 
Equation~\eqref{eq:condioning_eqation_2_q} 
can be expressed as
$\mathbf{Pr}[\mathcal{S} \mid (\mathcal{T}\land \mathcal{U})].
$ 
We aim to show that $\mathbf{Pr}[\mathcal{U} \mid (\mathcal{S}\land \mathcal{T})] \geq \mathbf{Pr}[\mathcal{U} \mid \mathcal{T}]$. 
Conditioning on $\mathcal{S}$ implies that, for the subset $I_u^r\subseteq B_u$ and for all $j$ in the subset, the inequality $W(i_{\ell},j) \leq W(i_{\ell},\hat{j})$ holds.
This means that for all $j \in I^r_u$, $W(i_{\ell}, j)$ is upper bounded by the value $W(i_{\ell}, \hat{j})$. 
Thus, when $\mathcal{S}$ is conditioned,
the probability of an item in $I_u^r$ being chosen by agent $i_{\ell}$ in the round-robin algorithm is lower. Also, if the probability of an item in $I^r_u$ being chosen by some agent of class $p$ through the round-robin algorithm decreases, the likelihood of such an item being selected by other classes increases.
Moreover, the presence of the condition $\mathcal{T}$ does not affect the probability of an item in $I_u^r$ being chosen by agent $i_{\ell}$. 
Thus, the probability that some items in $B_u$ are selected by class $q$ under the condition $\mathcal{S}\land \mathcal{T}$ is greater than that under $\mathcal{T}$ only.
Therefore, it follows that $\mathbf{Pr}[\mathcal{U} \mid (\mathcal{S}\wedge \mathcal{T})] \geq \mathbf{Pr}[\mathcal{U} \mid \mathcal{T}]$. By applying Bayes' theorem, we derive $\mathbf{Pr}[\mathcal{S} \mid (\mathcal{T}\land \mathcal{U})] \geq \mathbf{Pr}[\mathcal{S} \mid \mathcal{T}]$. 
In other words, the events $\mathcal{S}$ and $\mathcal{U}$ are positively associated when conditioned on $\mathcal{T}$. From this, we obtain
\begin{align}\label{ineq:condioning_equation_3}
    &\mathbf{Pr}\left[ W(i_{\ell}, \hat{j}) = \max \{ W(i_{\ell},j) \mid j \in I_u^{r}\cup \{\hat{j}\}\} \Bigm| 
    \hat{j}\notin \hat{B}_u^{r-1}~\land~\text{(i), (ii) and (iii)}~\land~B_u=M(N_q) \right]  \nonumber\\
    &\geq 
    \mathbf{Pr}\left[W(i_{\ell}, \hat{j}) = \max \{ W(i_{\ell},j) \mid j \in I_u^{r}\cup \{\hat{j}\}\}  \Bigm| 
    \hat{j}\notin \hat{B}_u^{r-1}~\land~\text{(i), (ii) and (iii)} \right].
\end{align}

Recall that $|I_u^{r}|=n_q-(r-1) = n_q-r+1$.
Similarly to the event $\mathcal{E}_{w^{(r)}}$ defined previously, we here refer to conditions (i), (ii), and (iii) collectively as $\mathcal{E}_{w_u^{(r)}}$.
Then, we have
\begin{align}\label{eq:conditional_prob_4}
    &\mathbf{Pr}\left[W(i_{\ell}, \hat{j}) = \max \{ W(i_{\ell},j) \mid j \in I_u^{r}\cup \{\hat{j}\}\}  \Bigm| 
    \hat{j}\notin \hat{B}_u^{r-1}~\land~\mathcal{E}_{w_u^{(r)}} \right] \nonumber \\
    &= \frac{\mathbf{Pr}\left[ W(i_{\ell}, \hat{j}) = \max \{ W(i_{\ell},j) \mid j \in I_u^{r}\cup \{\hat{j}\}\} \}\ ~\land~ \max \{W(i_{\ell},j) \mid j \in I_u^{r}\cup \{\hat{j}\}\} \} \leq w_u^{(r)} \Bigm| \mathcal{E}_{w_u^{(r)}} \right]}
    {\mathbf{Pr}\left[ \max \{W(i_{\ell},j) \mid j \in I_u^{r}\cup \{\hat{j}\}\} \leq w_u^{(r)} \Bigm| \mathcal{E}_{w_u^{(r)}} \right]} \nonumber\\
    &= \lambda \mathrm{e}^{\lambda (1-w_u^{(r)})} \int_{0}^{w_u^{(r)}} \mathrm{e}^{-\lambda (1-x)} \left( \frac{F_{\mathcal{D}}(x)}{F_{\mathcal{D}}(w_u^{(r)})}\right)^{n_q-r+1} \mathrm{d}x.
\end{align}
From Equation~\eqref{eq:condioning_eqation_2_q}, Inequality~\eqref{ineq:condioning_equation_3} and Equation~\eqref{eq:conditional_prob_4}, we obtain 
\begin{align*}
    &\mathbf{Pr}\left[\hat{j} \in \hat{B}_u^{r} \Bigm| 
    \hat{j}\notin \hat{B}_u^{r-1} ~\land~\mathcal{E}_{w_u^{(r)}}~\land~B_u=M(N_q)  \right]  \\
    &\geq 
    \lambda \mathrm{e}^{\lambda (1-w_u^{(r)})} \int_{0}^{w_u^{(r)}} \mathrm{e}^{-\lambda (1-x)} \left( \frac{F_{\mathcal{D}}(x)}{F_{\mathcal{D}}(w_u^{(r)})}\right)^{n_q-r+1} \mathrm{d}x.
\end{align*}
This implies that
\begin{align*}
    &\mathbf{Pr}\left[\hat{j} \in \hat{B}^{r} \Bigm| \mathcal{E}_{w_u^{(1)}}\land\cdots \land\mathcal{E}_{w_u^{(\min(n_p,n_q))}}~\land~ B_u=M(N_q)  \right] \\
    &= 1 - 
    \prod_{r'=1}^r \mathbf{Pr}\left[ \hat{j} \notin \hat{B}^{r'} \Bigm| \hat{j} \notin \hat{B}^{r'-1}~\land~\mathcal{E}_{w_u^{(1)}}\land\cdots \land\mathcal{E}_{w_u^{(\min(n_p,n_q))}}~\land~ B_u=M(N_q) \right]  \\ 
    &\geq 1 -  \prod_{r'=1}^r \left( 1 - \lambda  \mathrm{e}^{\lambda  (1-w_u^{(r')})} \int_{0}^{w_u^{(r')}} \mathrm{e}^{-\lambda (1-x)}  \left( \frac{F_{\mathcal{D}}(x)}{F_{\mathcal{D}}(w_u^{(r')})}\right)^{n_q-r'+1} \mathrm{d}x \right) \\  
    &=\sum_{r'=1}^r \lambda  \mathrm{e}^{\lambda   (1-w_u^{(r')})} \int_{0}^{w_u^{(r')}} \mathrm{e}^{-\lambda (1-x)}  \left( \frac{F_{\mathcal{D}}(x)}{F_{\mathcal{D}}(w_u^{(r')})}\right)^{n_q-r'+1} \mathrm{d}x + O_{\lambda}(\lambda^2).
\end{align*}
Thus, we obtain
\begin{align*}
    &\lim_{\lambda \to 0} \frac{1}{\lambda} \mathbf{Pr}\left[\hat{j} \in \hat{B}^{r} \Bigm| \mathcal{E}_{w_u^{(1)}}\land\cdots \land\mathcal{E}_{w_u^{(\min(n_p,n_q))}}~\land~ B_u=M(N_q)  \right] \\
    &\ge \sum_{r'=1}^r \int_{0}^{w_u^{(r')}} \left( \frac{F_{\mathcal{D}}(x)}{F_{\mathcal{D}}(w_u^{(r')})}\right)^{n_q-r'+1} \mathrm{d}x.
\end{align*}
Set $y = \frac{F_{\mathcal{D}}(x)}{F_{\mathcal{D}}(w_u^{(r)})}$. This implies $\mathrm{d}y = \frac{f_{\mathcal{D}}(x)}{F_{\mathcal{D}}(w^{(r)})} \mathrm{d}x$. 
Since $\mathcal{D}$ is $(\alpha,\beta)$-PDF-bounded, $f_{\mathcal{D}}(x) \leq \beta$. 
Moreover, we have $F_{\mathcal{D}}(w_u^{(r')}) \geq \alpha w_u^{(r')}$. Then, 
\begin{align*}
    &\lim_{\lambda \to 0} \frac{1}{\lambda} \mathbf{Pr}\left[\hat{j} \in \hat{B}^{r} \Bigm| \mathcal{E}_{w_u^{(1)}}\land\cdots \land\mathcal{E}_{w_u^{(\min(n_p,n_q))}}~\land~ B_u=M(N_q) \right] \\
    &\geq \sum_{r'=1}^r \int_{0}^{1} y^{n_q-r'+1} \frac{F_{\mathcal{D}}(w_u^{(r')})}{f_{\mathcal{D}}(x)} \mathrm{d}y \\
    &\geq \frac{\alpha}{\beta}  \sum_{r'=1}^r w_u^{(r')}\int_{0}^{1} y^{n_q-r'+1} \mathrm{d}y \\
    &\geq \frac{\alpha}{\beta} \sum_{r'=1}^r  \frac{w_u^{(r')}}{n_q-r'+2}.
\end{align*}
Calculating the expected values for both sides concerning $w^{(1)},w^{(2)},\ldots,w^{(r)}$, we achieve 
\[
    \lim_{\lambda \to 0} \frac{1}{\lambda} \mathbf{Pr}\left[\hat{j} \in \hat{B}^{r} \Bigm| B_u=M(N_q) \right] 
    \geq \frac{\alpha}{\beta} \sum_{r'=1}^r  \frac{\mathbb{E}\bigl[W_u^{(r')} \bigm| B_u=M(N_q)  \bigr]}{n_q-r'+2}.
\]
This inequality, Equation~\eqref{eq:diff_X_pq_r}, Equation~\eqref{eq:B_r_u} imply that
\begin{align}\label{inequality:w_q_r_}
    \lim_{\lambda \to 0}\frac{1}{\lambda}\mathbf{Pr}\big[\hat{j} \in \hat{B}^{r}_{pq} \big]
    &\geq \sum_{B_u\in I[n_q]} 
    \lim_{\lambda \to 0}\frac{1}{\lambda}\mathbf{Pr}\big[\hat{j} \in \hat{B}^{r}_u \bigm| 
    B_u = M(N_q) \big] 
    \cdot \mathbf{Pr}[B_u = M(N_q)] \nonumber \\
    &\geq \frac{\alpha}{\beta} \sum_{r'=1}^r  \frac{\mathbb{E}\bigl[W_q^{(r')} \bigr]}{n_q-r'+2}.
\end{align}
By combining Equation~\eqref{eq:diff_X_pq_r} and Inequality~\eqref{inequality:w_q_r_}, we establish Lemma~\ref{claim:expected_diff_q}.
\end{proof}


\subsection{No Light Edges in Maximum-Weight Matchings}\label{sec:no_heavy_edges}
From Lemmas~\ref{claim:expected_diff_p} and~\ref{claim:expected_diff_q}, we can obtain a lower bound for $\mathbb{E}[X_{p}] - \mathbb{E}[X_{pq}]$.
To achieve probabilistic concentrations around expected values, we demonstrate that the edge weights in the maximum-weight matching between $N_p$ and $M(N_p)$ and those in the maximum-weight matching between $N_p$ and $M(N_q)$ are sufficiently heavy. This is formalized in Lemma~\ref{lem:no_heavy_edge}.
\begin{restatable}{lemma}{LemNoLightEdges}\label{lem:no_heavy_edge}
    Suppose that distribution $\mathcal{D}$ is $(\alpha,\beta)$-PDF-bounded and condition~\ref{item:assumption-c}. Then, we have the followings:
    \begin{itemize}
        \item[$(\mathrm{a})$] 
        Let $c_p$ be a constant such that $c_p > 7000\alpha^{-1} > 0$. For all $r=1,2,\ldots,n_p$, every edge in the maximum-weight matching of size $r$ between $N_p$ and $M(N_p)$ has a weight at least $1-c_p \frac{(\log n_p)^2}{n_p}$ with a probability of at least $1 - O(n_p^{-3})$. 
        \item[$(\mathrm{b})$] 
        Let $c_q$ be a constant such that $c_q > 60c_p + 400\alpha^{-1} >0 $. For all $r=1,2,\ldots,\min(n_p,n_q)$, every edge in the maximum-weight matching of size $r$ between $N_p$ and $M(N_q)$ has a weight at least $1-c_q \frac{(\log \min(n_p,n_q))^4}{\min(n_p,n_q)}$ with a probability of at least $1-O(\min(n_p,n_q)^{-3})$. 
    \end{itemize}
\end{restatable}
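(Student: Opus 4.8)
The plan is to reduce both parts to a single statement about random assignments --- that the \emph{lightest} edge of a maximum-weight matching in a complete bipartite graph with i.i.d.\ weights and enough right vertices is very close to $1$ --- and then to prove that statement by an alternating breadth-first-search/expansion argument of the kind used to bound the longest edge of a random assignment.

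For part~$(\mathrm a)$, I would first argue, exactly as in the proof of Lemma~\ref{claim:expected_diff_p} (via Equation~\eqref{eq:greedy} in Proposition~\ref{proposition:1_2_ef1_non_waste}), that the maximum-weight matching of size $r$ between $N_p$ and $M(N_p)$ is the matching $M^r_p$ built in round $r$, i.e.\ the maximum-weight matching of size $r$ in $G[N_p,\,M^{r-1}(N_p)\cup I_p^r]$. Condition~\ref{item:assumption-b} gives $m \ge k\max_{p'} n_{p'} + 2k \ge n + 2k$, so this pool, which equals $I$ minus the items held by the other classes at that moment, has size at least $m-(n-n_p) \ge n_p + 2$ for every $r \le n_p$; and conditioning on the pool, the weights of the edges from $N_p$ into it are i.i.d.\ $\mathcal D$ up to a mild distortion (class $p$ passed over these items earlier only because some other marginal gain was near $1$) that can be absorbed into the constants. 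It therefore suffices to prove: in a complete bipartite graph with $n_p$ left vertices and at least $n_p + 2$ right vertices whose weights are i.i.d.\ $(\alpha,\beta)$-PDF-bounded, the lightest edge of the maximum-weight matching of size $r$ is at least $1 - c_p\frac{(\log n_p)^2}{n_p}$ with probability $1-O(n_p^{-4})$; a union bound over $r = 1,\dots,n_p$ then yields part~$(\mathrm a)$.

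To prove this edge bound, set $s = \Theta\!\bigl(\frac{\log n_p}{\alpha n_p}\bigr)$ and call an edge \emph{heavy} if its weight is at least $1-s$. By PDF-boundedness and a Chernoff bound --- taking the constant in $s$ large, which is where $c_p > 7000\alpha^{-1}$ is used --- every vertex has $\Theta(\log n_p)$ heavy incident edges, simultaneously for all vertices except with probability $O(n_p^{-5})$. Suppose the maximum-weight matching $\mu$ of size $r$ uses an edge $\{i,j\}$ with $u_i(j) < 1 - c_p\frac{(\log n_p)^2}{n_p}$. Grow an alternating search from $i$ that starts with a non-matching edge and extends only along heavy edges; since each agent reached has $\ge 2$ heavy neighbours, the reached set grows geometrically, so within $\ell = O(\log n_p)$ alternating steps the search touches more than the $r \le n_p$ items used by $\mu$ and hence reaches a free item (or closes an alternating cycle through $\{i,j\}$). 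Either way we obtain an alternating path/cycle through $\{i,j\}$ of length at most $2\ell$ all of whose non-matching edges are heavy; swapping $\mu$ along it keeps the size at $r$ and changes the weight by at least $\ell(1-s) - \bigl(u_i(j) + (\ell-1)\bigr) = \bigl(1-u_i(j)\bigr) - \ell s \ge \bigl(c_p - O(\alpha^{-1})\bigr)\frac{(\log n_p)^2}{n_p} > 0$ for $c_p$ large enough, contradicting optimality. A union bound over the $r \le n_p$ edges of $\mu$ finishes the edge bound.

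Part~$(\mathrm b)$ follows the same scheme for the maximum-weight matching between $N_p$ and $M(N_q)$, of size $\min(n_p,n_q)$. By symmetry over relabelings of the items, $M(N_q)$ is uniform over the size-$n_q$ subsets of $I$, so I would condition on $M(N_q) = B$ and analyse the maximum-weight matching between $N_p$ and $B$; the new issue is that conditioning on $M(N_q)=B$ biases the $N_p$-to-$B$ weights downwards more than in part~$(\mathrm a)$, since an item of $B$ was never chosen by class $p$ over the whole run. This is exactly where condition~\ref{item:assumption-c} is needed: it keeps $m = n + \Omega(k)$ only polynomially larger than $n_q$ while $\min(n_p,n_q)$ is itself polynomially large, and a positive-association/coupling argument in the spirit of the proof of Lemma~\ref{claim:expected_diff_q} then shows that every vertex still has $\Omega\bigl(\log\min(n_p,n_q)\bigr)$ heavy neighbours once the threshold is enlarged to $s = \Theta\!\bigl(\frac{(\log\min(n_p,n_q))^3}{\min(n_p,n_q)}\bigr)$. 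Re-running the alternating-BFS argument with this $s$, and union-bounding over $r$ and over $B$, yields $1 - c_q\frac{(\log\min(n_p,n_q))^4}{\min(n_p,n_q)}$ with probability $1 - O\bigl(\min(n_p,n_q)^{-3}\bigr)$; the extra powers of $\log$ and the relation $c_q > 60c_p + 400\alpha^{-1}$ record the loss from the stronger bias and the extra union bounds. I expect the main obstacle throughout to be making the expansion argument rigorous --- ensuring the alternating search genuinely reaches a free item within $O(\log)$ steps despite overlaps, and getting the union bounds tight enough for the stated $O(n_p^{-3})$ and $O(\min(n_p,n_q)^{-3})$ probabilities --- together with, for part~$(\mathrm b)$, quantifying the conditioning bias precisely enough that condition~\ref{item:assumption-c} makes it negligible at the scale $s$ we need.
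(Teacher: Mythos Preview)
Your alternating-search/expansion plan is precisely the mechanism the paper uses: it first isolates the swap step as a standalone lemma (Lemma~\ref{lemma:the_weight_of_at_least}: an acceptable alternating path or cycle of length $\ell$ with threshold $w_0$ forces every matching edge to have weight $\geq 1-\ell(1-w_0)$), and then proves expansion lemmas (Lemmas~\ref{lemma:the_length_of_alternating} and~\ref{lemma:the_length_of_alternating_for_M_N_q}) guaranteeing such a path or cycle of length $O(\log)$ exists with the right probability. Your threshold choices and the arithmetic of the swap match theirs. The one structural difference for part~(a) is in how the conditioning is discharged: rather than ``absorbing the distortion into the constants,'' the paper works in $G[N_p,M(N_p)]$ (exactly $n_p$ right vertices), draws a uniformly random size-$n_p$ subset $R_u\subseteq I$, proves the expansion bound $|\Gamma_u(S)|\geq \tfrac{\alpha c_p|S|\log n_p}{4}$ unconditionally via Chernoff, and then observes that the bad event $\{|\Gamma_u(S)|\text{ small}\}$ is \emph{negatively} correlated with $\{R_u=M(N_p)\}$ (few heavy edges from $N_p$ to $R_u$ makes class~$p$ less likely to pick $R_u$), so conditioning can only help. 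This is cleaner than tracking the bias on the pool $M^{r-1}(N_p)\cup I_p^r$ round by round.

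For part~(b), however, your proposed positive-association step in the spirit of Lemma~\ref{claim:expected_diff_q} points the \emph{wrong} way: conditioning on $B=M(N_q)$ biases the $N_p$-to-$B$ weights \emph{down} (class~$p$ rejected every item of $B$), which makes ``too few heavy neighbours'' \emph{more} likely, not less, so a Bayes/association argument alone cannot close the expansion step. The paper's fix is to bootstrap quantitatively from part~(a): for $j\in M(N_q)$ and $i\in N_p$, the rejection of $j$ by class $p$ in round $r$ is the inequality $W(i,j)\le W(P^r)$, where $W(P^r)$ is the augmenting-path gain; part~(a) gives $W(P^r)\geq 1-O\bigl(\tfrac{(\log n_p)^3}{n_p}\bigr)$ with probability $1-O(n_p^{-3})$, so the conditional law of $W(i,j)$ is $\mathcal D$ truncated only slightly below $1$, whence $\Pr[W(i,j)\geq w_0]\geq \alpha(c_q-12c_p)\,\tfrac{(\log\min(n_p,n_q))^3}{\min(n_p,n_q)}$ for $w_0 = 1-c_q\,\tfrac{(\log\min(n_p,n_q))^3}{\min(n_p,n_q)}$. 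With this per-edge bound in hand, the expansion argument runs by direct union bounds over $S\subseteq N_p$ and $T\subseteq M(N_q)$ (no uniform-subset device needed); condition~\ref{item:assumption-c} enters only to control $\log n_p,\log n_q$ by $O(\log\min(n_p,n_q))$ in those union bounds. The relation $c_q>60c_p+400\alpha^{-1}$ is exactly what keeps $w_0$ below the $1-O((\log n_p)^3/n_p)$ cutoff with room to spare. So your identification of the bottleneck is correct, but the resolution is this explicit use of part~(a), not an association inequality.
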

To prove this, we employ an approach inspired by a technique for expanding bipartite graphs in the random assignment theory~\cite{FriezeJohansson2017,Frieze2021,Talagrand1995}. Although Frieze and Johansson~\cite{FriezeJohansson2017} and Frieze~\cite{Frieze2021} consider the assignment problem with minimum cost, we use a similar argument to analyze maximum-weight matchings. 
Specifically, we analyze edge weights of a maximum-weight matching in a bipartite graph by considering an alternating cycle of a maximum-weight matching in the bipartite graph restricted to the ``heavy'' edges and bounding its length. Furthermore, to demonstrate that such an alternating cycle can be found, we show ``expander'' properties of sub-bipartite graphs of certain size.

In the proof of Lemma~\ref{lem:no_heavy_edge}, we use the Chernoff bound, which can be used to bound the tails of the distribution for sums of independent random variables.
\begin{lemma}[Chernoff bound]
Let $X_1,X_2,\ldots,X_d$ be independent random variables such that $0\leq X_i \leq 1$ for all $i\in [d]$. Let $S = \sum_{i=1}^d X_i$. Then, for all $\varepsilon > 0$, we have
\begin{align*}
    &\mathrm{(i)\ } \mathbf{Pr}[S \geq (1+\varepsilon) \mathbb{E}[S]] \leq \exp \left(- \frac{\varepsilon^2 \mathbb{E}[S]}{3} \right), \\
    &\mathrm{(ii)\ } \mathbf{Pr}[S \leq (1-\varepsilon) \mathbb{E}[S]] \leq \exp \left(- \frac{\varepsilon^2 \mathbb{E}[S]}{2} \right). 
\end{align*}
\end{lemma}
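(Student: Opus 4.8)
The plan is to prove both tail bounds by the standard exponential-moment (Chernoff) method. For an upper tail $\mathbf{Pr}[S\geq a]$ we apply Markov's inequality to the nonnegative random variable $e^{tS}$ with a parameter $t>0$, obtaining $\mathbf{Pr}[S\geq a]\leq e^{-ta}\,\mathbb{E}[e^{tS}]$; independence then factors the moment generating function as $\mathbb{E}[e^{tS}]=\prod_{i=1}^d \mathbb{E}[e^{tX_i}]$; and finally we optimize the free parameter $t$. The lower tail is handled symmetrically, running the same argument on $-S$ with a negative multiplier.

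First I would establish the key per-variable estimate. Since each $X_i\in[0,1]$ and $x\mapsto e^{tx}$ is convex, the chord joining $(0,1)$ and $(1,e^{t})$ lies above the curve on $[0,1]$, so $e^{tX_i}\leq 1+(e^{t}-1)X_i$ pointwise for every real $t$ (in particular also for $t<0$). Taking expectations and writing $\mu_i=\mathbb{E}[X_i]$ gives $\mathbb{E}[e^{tX_i}]\leq 1+(e^{t}-1)\mu_i\leq \exp\!\bigl((e^{t}-1)\mu_i\bigr)$, using $1+y\leq e^{y}$. Multiplying over $i$ and setting $\mu=\mathbb{E}[S]=\sum_{i}\mu_i$ yields $\mathbb{E}[e^{tS}]\leq \exp\!\bigl((e^{t}-1)\mu\bigr)$ for all $t\in\mathbb{R}$. (If $\mu=0$ then $S=0$ almost surely and both claimed inequalities are trivial, so we may assume $\mu>0$.)

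For part (i), apply Markov's inequality with the optimal choice $t=\ln(1+\varepsilon)>0$:
\[
\mathbf{Pr}[S\geq (1+\varepsilon)\mu]\ \leq\ e^{-t(1+\varepsilon)\mu}\,\mathbb{E}[e^{tS}]\ \leq\ \exp\!\Bigl(\bigl(\varepsilon-(1+\varepsilon)\ln(1+\varepsilon)\bigr)\mu\Bigr)\ =\ \Bigl(\tfrac{e^{\varepsilon}}{(1+\varepsilon)^{1+\varepsilon}}\Bigr)^{\!\mu}.
\]
It then remains to verify the scalar inequality $\varepsilon-(1+\varepsilon)\ln(1+\varepsilon)\leq -\varepsilon^{2}/3$, which I would do by setting $\phi(\varepsilon)=\varepsilon-(1+\varepsilon)\ln(1+\varepsilon)+\varepsilon^{2}/3$, noting $\phi(0)=\phi'(0)=0$, and controlling $\phi''(\varepsilon)=\tfrac23-\tfrac1{1+\varepsilon}$ (equivalently, substituting the alternating Taylor expansion of $\ln(1+\varepsilon)$); this gives $\phi(\varepsilon)\leq 0$ on the range $0<\varepsilon\leq 1$, which is exactly the regime in which the lemma is invoked. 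Since $\mu>0$, raising to the power $\mu$ preserves the inequality, proving (i). For part (ii), running the argument on $-S$ with $t>0$ gives $\mathbf{Pr}[S\leq (1-\varepsilon)\mu]\leq \exp\!\bigl(t(1-\varepsilon)\mu+(e^{-t}-1)\mu\bigr)$, and the optimal $t=-\ln(1-\varepsilon)>0$ (valid for $0<\varepsilon<1$) yields the bound $\bigl(e^{-\varepsilon}(1-\varepsilon)^{-(1-\varepsilon)}\bigr)^{\mu}$; the elementary inequality $-\varepsilon-(1-\varepsilon)\ln(1-\varepsilon)\leq -\varepsilon^{2}/2$, proved by the same short Taylor/second-derivative argument, then finishes the proof.

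The only genuine content is the two scalar inequalities; everything else—Markov's inequality, the convexity chord, factoring over independent variables, and the optimizing choice of $t$—is routine. The ``main obstacle'', such as it is, is merely being careful about the admissible range of $\varepsilon$: the clean forms $e^{-\varepsilon^{2}\mu/3}$ and $e^{-\varepsilon^{2}\mu/2}$ are what the Taylor estimates deliver for $\varepsilon\in(0,1]$ and $\varepsilon\in(0,1)$ respectively, which is precisely the range needed when Lemma~\ref{lem:no_heavy_edge} and its supporting arguments are applied.
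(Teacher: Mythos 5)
The paper states this lemma as a standard known result and gives no proof of its own, so there is nothing to compare against; your exponential-moment argument (Markov on $e^{tS}$, the convexity bound $\mathbb{E}[e^{tX_i}]\leq \exp((e^t-1)\mu_i)$, optimizing $t=\ln(1+\varepsilon)$ resp.\ $t=-\ln(1-\varepsilon)$, then the two scalar Taylor estimates) is the classical proof and is correct. Your caveat about the range of $\varepsilon$ is well taken and is in fact a flaw of the statement rather than of your proof: the bound in (i) with constant $1/3$ genuinely fails for large $\varepsilon$ (e.g.\ $\varepsilon=3$ makes $\varepsilon-(1+\varepsilon)\ln(1+\varepsilon)=3-4\ln 4\approx-2.55>-\varepsilon^2/3=-3$, and for binomials the Chernoff-optimal bound is tight up to polynomial factors), so ``for all $\varepsilon>0$'' should be read as $0<\varepsilon\leq 1$; for (ii) the case $\varepsilon\geq 1$ is trivial since $S\geq 0$. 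Since the paper only ever invokes the lemma with $\varepsilon=\tfrac12$, this restriction is harmless, and your proof fully covers the regime actually used.
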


Let $H$ be a complete bipartite graph with bipartition $(L,R)$ with weights taking a value in $[0,1]$.
Let $w_0$ be a constant on $[0,1]$, and $w(i,j)$ be the weight on the edge $\{i,j\}$ for each $i\in L$ and $j\in R$.
Fix $r \in \{1,2,\ldots,\min(|L|,|R|)\}$.
Let $M_{\mathrm{OPT}}$ denote the maximum-weight matching of size $r$ in $H$. 

An alternating cycle $C$ of $M_{\mathrm{OPT}}$ such that 
$$
C = (i_1, M_{\mathrm{OPT}}(i_2), i_2,\ldots,M_{\mathrm{OPT}}(i_{\ell}),i_{\ell}, M_{\mathrm{OPT}}(i_1),i_1)
$$
is \emph{acceptable} for $w_0$ if $w(i_t, M_{\mathrm{OPT}}(i_{t+1})) \geq w_0$ for every $t = 1,2,\ldots,\ell-1$ and $w(i_{\ell}, M_{\mathrm{OPT}}(i_1)) \geq w_0$.
Similarly, an alternating path $P$ of $M_{\mathrm{OPT}}$ such that 
$$
P = (M_{\mathrm{OPT}}(i_1), i_1, M_{\mathrm{OPT}}(i_2), i_2,\ldots,M_{\mathrm{OPT}}(i_{\ell}),i_{\ell}, j')
$$ 
is \emph{acceptable}  for $w_0$ if $j'\in R$ is not incident to any edges in $M_{\mathrm{OPT}}$, $w(i_t, M_{\mathrm{OPT}}(i_{t+1})) \geq w_0$ for every $t = 1,2,\ldots,\ell-1$ and $w(i_{\ell}, j') \geq w_0$.
We refer the number of the left vertices that appear in an alternating cycle (resp. path) as the \emph{length} of the alternating cycle (resp. path).
\begin{lemma}\label{lemma:the_weight_of_at_least}
If there exists an acceptable alternating path of $M_{\mathrm{OPT}}$ of length $\ell$ for $w_0$,
or an acceptable alternating cycle of $M_{\mathrm{OPT}}$ of length $\ell$ for $w_0$,
then every edge in matching $M_{\mathrm{OPT}}$ has a weight at least $1-\ell\cdot (1-w_0)$.
\end{lemma}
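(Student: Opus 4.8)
The plan is a standard alternating-structure exchange argument exploiting only the maximality of $M_{\mathrm{OPT}}$ among size-$r$ matchings in $H$ (uniqueness is not needed here). I would treat the cycle case first. Given an acceptable alternating cycle $C$ of length $\ell$, set $M' = M_{\mathrm{OPT}}\bigtriangleup E(C)$, i.e.\ delete from $M_{\mathrm{OPT}}$ the $\ell$ matching edges of $C$ and insert the $\ell$ non-matching edges of $C$. Since the $2\ell$ vertices listed in $C$ are pairwise distinct and each of them retains exactly one incident edge after the swap (indices mod $\ell$: $i_t$ trades $M_{\mathrm{OPT}}(i_t)$ for $M_{\mathrm{OPT}}(i_{t+1})$, and $M_{\mathrm{OPT}}(i_t)$ trades $i_t$ for $i_{t-1}$), $M'$ is again a matching with $|M'|=|M_{\mathrm{OPT}}|=r$. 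Maximality of $M_{\mathrm{OPT}}$ gives $w(M')\le w(M_{\mathrm{OPT}})$, which rearranges to: the total weight of the non-matching edges of $C$ is at most the total weight of the matching edges of $C$. By acceptability the former is at least $\ell w_0$, so the $\ell$ matching edges of $C$ have total weight at least $\ell w_0$; since each matching edge has weight at most $1$, any single matching edge of $C$ has weight at least $\ell w_0-(\ell-1)=1-\ell(1-w_0)$.

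The path case is identical in spirit, the only twist being that the swap must preserve the matching size. For an acceptable alternating path $P=(M_{\mathrm{OPT}}(i_1),i_1,\ldots,M_{\mathrm{OPT}}(i_\ell),i_\ell,j')$ with $j'$ not covered by $M_{\mathrm{OPT}}$, set $M'=M_{\mathrm{OPT}}\bigtriangleup E(P)$; this deletes the $\ell$ matching edges $\{M_{\mathrm{OPT}}(i_t),i_t\}$ and inserts the $\ell$ non-matching edges of $P$. After the swap the only vertex that loses its partner is $M_{\mathrm{OPT}}(i_1)$, while $j'$ acquires one, so $|M'|=r$ and $M'$ is a matching (again using that the $2\ell+1$ vertices of $P$ are distinct). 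As before, $w(M')\le w(M_{\mathrm{OPT}})$ forces the $\ell$ matching edges of $P$ to have total weight at least that of the $\ell$ non-matching edges, which is at least $\ell w_0$, and hence every matching edge of $P$ has weight at least $1-\ell(1-w_0)$.

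To get the statement as phrased — that \emph{every} edge of $M_{\mathrm{OPT}}$ is this heavy — one applies the bound above with a cycle or path anchored at the edge in question: for an arbitrary $e=\{i_1,M_{\mathrm{OPT}}(i_1)\}\in M_{\mathrm{OPT}}$ one uses an acceptable alternating cycle (or path) of length at most $\ell$ whose matching edges include $e$, which is exactly what the expander-type construction in the remainder of the proof of Lemma~\ref{lem:no_heavy_edge} supplies. I expect the one point that needs care is the bookkeeping that $M'=M_{\mathrm{OPT}}\bigtriangleup E(C)$ (resp.\ $M_{\mathrm{OPT}}\bigtriangleup E(P)$) is a genuine matching of size exactly $r$: for cycles this is the classical alternating-cycle exchange, and for paths it hinges on the defining requirement that the path terminates at a right vertex left uncovered by $M_{\mathrm{OPT}}$, so that the symmetric difference covers one new right vertex and frees one old one, leaving the size unchanged. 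Everything else reduces to the one-line inequality ``$\ell$ weights summing to at least $\ell w_0$, each at most $1$, so each at least $1-\ell(1-w_0)$.''
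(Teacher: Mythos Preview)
Your approach is essentially the same as the paper's: both perform the symmetric-difference exchange $M'=M_{\mathrm{OPT}}\bigtriangleup E(C)$ (resp.\ $E(P)$), use maximality of $M_{\mathrm{OPT}}$ to get $\sum_t w(i_t,M_{\mathrm{OPT}}(i_t))\ge \ell w_0$, and then the ``each weight $\le 1$'' cap to isolate any single term. Your caveat about ``every edge'' is well taken --- the paper's own derivation, despite the phrase ``for all $\{i,M_{\mathrm{OPT}}(i)\}\in M_{\mathrm{OPT}}$'', visibly requires $i\in\{i_1,\ldots,i_\ell\}$ in the displayed decomposition, and the extension to an arbitrary matching edge indeed comes from Lemma~\ref{lemma:the_length_of_alternating}, which anchors the alternating structure at that edge, exactly as you note.
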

\begin{proof}
    Consider the case where there is an acceptable alternating path of $M_{\mathrm{OPT}}$ of length $\ell$ for $w_0$. 
    Let $P = (M_{\mathrm{OPT}}(i_1), i_1, M_{\mathrm{OPT}}(i_2), i_2,\ldots,$ $M_{\mathrm{OPT}}(i_{\ell}),i_{\ell}, j')$ denote the acceptable alternating path.
    We construct a new matching $M_{\mathrm{new}}$ from $M_{\mathrm{OPT}}$ by replacing the edges in $P \cap M_{\mathrm{OPT}}$, namely,
    $$
    M_{\mathrm{new}} = (M_{\mathrm{OPT}} \setminus \{\{i_1,M_{\mathrm{OPT}}(i_{1})\}, \ldots,\{i_{\ell},M_{\mathrm{OPT}}(i_{\ell})\}\}) \cup \{\{i_1,M_{\mathrm{OPT}}(i_{2})\}, \ldots,\{i_{\ell},j'\}\}.
    $$
    We have $M_{\mathrm{new}}(i') = M_{\mathrm{OPT}}(i')$ for every $i'\in L$ with $i' \notin \{i_1,i_2,\ldots,i_{\ell}\}$ that appears in matching $M_{\mathrm{new}}$.
    Since $M_{\mathrm{OPT}}$ is the maximum-weight matching of size $r$, we obtain $\sum_{t=1}^{\ell} w(i_t, M_{\mathrm{OPT}}(i_t)) \geq \sum_{t=1}^{\ell} w(i_t, M_{\mathrm{new}}(i_t))$.
    Since $w(i,j)\leq 1$ for all $i\in L$ and $j\in R$, we have that
    for all $\{i, M_{\mathrm{OPT}}(i)\} \in M_{\mathrm{OPT}}$,
    \begin{align*}
    \sum_{t=1}^{\ell} w(i_t, M_{\mathrm{OPT}}(i_t)) 
    &= 
        w(i, M_{\mathrm{OPT}}(i)) + \sum_{i'\in \{i_1,i_2,\ldots,i_{\ell}\} \setminus \{i\}} w(i', M_{\mathrm{OPT}}(i')) \\
    &\le 
        w(i, M_{\mathrm{OPT}}(i)) + \ell -1.
    \end{align*}
    Thus, for all $\{i, M_{\mathrm{OPT}}(i)\} \in M_{\mathrm{OPT}}$, we obtain
    \begin{align*}
        w(i, M_{\mathrm{OPT}}(i)) 
        &\geq 
            \sum_{t=1}^{\ell} w(i_t, M_{\mathrm{OPT}}(i_t)) - \ell + 1 \\
        &\geq 
            \sum_{t=1}^{\ell} w(i_t, M_{\mathrm{new}}(i_t)) - \ell + 1 \\
        &= 
            \sum_{t=1}^{\ell-1} w(i_t, M_{\mathrm{OPT}}(i_{t+1})) + w(i_{\ell}, M_{\mathrm{OPT}}(i_{1})) - \ell + 1 \\
        &\geq 
            \ell\cdot w_0 - \ell + 1 \\
        &=  
            1 - \ell\cdot (1-w_0).
    \end{align*}
    This completes the proof.
    When there exists an alternating cycle of $M_{\mathrm{OPT}}$ of length $\ell$ for $w_0$, we can prove the statement similarly.
\end{proof}



\subsubsection{Proof of $($a$)$ of Lemma~\ref{lem:no_heavy_edge}}

First, consider $($a$)$ of Lemma~\ref{lem:no_heavy_edge}. Lemma~\ref{lemma:the_weight_of_at_least} and the following lemma (Lemma~\ref{lemma:the_length_of_alternating}) immediately yield $($a$)$ of Lemma~\ref{lem:no_heavy_edge}.
%
\begin{lemma}\label{lemma:the_length_of_alternating}
Suppose that distribution $\mathcal{D}$ is $(\alpha,\beta)$-PDF-bounded. Let $c_p$ be a sufficiently large constant such that $c_p > 1800\alpha^{-1} > 0$. Let $w_0 = 1-c_p \frac{\log n_p}{n_p}$.
Let $\bar{M}$ denote the maximum-weight matching of a fixed size $r$ between $N_p$ and $M(N_p)$.
\begin{itemize}
    \item[$(\mathrm{i})$]  For each $r=1,2,\ldots,\lfloor\frac{n_p}{300}\rfloor$, the probability that there exists an acceptable alternating path of $\bar{M}$ for $w_0$ of length $O(\log n_p)$ for $w_0$ is at least $1 - O(n_p^{-3})$. 
    \item[$(\mathrm{ii})$]  Furthermore, for each $r=\lfloor\frac{n_p}{300}\rfloor+1,\lfloor\frac{n_p}{300}\rfloor+2,\ldots,n_p$ the probability that there exists an acceptable alternating cycle of $\bar{M}$ for $w_0$ of length $O(\log n_p)$ is at least $1 - O(n_p^{-3})$. 
\end{itemize}
\end{lemma}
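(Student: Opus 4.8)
The plan is to discard every edge that is not \emph{heavy}---i.e.\ of weight below $w_0:=1-c_p\frac{\log n_p}{n_p}$---and to build the required alternating structure by a breadth-first search that grows geometrically, so that it closes within $O(\log n_p)$ levels. Since $\mathcal D$ is $(\alpha,\beta)$-PDF-bounded, every edge is heavy independently with some probability $P=\int_{w_0}^{1}f_{\mathcal D}(x)\,\mathrm{d}x\in[\alpha c_p\tfrac{\log n_p}{n_p},\,\beta c_p\tfrac{\log n_p}{n_p}]$. Write $\Gamma$ for the (random) bipartite graph of heavy edges on $N_p\cup M(N_p)$, and recall from the proof of Lemma~\ref{claim:expected_diff_p} that $\bar M=M^r_p$, so its matched agents $A_r\subseteq N_p$ and matched items $B_r\subseteq M(N_p)$ are nested in $r$ with $|A_r|=|B_r|=r$. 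Because $M(N_p)$---hence $\Gamma$ and $\bar M$---is a function of the weights, I would argue conditionally on its realized value; no union bound over realizations is needed, since it is enough for the conditional failure probability to be $O(n_p^{-3})$ for every outcome, provided one checks that conditioning does not thin the heavy edges inside $N_p\times M(N_p)$. Intuitively it does not: the items of $M(N_p)$ are exactly those class $p$ picked, so those edges are if anything \emph{more} likely to be heavy, while the maximality of $\bar M$ bounds only the \emph{sum} of any pair of ``cross'' edges (those that would replace a matched pair) and thus barely constrains a single one.

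The heart of the argument is an expansion estimate: with $K\ge 3$ a fixed constant and $\delta$ a small constant, with probability $1-O(n_p^{-3})$ every agent set $T$ with $1\le|T|\le\delta n_p$ has $|N_\Gamma(T)|\ge K|T|$, and---when $r>\lfloor n_p/300\rfloor$---the same holds inside $\Gamma[A_r,B_r]$, which has $r=\Theta(n_p)$ vertices on each side and per-edge probability at least $\alpha c_p\frac{\log r}{300\,r}$. This is the routine Chernoff-and-union-bound computation: a violating pair $(T,S)$ with $|T|=s$, $N_\Gamma(T)\subseteq S$, $|S|=Ks$ (so no heavy edge joins $T$ to the $\ge n_p/2$ vertices outside $S$) costs at most $\binom{n_p}{s}\binom{n_p}{Ks}(1-P)^{s\,n_p/2}\le n_p^{(K+1)s}\cdot n_p^{-\alpha c_p s/2}$, and $c_p>1800\alpha^{-1}$ leaves the exponent far in the negative, so the sum over $s$ (and over the symmetric count of right-hand sets) is $O(n_p^{-3})$; the $s=1$ instance also gives that every vertex of $\Gamma$ has a heavy neighbour.

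Given this, case~$(\mathrm i)$ ($r\le\lfloor n_p/300\rfloor$, so $|B_r|=r\le n_p/300$ while at least $\frac{299}{300}n_p$ items of $M(N_p)$ are left free by $\bar M$) runs a BFS in $\Gamma$ from any $i_1\in A_r$, alternating ``heavy edge, then $\bar M$-edge back''. If the agent set $R_t$ reached after $t$ levels ever sends a heavy edge to a free item, that gives an acceptable alternating path of length $t$; otherwise $N_\Gamma(R_t)\subseteq B_r$, whence (using the expander while $|R_t|\le\delta n_p$) $|R_{t+1}|\ge|N_\Gamma(R_t)|\ge K|R_t|$, so $|R_t|\ge K^t$, while also $|R_t|\le|N_\Gamma(R_t)|/K\le |B_r|/K\le n_p/(300K)\le\delta n_p$; these bounds collide once $t=O(\log n_p)$, so an acceptable path of length $O(\log n_p)$ must exist. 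Case~$(\mathrm{ii})$ ($r>\lfloor n_p/300\rfloor$) is the cycle version: form the digraph $D$ on $A_r$ with an arc $i\to i'$ exactly when $\{i,\bar M(i')\}$ is heavy (distinct arcs are distinct cross edges, so arcs appear independently with probability $\ge\alpha c_p\frac{\log r}{300\,r}$, i.e.\ with out-degrees $\Omega(\log r)$), and note that a directed cycle $i_1\to\dots\to i_\ell\to i_1$ in $D$ is precisely an acceptable alternating cycle of $\bar M$ of length $\ell$. A random digraph with out-degrees $\Omega(\log r)$ (the hidden constant large thanks to $c_p>1800\alpha^{-1}$) has directed diameter $O(\log r)$, hence a directed cycle of length $O(\log r)=O(\log n_p)$, with probability $1-O(r^{-3})=1-O(n_p^{-3})$; concretely, a BFS out of any vertex expands geometrically until it covers half of $A_r$, after which a back-arc to the start is present except with probability $(1-P)^{r/2}=O(r^{-3})$. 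Feeding the resulting path (resp.\ cycle) into Lemma~\ref{lemma:the_weight_of_at_least} then gives part~$(\mathrm a)$ of Lemma~\ref{lem:no_heavy_edge}.

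I expect two steps to be the real work. The first is pushing the expansion estimate down to failure probability $O(n_p^{-3})$ while keeping the BFS inside the regime where expansion is guaranteed---and making this uniform, for~$(\mathrm{ii})$, over all $r\in(\lfloor n_p/300\rfloor,n_p]$; this is what pins down both the constant $c_p>1800\alpha^{-1}$ and the split at $n_p/300$, and it needs careful bookkeeping of the binomial--exponential trade-off at every set size. The second, and more subtle, is that $M(N_p)$ and $\bar M$ are produced by the algorithm, not given in advance: one must make rigorous that after conditioning on them the heavy edges inside $N_p\times M(N_p)$ are still ``independent-like'' with per-edge probability $\Omega(\log n_p/n_p)$, so the expansion argument applies inside the algorithm's run. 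The intuition above (selected items are high-valued for class $p$; the max-weight property of $\bar M$ only restricts pairs of cross edges) makes this plausible, but turning it into a clean conditional statement---for instance via a monotonicity/FKG-type argument for the event $\{M(N_p)=S\}$---is where I expect the main difficulty to lie.
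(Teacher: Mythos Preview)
Your plan matches the paper's proof almost exactly: heavy-edge BFS alternating with $\bar M$-edges, geometric expansion via a Chernoff-plus-union-bound over set sizes, the $n_p/300$ split, and the identification of the conditioning on $M(N_p)$ as the real obstacle are all what the paper does.

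The device you are groping for in your last paragraph is the following. Rather than an FKG argument, the paper draws a size-$n_p$ set $R_u\subseteq I$ \emph{uniformly at random}, defines $\Gamma_u(S)$ relative to $R_u$, and then shows via Bayes' theorem that
\[
\mathbf{Pr}\bigl[\,|\Gamma_u(S)|\text{ small}\bigm| R_u=M(N_p)\,\bigr]\le\mathbf{Pr}\bigl[\,|\Gamma_u(S)|\text{ small}\,\bigr],
\]
i.e.\ that items with few heavy edges to $S$ are no \emph{more} likely to be selected by class $p$ in the round-robin---exactly your monotonicity, but packaged so that the unconditional right-hand side has genuinely i.i.d.\ indicators and the binomial/Chernoff estimate goes through with no further conditioning. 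The same trick is reused for case~(ii), which also addresses your digraph $D$: its arcs are \emph{not} independent after conditioning on $\bar M$ and $M(N_p)$ (contrary to your parenthetical), so the back-arc probability cannot be read off as $(1-P)^{r/2}$ directly; the paper instead runs the identical $R_u$/Bayes reduction on the event ``$n_p-|\Gamma(S)|>\tfrac12(n_p-|S|)$'' before doing the union bound. One further cosmetic difference: for large $r$ the paper does not hunt for a single back-arc to the root but shows that once $|S_t|>n_p/300$ the defect $n_p-|\Gamma(S_t)|$ halves at each step, so $S_t$ reaches all of $N_p$ (and hence the start vertex $i$) in $O(\log n_p)$ further steps, closing the cycle.
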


\begin{proof}
Let $W(i,j)$ denote the random variable which represents the weight of the edge $\{i,j\}$ in $G$. 
For each subset $S\subseteq N_p$, we define $\mathrm{\Gamma}(S)= \{ j \in M(N_p) \mid \{i,j\}\in E \land W(i,j)\geq w_0~\text{for some}\  i \in S \}$.

We first fix any size $r \in [n_p]$. Let $\bar{M}$ denote the maximum-weight matching of the fixed size $r$ between $N_p$ and $M(N_p)$.
Let $\bar{M}(N_p) \subseteq M(N_p)$ represent the 
We then fix an agent $i\in N_p$, which is incident to an edge in matching $\bar{M}$.
We define a sequence of sets $(S_t)_{t\geq 0}$ as follows. First, we set $S_0=\{i\}$.
If $\mathrm{\Gamma}(S_0) \subseteq \bar{M}(N_p)$, then let $S_1$ be the set of vertices in $N_p$ which are matched to some vertex in $\mathrm{\Gamma}(S_0)$ under matching $\bar{M}$. 
Similarly, for $t \geq 2$, as long as $\mathrm{\Gamma}(S_{t-1}) \subseteq \bar{M}(N_p)$ holds, let $S_t \subseteq N_p$ be the set of vertices which are matched to some vertex in $\mathrm{\Gamma}(S_{t-1})$ under matching $\bar{M}$.
If, at some step $\ell$, $\mathrm{\Gamma}(S_{\ell}) \setminus \bar{M}(N_p) \neq \emptyset$ occurs, then there exists an acceptable alternating path $P$ of $\bar{M}$ such that $P = (i_1 = i, \bar{M}(i_2), i_2,\ldots, \bar{M}(i_{\ell}),i_{\ell}, j')$ where $j' \in M(N_p) \setminus \bar{M}(N_p)$, and $W(i_t, \bar{M}(i_{t+1})) \geq w_0$ for every $t = 1,2,\ldots,\ell-1$ and $W(i_{\ell}, j') \geq w_0$.
Moreover, if $i\in S_{\ell}$ holds at some  step $\ell$, then we find  an acceptable alternating cycle of $\bar{M}$.

\paragraph{Proof of (i).}
We will show that for each $r=1,2,\ldots,\lfloor\frac{n_p}{300}\rfloor$, the number of steps $\ell$ for which $\mathrm{\Gamma}(S_{\ell}) \setminus \bar{M}(N_p) \neq \emptyset$ occurs is at most $O(\log n_p)$ with a probability of at least $1-O(n_p^{-3})$,
and that for each $r=\lfloor\frac{n_p}{300}\rfloor + 1, \lfloor\frac{n_p}{300}\rfloor + 2,\ldots, n_p$, the number of steps $\ell$ for which $i \in S_{\ell}$ happens is at most $O(\log n_p)$ with a probability of at least $1-O(n_p^{-3})$.

To this end, we select a size-$n_p$ subset of $I$ uniformly at random, denoted by $R_u \subseteq I$. 
For each subset $S\subseteq N_p$, let $\mathrm{\Gamma}_u(S)= \{ j \in R_u \mid \{i,j\}\in E \land W(i,j)\geq w_0  ~ \text{for some}\  i \in S \}$. 
Let ${\rho}_{\mathcal{D}} = \mathbf{Pr}[X \geq w_0 ]$, where $X$ is a random variable drawn from the probability distribution $\mathcal{D}$. From PDF-boundedness of $\mathcal{D}$, we have ${\rho}_{\mathcal{D}} \geq \alpha(1 - w_0) \geq \alpha c_p \log n_p/n_p$.
Let $\text{Bin}(n,p)$ denote the binomial distribution with parameters $n$ and $p$. 
For a fixed $S \subseteq N_p$, $|\mathrm{\Gamma}_u(S)|$ is distributed as $\mathrm{Bin}(n_p, {\rho}_{S})$,
where ${\rho}_{S} = 1-(1-{\rho}_{\mathcal{D}})^{|S|}$,
and we have $\mathbb{E}[|\mathrm{\Gamma}_u(S)|] = n_p {\rho}_{S}$.

Firstly, we consider the case when $r \in \{1,2,\ldots,\lfloor\frac{1}{2\alpha(1-w_0)} \rfloor\}$.
Let $S\subseteq N_p$ be a subset such that $ 1 \leq |S| \leq \frac{1}{2\alpha(1-w_0)}$. We now have 
$$
\rho_S \geq  1-\left(1-\alpha(1-w_0)\right)^{|S|}
\geq 1 - \mathrm{e}^{-\alpha(1-w_0) |S|} \geq \frac{\alpha (1-w_0)|S|}{2}.
$$
Here, we utilize two inequalities: $(1-t)^x \leq \mathrm{e}^{-t x}$ for $x\geq1$ and $0\leq t \leq 1$, and $1-\mathrm{e}^{-x} \geq \frac{x}{2}$ for $0\leq x\leq\frac{1}{2}$.
Thus, we obtain $\mathbb{E}[|\mathrm{\Gamma}_u(S)|] = n_p {\rho}_{S} \geq \frac{\alpha c_p |S| \log n_p}{2}$.

We now employ the Bayes' theorem technique, similar to the one used to prove Inequality~\eqref{ineq:condioning_equation_3}. 
We claim that, for any given $S\subseteq N_p$ such that $ 1 \leq |S| \leq \frac{1}{2\alpha(1-w_0)}$,
\begin{equation}\label{inequality:Bayse_2}
    \mathbf{Pr}\left[ |\mathrm{\Gamma}_u(S)|\leq \frac{\alpha c_p|S|\log n_p}{4}\, \middle\vert\, R_u = M(N_p) \right]
    \leq \mathbf{Pr}\left[ |\mathrm{\Gamma}_u(S)|\leq \frac{\alpha c_p|S|\log n_p}{4} \right].
\end{equation}
The subset $\mathrm{\Gamma}_u(S)$ represents the set of items adjacent to an agent in $S$ by edges with weights above a certain threshold, which implies that these items have a certain minimum value for an agent in $S$. The condition $|\mathrm{\Gamma}_u(S)|\leq \alpha c_p|S|\log n_p/4$ does not increase
the probability of these items in $R_u$ being chosen by an agent in $N_p$ in the round-robin algorithm. 
Therefore, if $S \subseteq N_p$ is fixed and $|\mathrm{\Gamma}_u(S)|\leq \alpha c_p|S|\log n_p/4$ is conditioned, then the probability of $R_u=M(N_p)$ does not increase compared to the case without the condition. 
Consequently, $\mathbf{Pr}\big[R_u = M(N_p) \mid  |\mathrm{\Gamma}_u(S)|\leq \alpha c_p|S|\log n_p/4 \big] \leq \mathbf{Pr}\big[ R_u = M(N_p)  \big]$. Hence, by applying Bayes' theorem, we obtain Inequality~\eqref{inequality:Bayse_2}.

Let $I[n_p]$ denote the family of size-$n_p$ subsets of $I$. We now have, for all $S\subseteq N_p$ such that $ 1 \leq |S| \leq \frac{1}{2\alpha(1-w_0)}$,
\begin{align*}
    &\mathbf{Pr}\left[ \exists\, S \subseteq N_p \text{ such that } 1 \leq |S| \leq \frac{1}{2\alpha(1-w_0)} \text{ and } |\mathrm{\Gamma}(S)|\leq \frac{\alpha c_p|S|\log n_p}{4} \right] \\
    &\leq
        \sum_{R_u \in I[n_p]}\  \sum_{S \subseteq N_P \text{ with } 1 \leq |S| \leq \frac{1}{2\alpha(1-w_0)} }  \mathbf{Pr}\left[ |\mathrm{\Gamma}_u(S)|\leq \frac{\alpha c_p|S|\log n_p}{4}\, \middle\vert \, R_u = M(N_p) \right] \\
    &\quad\cdot 
        \mathbf{Pr}[R_u = M(N_p)] \\
    &\leq 
        \sum_{S \subseteq N_P \text{ with } 1 \leq |S| \leq \frac{1}{2\alpha(1-w_0)} } \mathbf{Pr}\left[ |\mathrm{\Gamma}_u(S)|\leq \frac{\alpha c_p|S|\log n_p}{4} \right]\\
    &\leq
        \sum_{S \subseteq N_P \text{ with } 1 \leq |S| \leq \frac{1}{2\alpha(1-w_0)} } \mathbf{Pr}\Big[ |\mathrm{\Gamma}_u(S)|\leq \frac{1}{2} \cdot \mathbb{E}\big[|\mathrm{\Gamma}_u(S)|\big] \Big].
\end{align*}
Since $|\mathrm{\Gamma}_u(S)|$ is a sum of $n_p$ independent random variables, we can apply the Chernoff bound with $d = n_p$ and $\varepsilon = \frac{1}{2}$. This yields 
$
    \mathbf{Pr}\left[ |\mathrm{\Gamma}_u(S)|\leq \frac{1}{2} \cdot \mathbb{E}[|\mathrm{\Gamma}_u(S)|] \right]
    \leq \mathrm{exp}\left(-\frac{\mathbb{E}[|\mathrm{\Gamma}_u(S)|]}{8}\right)
    \leq \mathrm{exp}\left(-\frac{\alpha c_p |S|\log n_p}{16}\right).
$
Finally, the assumption of $\alpha c_p \geq 160$ leads 
\begin{align*}
    &\mathbf{Pr}\left[ \exists\, S \subseteq N_p \text{ such that } 1 \leq |S| \leq \frac{1}{2\alpha(1-w_0)} \text{ and } |\mathrm{\Gamma}(S)|\leq \frac{\alpha c_p|S|\log n_p}{4} \right] \\
    &\leq \sum_{s=1}^{ \frac{1}{2\alpha(1-w_0)}} \binom{n_p}{s} \cdot\mathrm{exp}\left(-\frac{\alpha c_p s\log n_p}{16}\right) \\
    &\leq
    \sum_{s=1}^{\frac{1}{2\alpha(1-w_0)}} n_p^{\left(1-\frac{\alpha c_p}{16}\right)s}  \\
    &\leq
    \sum_{s=1}^{ \frac{1}{2\alpha(1-w_0)}} n_p^{-9s}  
    =  O(n_p^{-3}).
\end{align*}
Therefore, for any subset $S \subseteq N_p$ such that $ 1 \leq |S| \leq \frac{1}{2\alpha(1-w_0)}$, we have $|\mathrm{\Gamma}(S)| \ge \frac{\alpha c_p|S|\log n_p}{4}$ with a probability of at least $1-O(n_p^{-3})$.
Hence, in the sequence of sets $(S_t)_{t\geq 0}$, if $1 \leq |S_t| \leq \frac{1}{2\alpha(1-w_0)}$, then $\frac{\alpha c_p \log n_p}{4} |S_t| \leq |\mathrm{\Gamma}(S_t)| = |S_{t+1}|$ with a probability of at least $1-O(n_p^{-3})$. 
This means that the size of $S_t$ increases by a factor of $\frac{\alpha c_p \log n_p}{4}$ until it exceeds $r$.
This implies that, for the sequence of sets $(S_t)_{t\geq 0}$, there exists $t_1 = \log n_p$ such that $|\mathrm{\Gamma}(S_{t_1})| > r$
with a probability of at least $1-O(n_p^{-3})$ since $r \leq \frac{1}{2\alpha(1-w_0)} = \frac{n_p}{2 \alpha c_p \log n_p} < \left(\frac{\alpha c_p \log n_p}{4}\right)^{t_1}$.
Thus, we can find an acceptable alternating path of length $O(\log n_p)$ with a probability of at least $1-O(n_p^{-3})$.

Secondly, we consider the case where $r \in  \{\lfloor \frac{1}{2\alpha(1-w_0)} \rfloor + 1,  \lfloor \frac{1}{2\alpha(1-w_0)} \rfloor+2,\ldots, \lfloor \frac{n_p}{300} \rfloor \}$.
We examine a subset $S\subseteq N_p$ such that $\frac{1}{2\alpha(1-w_0)} < |S| \leq \frac{n_p}{300}$.
In this case, we have
$$
\mathbb{E}\big[|\mathrm{\Gamma}_u(S)|\big] = n_p \rho_S
\geq n_p \left(1-\left(1-\alpha(1-w_0)\right)^{|S|}\right) 
\geq n_p \left(1 - \mathrm{e}^{-\alpha(1-w_0) |S|} \right) > n_p\left(1- \mathrm{e}^{-\frac{1}{2}}\right)
\geq \frac{n_p}{4}.
$$
Here, we utilize $(1-t)^x \leq \mathrm{e}^{-t x}$ for $x\geq1$ and $0\leq t \leq 1$, and $1- \mathrm{e}^{-\frac{1}{2}} \geq \frac{1}{4}$.

By a similar discussion to that which we use to obtain Inequality~\eqref{inequality:Bayse_2}, we obtain
\begin{align*}
\mathbf{Pr}\left[ \exists\, S\subseteq N_p \text{ such that } |\mathrm{\Gamma}(S)|\leq \frac{n_p}{8} \right] 
&\leq 
    \mathbf{Pr}\left[ \exists\, S\subseteq N_p \text{ such that } |\mathrm{\Gamma}_u(S)|\leq \frac{n_p}{8} \right] \\
&\leq  
    \sum_{S \subseteq N_P } \mathbf{Pr}\left[ |\mathrm{\Gamma}_u(S)|\leq \frac{|S|}{8} \right]\\
&\leq
    \sum_{S \subseteq N_P } \mathbf{Pr}\Big[ |\mathrm{\Gamma}_u(S)|\leq \frac{1}{2} \cdot \mathbb{E}\big[|\mathrm{\Gamma}_u(S)|\big] \Big].
\end{align*}
By the Chernoff bound with $d=n_p$ and $\varepsilon = \frac{1}{2}$, we get
$\mathbf{Pr}\Big[ |\mathrm{\Gamma}_u(S)|\leq \frac{1}{2} \cdot \mathbb{E}\big[|\mathrm{\Gamma}_u(S)|\big] \Big] \leq \mathrm{exp}\left(-\frac{n_p}{32}\right)$.
Thus,
\begin{align*}
    \mathbf{Pr}\left[ \exists\, S\subseteq N_p \text{ such that } |\mathrm{\Gamma}(S)|\leq \frac{n_p}{8} \right]
    &\leq 
        \sum_{s=\frac{n_p}{2 \alpha c_p \log n_p}}^{\frac{n_p}{300}} \binom{n_p}{s} \,\mathrm{exp}\left(-\frac{n_p}{32}\right) \\
    &\leq 
        \sum_{s=\frac{n_p}{2 \alpha c_p \log n_p}}^{\frac{n_p}{300}} \left(\frac{\mathrm{e}n_p}{s}\right)^s \,\mathrm{exp}\left(-\frac{n_p}{32}\right) \\
     &\leq 
        n_p (300\mathrm{e})^{\frac{n_p}{300}} \mathrm{e}^{-\frac{n_p}{32}} \\
     &=
        n_p \mathrm{e}^{-\left(\frac{1}{32} - \frac{1+\log 300}{300} \right)n_p} 
     = O(n_p^{-3}).
\end{align*}

Therefore, when $\frac{1}{2\alpha(1-w_0)} < r \leq \frac{n_p}{300}$, there exists $t_1+1 = O(\log n_p)$ such that $|\mathrm{\Gamma}(S_{t_1})| > \frac{n_p}{8} > \frac{n_p}{300} > r$
with a probability of at least $1-O(n_p^{-3})$.
This implies that there exists an acceptable alternating path of length $O(\log n_p)$ of $M_p^r$.

\paragraph{Proof of (ii).}
Next, we consider each round $r$ such that $r = \lfloor \frac{n_p}{300} \rfloor+1,\lfloor \frac{n_p}{300} \rfloor+2,\ldots, n_p$.
In this case, we will prove that there exists an acceptable alternating cycle of length $O(\log n_p)$ of $\bar{M}$ instead of an acceptable alternating path. 
For a subset $S\subseteq N_p$ such that $\frac{n_p}{300} < |S| $, 
we have 
$$
1-\rho_S = (1-\rho_{\mathcal{D}})^{|S|} 
< (1-\rho_{\mathcal{D}})^{\frac{n_p}{300}} 
\leq \left(1- \frac{\alpha c_p \log n_p}{n_p}\right)^{\frac{n_p}{300}} 
\leq \mathrm{e}^{-\frac{\alpha c_p}{300}\log n_p}
\leq n_p^{-\frac{\alpha c_p}{300}}.
$$ 
We consider the probability that there exists a subset $S \subseteq N_p$ such that $\frac{n_p}{300} < |S| $ and $n_p - |\mathrm{\Gamma}(S)| \leq \frac{1}{2} \left( n_p - |S| \right)$.
By a Bayes's theorem technique which we utilize to get Inequality~\eqref{inequality:Bayse_2}, we have
\begin{align*}
    \mathbf{Pr}\left[\exists\, S \subseteq N_p \text{ such that }  n_p - |\mathrm{\Gamma}(S)| > \frac{1}{2} \left( n_p - |S| \right) \right] 
    \leq
    \mathbf{Pr}\left[\exists\, S \text{ such that }  n_p - |\mathrm{\Gamma}_u(S)| > \frac{1}{2} \left( n_p - |S| \right) \right].
\end{align*}
Then,
\begin{align*}
    &\mathbf{Pr}\left[\exists\, S \subseteq N_p \text{ such that }  n_p - |\mathrm{\Gamma}(S)| > \frac{1}{2} \left( n_p - |S| \right) \right] \\
    &\le
    \mathbf{Pr}\left[\exists\, S \text{ such that }  n_p - |\mathrm{\Gamma}_u(S)| > \frac{1}{2} \left( n_p - |S| \right) \right] \\
    &\leq \sum_{t=0}^{\frac{ 299 n_p}{300}}
    \mathbf{Pr}\left[|S|=n_p-t, |\mathrm{\Gamma}_{u}(S)|\leq n_p - t/2 \right]  \\
    &= \sum_{t=0}^{\frac{ 299 n_p}{300}}
    \mathbf{Pr}\left[\exists S\subseteq N_p, T\subseteq R_u : |S|=n_p-t,\,|T|=n_p - t/2, \mathrm{\Gamma}_{u}(S)\subseteq T \right]  \\
    &\leq \sum_{t=0}^{\frac{ 299 n_p}{300}}
    \sum_{\substack{\exists S\subseteq N_p, T\subseteq R_u \\ |S|=n_p-t,\, |T|=n_p-t/2}} \mathbf{Pr}\left[\mathrm{\Gamma}_{u}(S)\subseteq T  \right]  \\
    &\leq \sum_{t=0}^{\frac{299 n_p}{300}} \binom{n_p}{t}\cdot \binom{n_p}{t/2} \cdot (1-\rho_{S})^{t/2}  \\
    &\leq \sum_{t=1}^{\frac{ 299 n_p}{300}} 
    \left(\frac{\mathrm{e}n_p}{t}\right)^{t} \cdot
    \left(\frac{2\mathrm{e}n_p}{t}\right)^{t/2} \cdot
    n_p^{-\frac{\alpha c_p t}{600}} \\
    &\leq \sum_{t=1}^{\frac{ 299 n_p}{300}} 
    \left(\frac{2\mathrm{e}n_p}{t}\right)^{2t} \cdot
    n_p^{-\frac{\alpha c_p t}{600}} \\
    &\leq \sum_{t=1}^{\frac{ 299 n_p}{300}} 
    \left(\frac{2\mathrm{e}n_p}{t}\right)^{2t} \cdot
    n_p^{-3t} \\
    &= O(n_p^{-3})
\end{align*}
where we utilize the fact that $\alpha c_p>1800$.
Therefore, for a subset $S\subseteq N_p$ with $\frac{n_p}{300} < |S| $, $n_p - |\mathrm{\Gamma}(S)| \leq \frac{1}{2} \left( n_p - |S| \right)$ holds with a probability of at least $1-O(n_p^{-3})$. Thus, in the sequence, if $\frac{n_p}{300} < |S_t| $, then
\[
    n_p - |S_{t+1}| \leq \frac{1}{2}(n_p - |S_t|) < \frac{299 n_p}{600},
\]
which implies $|S_{t+1}| > \frac{311 n_p}{600} > \frac{n_p}{300}$. 
Let $t_2=\left\lceil \log \frac{598 n_p}{600} \right\rceil $. 
Applying the above inequality repeatedly for $t=t_1+1,t_1+2,\ldots,t_1+t_2+1$, we get 
\[
    n_p - |S_{t_1+t_2}| \leq \frac{1}{2^{t_2}}(n_p - |S_{t_1}|) < \frac{600 }{598 n_p} \frac{299 n_p}{600} < 1.
\]
This means that $|S_{t_1+t_2+1}|=n_p$ and $S_{t_1+t_2+1}$ includes all the vertices in $N_p$. Thus, $i\in S_{t_1+t_2+1}$ and, therefore, we can find an acceptable alternating cycle of length $\ell = t_1 +  t_2 +1 \le 5\log n_p$.
\end{proof}


\subsubsection{Proof of $($b$)$ of Lemma~\ref{lem:no_heavy_edge}}

To prove $($b$)$ of Lemma~\ref{lem:no_heavy_edge}, we prove the following lemma. We will obtain $($b$)$ of Lemma~\ref{lem:no_heavy_edge} from Lemmas~\ref{lemma:the_weight_of_at_least} and~\ref{lemma:the_length_of_alternating_for_M_N_q}
\begin{lemma}\label{lemma:the_length_of_alternating_for_M_N_q}
Suppose that distribution $\mathcal{D}$ is $(\alpha,\beta)$-PDF-bounded. Let $c_q  >12 c_p + 80\alpha^{-1} $ be a constant, and $w_0 = 1-c_q \frac{(\log \min(n_p,n_q))^3}{\min(n_p,n_q)}$.
Let $\bar{M}$ denote the maximum-weight matching of a fixed size $r$ between $N_p$ and $M(N_q)$.
\begin{itemize}
    \item[$(\mathrm{i})$]  For each $r=1,2,\ldots,\lfloor\frac{\min(n_p,n_q)}{10}\rfloor$, between $N_p$ and $M^r(N_q)$, the probability that there exists an acceptable alternating path of $\bar{M}$ for $w_0$ of length $O(\log \min(n_p,n_q))$ is at least $1 - O(\min(n_p,n_q)^{-3})$, and
    \item[$(\mathrm{ii})$]  for each $r=\lfloor\frac{\min(n_p,n_q)}{10}\rfloor + 1 ,\lfloor\frac{\min(n_p,n_q)}{10}\rfloor + 2, \ldots, \min(n_p,n_q)$,  the probability that there exists an acceptable alternating cycle of $\bar{M}$ for $w_0$ of length $O(\log \min(n_p,n_q))$ is at least $1 - O(\min(n_p,n_q)^{-3})$. 
\end{itemize}
\end{lemma}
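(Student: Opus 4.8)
The plan is to re-run, essentially verbatim, the neighbourhood-expansion argument used to prove Lemma~\ref{lemma:the_length_of_alternating}, but on the bipartite graph between $N_p$ and $M(N_q)$, and to absorb the one new source of dependence --- the set $M(N_q)$ is itself random and correlated with the weights --- with the uniform-random-bundle device already used for Lemma~\ref{claim:expected_diff_q}. Fix $r$, let $\bar M$ be the (a.s.\ unique, by Lemma~\ref{lemma:uniqueness_of_maximum_weight_matching}) maximum-weight matching of size $r$ between $N_p$ and $M(N_q)$, pick an agent $i\in N_p$ covered by $\bar M$, and grow sets $S_0=\{i\}\subseteq S_1\subseteq\cdots$ where $\Gamma(S)=\{\,j\in M(N_q):W(i',j)\ge w_0\text{ for some }i'\in S\,\}$ and $S_{t+1}$ is the set of agents that $\bar M$ matches into $\Gamma(S_t)$. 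The expansion stops the first time $\Gamma(S_t)$ leaves the $r$ items covered by $\bar M$ (giving an acceptable alternating path to an uncovered item, which is the regime $r\le\lfloor\min(n_p,n_q)/10\rfloor$, where $n_q-r$ uncovered items remain) or the first time $S_t$ swallows all $r$ left-endpoints of $\bar M$ (giving an acceptable alternating cycle through $i$, the complementary regime). As in Lemma~\ref{lemma:the_length_of_alternating}, everything hinges on $|\Gamma(S)|$ growing by a large (polylogarithmic in $\min(n_p,n_q)$) factor in $|S|$, split into an early doubling phase ($|S|$ up to a small constant fraction of $\min(n_p,n_q)$) and a late shrink-the-complement phase ($n_q-|\Gamma(S)|\le\tfrac12(n_q-|S|)$); each is a Chernoff estimate on $|\Gamma(S)|$, whose mean is $\Theta(n_p\rho_{\mathcal D})$ with $\rho_{\mathcal D}=\mathbf{Pr}[\mathcal D\ge w_0]\ge\alpha(1-w_0)=\alpha c_q(\log\min(n_p,n_q))^3/\min(n_p,n_q)$, unioned over the $\le\binom{n_p}{s}$ subsets of each size $s$. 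Choosing $c_q>12c_p+80\alpha^{-1}$ makes the failure probabilities sum to $O(\min(n_p,n_q)^{-3})$; the cubic power of the logarithm (one power more than in Lemma~\ref{lemma:the_length_of_alternating}) is precisely the slack spent on the dependence issue below.

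The dependence on $M(N_q)$ is handled as in Lemma~\ref{claim:expected_diff_q}: introduce a uniformly random $R_u\subseteq I$ with $|R_u|=n_q$, condition on $R_u=M(N_q)$ (an event of probability $\binom{m}{n_q}^{-1}$ by item-symmetry), sum over $R_u$, and write $\Gamma_u(S)$ for the analogue of $\Gamma(S)$ inside $R_u$. The subtle point --- and the one place where the argument genuinely departs from Lemma~\ref{lemma:the_length_of_alternating} --- is that conditioning on $R_u=M(N_q)$ pushes the bad event ``$|\Gamma_u(S)|$ small'' the \emph{wrong} way: membership of $j$ in $M(N_q)$ records that class $p$ passed over $j$, which biases $p$'s weights to $j$ downward, so the Bayes' theorem step of Lemma~\ref{lemma:the_length_of_alternating} is unavailable (indeed both events are decreasing in the weights from $N_p$ to $R_u$, so conditioning only makes ``few heavy edges'' more likely). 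What saves us is that the bias is mild and quantifiable. If $j\in M(N_q)$ then, except for at most one item (the one $q$ takes in round~$1$ when $q$ precedes $p$), class $p$ declined $j$ in round~$1$, so $\max_{i'\in N_p}W(i',j)\le\theta$ where $\theta$ is $p$'s largest round-$1$ marginal value; by condition~\ref{item:assumption-b} there are at least $m-k$ items available and $n_p$ free $p$-agents in round~$1$, so $\theta$ is the maximum of $\Omega(n_pm)$ i.i.d.\ $(\alpha,\beta)$-PDF-bounded draws, whence with probability $1-O(\min(n_p,n_q)^{-3})$ one has $1-\theta=O\!\bigl(\tfrac{\log\min(n_p,n_q)}{n_pm}\bigr)\ll 1-w_0$. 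On this high-probability event each surviving weight $W(i',j)$ is merely conditioned to lie in $[0,\theta]$ with $\theta>w_0$, so $\mathbf{Pr}[W(i',j)\ge w_0\mid W(i',j)\le\theta]\ge\tfrac{\alpha}{2\beta}\,\mathbf{Pr}[W(i',j)\ge w_0]$; hence $\mathbb E[|\Gamma(S)|]$ shrinks only by a factor depending on $\alpha,\beta$, and all the Chernoff/union estimates of the previous paragraph survive with the enlarged constant $c_q$. Combining the path/cycle dichotomy then yields an acceptable alternating structure of length $O(\log\min(n_p,n_q))$ with probability $1-O(\min(n_p,n_q)^{-3})$, which is Lemma~\ref{lemma:the_length_of_alternating_for_M_N_q}; feeding it into Lemma~\ref{lemma:the_weight_of_at_least} loses one further logarithm and gives part~(b) of Lemma~\ref{lem:no_heavy_edge}.

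The main obstacle is exactly this correlation between $M(N_q)$ and the weights: unlike the $M(N_p)$ case the naive Bayes' inequality points the wrong way, so the proof cannot bound the conditional failure probability by the unconditional one and must instead exploit that $R_u=M(N_q)$ entails only the mild one-round cap $W(i',j)\le\theta$, with $\theta$ driven close to $1$ by the abundance of items guaranteed by condition~\ref{item:assumption-b}; this is also precisely where the extra power of $\log\min(n_p,n_q)$ in $w_0$ (hence the $(\log)^4$ in part~(b) of Lemma~\ref{lem:no_heavy_edge}) is consumed. Everything else --- the two-phase geometric growth, the path-versus-cycle split at $\lfloor\min(n_p,n_q)/10\rfloor$, and the Chernoff bounds --- is a direct transcription of the proof of Lemma~\ref{lemma:the_length_of_alternating}.
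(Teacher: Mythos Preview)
Your high-level plan (neighbourhood expansion, path/cycle split, union bound over subsets) matches the paper, but your treatment of the dependence between $M(N_q)$ and the weights has a real gap. You argue that for $j\in M(N_q)$ the only constraint on $W(i',j)$ is the round-$1$ rejection, giving the cap $\max_{i'\in N_p}W(i',j)\le\theta$ with $1-\theta=O\!\bigl(\tfrac{\log\min(n_p,n_q)}{n_pm}\bigr)$. But class~$p$ passes over~$j$ in \emph{every} round $r'$ up to the round in which class~$q$ takes it, and each such pass imposes the further constraint that the marginal gain of~$j$ to~$p$ (hence, via the single-edge augmenting path, $W(i',j)$ for every still-unmatched $i'$, and via longer augmenting paths, a related bound for already-matched $i'$) is at most $W(P^{r'})$, the augmenting-path gain that~$p$ actually realised in round~$r'$. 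For large~$r'$ this is far more restrictive than your round-$1$ cap: $W(P^{r'})$ is only guaranteed to exceed $1-O\!\bigl(c_p\tfrac{(\log n_p)^3}{n_p}\bigr)$, and establishing even that requires invoking part~(a) of Lemma~\ref{lem:no_heavy_edge} on the edges of~$p$'s own matching. Your sentence ``each surviving weight $W(i',j)$ is merely conditioned to lie in $[0,\theta]$'' is therefore not justified, and with it your lower bound on $\mathbf{Pr}[W(i',j)\ge w_0\mid j\in M(N_q)]$. This is not a cosmetic issue: the constant $12c_p$ in the hypothesis $c_q>12c_p+80\alpha^{-1}$ is exactly the price of the late-round conditioning, and your argument as written gives no reason for it to appear.

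The paper's route is different in structure: it does not use the uniform-random-bundle device for this lemma at all. Instead it first proves the per-edge bound $\mathbf{Pr}[W(i,j)\le w_0]\le 1-c_0\tfrac{(\log\min(n_p,n_q))^3}{\min(n_p,n_q)}$ for each $i\in N_p$ and $j\in M(N_q)$, by analysing the augmenting-path constraints that~$p$'s repeated rejection of~$j$ imposes and bounding those path weights via part~(a) of Lemma~\ref{lem:no_heavy_edge}; this is where the factor~$12c_p$ enters and where the extra $\log$ power is spent. Once that per-edge estimate is in hand, the expansion argument runs directly on $M(N_q)$ via a union bound over $S\subseteq N_p$ and $T\subseteq M(N_q)$, with no need to decouple the random bundle from the weights. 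To repair your approach you would need to replace the round-$1$ cap by the full multi-round analysis, which amounts to reproving the paper's per-edge bound and in particular to using part~(a).
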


\begin{proof}
Let $w_0 = 1-c_q \frac{(\log \min(n_p,n_q))^3}{\min(n_p,n_q)}$. 
In contrast to the proof of Lemma~\ref{lemma:the_length_of_alternating}, we consider edges and matchings between $N_p$ and $M(N_q)$.
Let $W(i,j)$ denote the random variable which describes the weight of an edge $\{i,j\}$ in $G$.
First, we claim that for each $i\in N_p$ and $j\in M(N_q)$, 
there exists a constant $c_0$ such that $0 < c_0\le \alpha(c_q - 12c_p) $ and 
\begin{equation}\label{equation:W(i,j)_w_0}
    \mathbf{Pr}[W(i,j)\le w_0] 
     \le
    1-c_0\frac{(\log \min(n_p,n_q))^3}{\min(n_p,n_q)}.
\end{equation}
Let $i^r \in N_p$ be the agent that selects a new item in round $r$, and $j^r_p \in M(N_p)$ be the item such that $j^r_p \in M^r(N_p) \setminus M^{r-1}(N_p)$. Moreover, let $j^r_q \in M(N_q)$ be the item such that $j^r_q \in M^r(N_q) \setminus M^{r-1}(N_q)$.
Furthermore, let $P^r = (i^r, M^{r-1}(i_2), i_2, M^{r-1}(i_3),i_3,\ldots, M^{r-1}(i_{a}),i_{a}, j^r_p)$ be the alternating path that is found when class $p$ selects a new item in round $r$. Let $W(P^r) = W(i^r, M^{r-1}(i_2)) - W(M^{r-1}(i_2), i_2) + W(i_2, M^{r-1}(i_3)) - \cdots - W(M^{r-1}(i_{a}),i_{a}) + W(i_{a}, j^r_p)$. Here, we have $\{i^r, M^{r-1}(i_2)\}, \{i_2, M^{r-1}(i_3)\}, \ldots, \{i_{a}, j^r_p\} \in M^r$. Let $\ell(P^r)$ be the length of $P^r$.

We first consider the case where $p < q$. 
In this case, we have $W(i,j^r_q) \le W(P^r)$ for each $i = i^r,i^r+1,\ldots, i^{n_p}$.
From $($a$)$ of Lemma~\ref{lem:no_heavy_edge}, we have, at least a probability of $1-O(n_p^{-3})$, 
\[
    W(P^r) \ge (\ell(P_r)+1)\cdot \left(1-c_p\frac{(\log n_p)^2}{n_p}\right) - \ell(P_r) = 1 - (\ell(P_r)+1)\cdot c_p\frac{(\log n_p)^2}{n_p}
    \ge 1 - 6c_p\frac{(\log n_p)^3}{n_p}.
\]
Let $d\in [0,1]$ be a real number and $\mathcal{D}_{\le d}$ be  the conditional distribution of $\mathcal{D}$ on $[0,d]$.
Let$X \sim \mathcal{D}$, $X_1 \sim \mathcal{D}_{\le W(P_r)}$ and $X_2 \sim \mathcal{D}_{\le 1 - 6c_p(\log n_p)^2/n_p}$. Then, for each $i = i^r,i^r+1,\ldots, i^{n_p}$, we have
\begin{align}\label{inequality:c_q_6c_p}
    \mathbf{Pr}[W(i,j^r_q)\ge w_0] 
    &= \mathbf{Pr}[X_1\ge w_0] \nonumber\\
    &\ge \mathbf{Pr}[X_2\ge w_0] \nonumber\\
    &= \mathbf{Pr}[X \ge w_0 \mid X \le 1 - 6c_p (\log n_p)^3/n_p] \nonumber\\
    &\ge \mathbf{Pr}[w_0 \le X \le 1 - 6c_p (\log n_p)^3/n_p] \nonumber\\
    &\ge \alpha \cdot \mathbb{E}[1 - 6c_p (\log n_p)^3/n_p - w_0] \nonumber\\
    &\ge \alpha \cdot \left( 1 - 6c_p \frac{(\log n_p)^3}{n_p} - 1+c_q \frac{(\log \min(n_p,n_q))^3}{\min(n_p,n_q)}\right)\cdot \left( 1 - O(n_p^{-3})\right) \nonumber\\
    &\ge \alpha \cdot (c_q - 6c_p) \frac{(\log \min(n_p,n_q))^3}{\min(n_p,n_q)} + O(n_p^{-3}).
\end{align}
Here, we utilize that $\frac{(\log \min(n_p,n_q))^3}{\min(n_p,n_q)} \ge \frac{(\log n_p)^3}{n_p}$. Since $c_q - 6c_p > 0$, we have $\mathbf{Pr}[W(i,j^r_q)\le w_0] =
    1-O\left(\frac{(\log \min(n_p,n_q))^3}{\min(n_p,n_q)}\right)$ for all $i = i^r,i^r+1,\ldots, i^{n_p}$.

Secondly, we consider agents $i = i^1,i^2,\ldots, i^{r-1}$, and conditions on $W(i, j^r_q)$. When class $p$ chooses item $j^r_p$ in round $r$, a condition is added to the weight on edge $\{i,j^r_q\}$. 
That is, the weight of any alternating path starting from $i^r$, passing through $i$, and reaching $j^r_q$ is bounded above by $W(P^r)$. Let $P_i$ denote such a path from $i^r$ to $i$, i.e., $P_i = (i^r,i_2,M^{r-1}(i_2),\ldots,M^{r-1}(i),i)$. For path $P_i$, we have $W(P_i) = W(i^r,i_2) - W(i_2,M^{r-1}(i_2)) + \cdots - W(M^{r-1}(i),i) \ge \ell(P_i) - \ell(P_i) \cdot \left(1-5c_p \frac{(\log n_p)^2}{n_p}\right) \ge  6c_p \frac{(\log n_p)^3}{n_p}$ with a probability of at least $1-O(n_p^{-3})$. Hence, $W(P^r) - \max W(P_i) \ge  1 - 12c_p\frac{(\log n_p)^3}{n_p}$ with a probability of at least $1-O(n_p^{-3})$.
Then, from a discussion similar to Inequality~\eqref{inequality:c_q_6c_p}, we obtain 
\[
\mathbf{Pr}[W(i,j^r_q)\ge w_0] \ge 
    \alpha \cdot (c_q - 12c_p) \frac{(\log \min(n_p,n_q))^3}{\min(n_p,n_q)} + O(n_p^{-3}).
\]
for all $i = i^1,i^2,\ldots, i^{r-1}$. Since $c_q > 12 c_p$, we conclude that $\mathbf{Pr}[W(i,j)\le w_0] \le 1- c_0\left(\frac{(\log \min(n_p,n_q))^3}{\min(n_p,n_q)}\right)$, where $0 < c_0\le \alpha(c_q - 12 c_p)$.

When $p>q$, we can show Inequality~\eqref{equation:W(i,j)_w_0} similarly.

We now prove that for each $r=1,2,\ldots,\lfloor\frac{\min(n_p,n_q)}{300}\rfloor$, between $N_p$ and $M^r(N_q)$, the probability that there exists an acceptable alternating path of length $O(\log \min(n_p,n_q))$ is at least $1 - O(\min(n_p,n_q)^{-3})$.
We fix size $r$ and let $\bar{M}$ be the maximum-weight matching of size $r$ between $N_p$ and $M(N_q)$ in $G$.
For each subset $S\subseteq N_p$, let $\mathrm{\Gamma}(S)= \{ j \in M(N_q) \mid \{i,j\}\in E \land W(i,j)\geq w_0~\text{for some}\  i \in S \}$.
Similarly to the discussion on sequences of sets in the proof of Lemma~\ref{lemma:the_length_of_alternating}, we fix $i\in N_p$.
We define $S_0=\{i\}$.
If $\mathrm{\Gamma}(S_0) \subseteq \bar{M}(N_p)$, then let $S_1$ be the set of vertices in $N_p$ which are matched to some vertex in $\mathrm{\Gamma}(S_0)$ under matching $\bar{M}$. 
For each $t = 2,3,\ldots$, if $\mathrm{\Gamma}(S_{t-1}) \subseteq \bar{M}(N_p)$ holds, then we define $S_t \subseteq N_p$ as the set of vertices which are matched to some vertex in $\mathrm{\Gamma}(S_{t-1})$ under matching $\bar{M}$.
If, at some step $\ell$, $\mathrm{\Gamma}(S_{\ell}) \setminus \bar{M}(N_p) \neq \emptyset$ occurs, then there exists an acceptable alternating path of $\bar{M}$.
If, at some  step $\ell$, $i\in S_{\ell}$ holds, then we find  an acceptable alternating cycle of $\bar{M}$.

First, for a subset $S\subseteq N_p$ such that $ 1\le |S| \le \frac{\min(n_p,n_q)}{20}$,
then we have
\begin{align*}
        &\mathbf{Pr}\left[\exists\, S \text{ such that } |\mathrm{\Gamma}(S)|  \leq 2|S| \right]  \\
        &\leq \sum_{s=1}^{\min(n_p,n_q)/10}
        \mathbf{Pr}\left[|S|=s, |\mathrm{\Gamma}(S)|\leq 2s-1 \right]  \\
        &= \sum_{s=1}^{\min(n_p,n_q)/10}
        \mathbf{Pr}\left[\exists S\subseteq N_p, T\subseteq M(N_q) : |S|=s,\,|T|=2s-1, \mathrm{\Gamma}(S)\subseteq T \right]  \\
        &\leq \sum_{s=1}^{\min(n_p,n_q)/10}
        \sum_{\substack{\exists S\subseteq N_p, T\subseteq M(N_q) \\ |S|=s,\, |T|=2s-1}} \mathbf{Pr}\left[\mathrm{\Gamma}(S)\subseteq T  \right]  \\
        &\leq \sum_{s=1}^{\min(n_p,n_q)/10}\ 
        \sum_{\substack{\exists S\subseteq N_p, T\subseteq M(N_q) \\ |S|=s,\, |T|=2s-1}}\  \prod_{i\in N_p,\, j\in M(N_q) \setminus T}\mathbf{Pr}\left[W(i,j) \le w_0 \right]  \\
        &\leq \sum_{s=1}^{\min(n_p,n_q)/10}
        \binom{n_p}{s}\cdot \binom{n_q}{2s-1}\cdot  \left( 1 - c_0 \frac{(\log \min(n_p,n_q))^3}{\min(n_p,n_q)} \right)^{s(n_q-2s+1)}  \\
        &\leq \sum_{s=1}^{\min(n_p,n_q)/10} n_p^s\cdot n_q^{2s}\cdot \mathrm{exp}\left(- c_0 \frac{(\log \min(n_p,n_q))^3}{\min(n_p,n_q)} s(n_q-2s+1)\right)  \\
        &= \sum_{s=1}^{\min(n_p,n_q)/10} \mathrm{exp}\left(- c_0 \frac{(\log \min(n_p,n_q))^3}{\min(n_p,n_q)} s(n_q-2s+1) +s \log n_p +2s\log n_q\right) \\
        &\le \sum_{s=1}^{\min(n_p,n_q)/10} \mathrm{exp}\left(- \frac{c_0}{20} s(\log \min(n_p,n_q))^3 +s \log n_p +2s\log n_q\right) \\
        &\le \sum_{s=1}^{\min(n_p,n_q)/10}  \mathrm{exp}\left(s\cdot \left(- \frac{c_0}{20} (\log \min(n_p,n_q))^3 + O((\log \min(n_p,n_q) )^2)\right) \right)\\
        &= O(\min(n_p,n_q)^{-3}).
    \end{align*}
    Here, we utilize $\frac{n_q-\min(n_p,n_q)/10-1}{\min(n_p,n_q)} \ge 1/20$ and $c_0 \ge 80$. We also use condition~\ref{item:assumption-c}, i.e., $\max(n_p,n_q) \leq C\cdot \min(n_p,n_q)^{5/4}$.
Hence, in the sequence of sets $(S_t)_{t\geq 0}$, there exists $t_1 = \log \min(n_p,n_q)$ such that $|\mathrm{\Gamma}(S_{t_1})| > r$ when $r<\min(n_p,n_q)/10$
with a probability of at least $1-O(\min(n_p,n_q)^{-3})$.
Thus, we can find an acceptable alternating path of length $O(\log \min(n_p,n_q))$ with a probability of at least $1-O(\min(n_p,n_q)^{-3})$.

Second, we consider a subset $S\subseteq N_p$ such that $\frac{\min(n_p,n_q)}{10} < |S| \leq \min(n_p,n_q)$. Let $t=\min(n_p,n_q)-|S|$. Then,
\begin{align*}
    &\mathbf{Pr}\left[\exists\, S \text{ such that }  \min(n_p,n_q) - |\mathrm{\Gamma}(S)| > \frac{1}{2} \left( \min(n_p,n_q) - |S| \right) \right] \\
    &\leq 
        \sum_{t=0}^{n_p - \frac{\min(n_p,n_q)}{10}}
        \mathbf{Pr}\left[|S|=\min(n_p,n_q)-t,\, |\mathrm{\Gamma}(S)|\leq \min(n_p,n_q) - t/2 \right]  \\
    &= 
        \sum_{t=0}^{n_p - \frac{\min(n_p,n_q)}{10}} \mathbf{Pr} [\exists S\subseteq N_p, T\subseteq M(N_q) :  |S|= \min(n_p,n_q)-t,\,|T|=\min(n_p,n_q) - t/2, \mathrm{\Gamma}(S)\subseteq T]  \\
    &\le 
        \sum_{t=0}^{n_p - \frac{\min(n_p,n_q)}{10}}
        \sum_{\substack{\exists S\subseteq N_p, T\subseteq M(N_q) \\ |S|=\min(n_p,n_q)-t,\, |T|=\min(n_p,n_q)-t/2}} \mathbf{Pr}\left[\mathrm{\Gamma}(S)\subseteq T  \right]  \\
    &\le 
        \sum_{t=0}^{n_p - \frac{\min(n_p,n_q)}{10}} n_p^{\min(n_p,n_q) - t}\cdot n_q^{\min(n_p,n_q)-t/2}  \cdot\left( 1 - c_0 \frac{(\log \min(n_p,n_q))^3}{\min(n_p,n_q)} \right)^{(n_p-\min(n_p,n_q)+t)(n_q-\min(n_p,n_q)+t/2)}  \\
    &\leq \sum_{t=0}^{n_p - \frac{\min(n_p,n_q)}{10}} n_p^{\min(n_p,n_q) - t}\cdot n_q^{\min(n_p,n_q)-t/2} \\
    &\quad \cdot
        \exp\left(  - c_0 \frac{(\log \min(n_p,n_q))^3}{\min(n_p,n_q)} (n_p-\min(n_p,n_q)+t)(n_q-\min(n_p,n_q)+t/2)\right)  \\
    &= 
        O(\min(n_p,n_q)^{-3}).
\end{align*}
Therefore, from these calculation,
we find a desired alternating cycle of length $\ell = O(\log \min(n_p,n_q))$ with a probability of at least $1-O(\min(n_p,n_q)^{-3})$.
\end{proof}


\subsection{Putting All the Pieces Together}\label{sec:final_together_proof}
In this section, we ...

\subsubsection{Lower Bound for $\mathbb{E}[v_p(M(N_p))] - \mathbb{E}[v_p(M(N_q))] $}
Under two conditions~\ref{item:assumption-b} and~\ref{item:assumption-c} of Theorem~\ref{thm:main_theorem_1}, we show that under the matching produced by the round-robin algorithm, any class does not envy other class in expectation if different classes have the same size the minimum number of classes is large. Formally,
\begin{restatable}{lemma}{LemThmDiffBound}\label{lemma:expected_diff}
    Suppose that $\mathcal{D}$ is $(\alpha,\beta)$-PDF-bounded, and two conditions~\ref{item:assumption-b} and~\ref{item:assumption-c} hold. 
    Then, for every pair of classes $p,q\in [k]$, we have %
    \begin{align*}\label{inequality:envy-free-in-expectation}
        &\mathbb{E}[v_p(M(N_p))] - \mathbb{E}[v_p(M(N_q))]  \\
        &\ge
            \left(1- \frac{1}{2\alpha k}\right) \left(n_p- \min(n_p,n_q) \right)+ \left(\frac{\alpha}{\beta} - \frac{n_q+1}{\alpha (m-k)} \right) \frac{\min(n_p,n_q)}{n_q} - O(\min(n_p,n_q)^{-1}).
    \end{align*}
\end{restatable}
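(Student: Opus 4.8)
The plan is to obtain the bound by subtracting the upper estimate for $\mathbb{E}[X_{pq}]=\mathbb{E}[v_p(M(N_q))]$ supplied by Lemma~\ref{claim:expected_diff_q} from the lower estimate for $\mathbb{E}[X_p]=\mathbb{E}[v_p(M(N_p))]$ supplied by Lemma~\ref{claim:expected_diff_p}, and then using part~$(\mathrm{b})$ of Lemma~\ref{lem:no_heavy_edge} to replace the $\mathbb{E}[W_q^{(r')}]$ factors by numbers close to $1$. Condition~\ref{item:assumption-b} guarantees $m\ge k\max_{p'}(n_{p'}+2)\ge n$, so Lemma~\ref{claim:expected_diff_q} applies, and conditions~\ref{item:assumption-b} and~\ref{item:assumption-c} are exactly the hypotheses of Lemma~\ref{claim:expected_diff_p} and of part~$(\mathrm{b})$ of Lemma~\ref{lem:no_heavy_edge}. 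Writing $\mu=\min(n_p,n_q)$, the subtraction yields
\begin{align*}
    \mathbb{E}[v_p(M(N_p))]-\mathbb{E}[v_p(M(N_q))]\ \ge\ (n_p-\mu)-\frac{1}{\alpha}\sum_{r=1}^{n_p}\frac{1}{r}\sum_{r'=1}^{r}\frac{1}{m-r'k}+\frac{\alpha}{\beta}\sum_{r=1}^{\mu}\frac{1}{r}\sum_{r'=1}^{r}\frac{\mathbb{E}[W_q^{(r')}]}{n_q-r'+2},
\end{align*}
and everything after this is a careful but essentially routine estimation of nested harmonic sums.

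First I would split the sum inherited from Lemma~\ref{claim:expected_diff_p} at $r=\mu$. For the tail $\mu<r\le n_p$ (empty unless $n_p>n_q$), condition~\ref{item:assumption-b} gives $m\ge k(n_p+2)$, hence $m-r'k\ge m-n_pk\ge 2k$ for all $r'\le r\le n_p$; thus $\frac{1}{r}\sum_{r'=1}^{r}\frac{1}{m-r'k}\le\frac{1}{m-n_pk}\le\frac{1}{2k}$, and summing over $r$ the tail is at most $\frac{n_p-\mu}{2\alpha k}$. Together with the leading term $(n_p-\mu)$ this produces $\bigl(1-\tfrac{1}{2\alpha k}\bigr)(n_p-\mu)$.

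For the head $1\le r\le\mu$ I would use the elementary pointwise inequality
\begin{align*}
    \frac{1}{m-r'k}\ \le\ \frac{n_q+1}{(m-k)(n_q-r'+2)}\qquad(r'\ge 1),
\end{align*}
which, after cross-multiplying, is equivalent to $(r'-1)\bigl(k(n_q+2)-m\bigr)\le 0$ and hence holds since condition~\ref{item:assumption-b} gives $m\ge k(n_q+2)$. Therefore the head of the Lemma~\ref{claim:expected_diff_p} sum is at most $\frac{n_q+1}{\alpha(m-k)}T$ with $T:=\sum_{r=1}^{\mu}\frac{1}{r}\sum_{r'=1}^{r}\frac{1}{n_q-r'+2}$, while part~$(\mathrm{b})$ of Lemma~\ref{lem:no_heavy_edge} together with $W_q^{(r')}\in[0,1]$ gives $\mathbb{E}[W_q^{(r')}]\ge 1-c_q\frac{(\log\mu)^4}{\mu}-O(\mu^{-3})$, so the Lemma~\ref{claim:expected_diff_q} sum is at least $\frac{\alpha}{\beta}\bigl(1-O(\tfrac{(\log\mu)^4}{\mu})\bigr)T$. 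Subtracting, the head contributes at least $\bigl(\frac{\alpha}{\beta}-\frac{n_q+1}{\alpha(m-k)}\bigr)T-O\bigl(\tfrac{(\log\mu)^4}{\mu}\bigr)T$; the coefficient is nonnegative because $m-k\ge k(n_q+1)$ gives $\frac{n_q+1}{\alpha(m-k)}\le\frac{1}{\alpha k}<\frac{\alpha}{\beta}$ by condition~\ref{item:assumption-a}. Finally, using $T\ge\sum_{r=1}^{\mu}\frac{1}{r}\cdot\frac{r}{n_q+1}=\frac{\mu}{n_q+1}=\frac{\mu}{n_q}-O(\mu^{-1})$ (valid since $n_q\ge\mu$) together with $T\le\sum_{r=1}^{\mu}\frac{1}{n_q-r+2}\le\ln(n_q+1)=O(\log\mu)$ — the last equality using condition~\ref{item:assumption-c}, which keeps $n_q$ polynomially bounded in $\mu$ — the head contribution is at least $\bigl(\frac{\alpha}{\beta}-\frac{n_q+1}{\alpha(m-k)}\bigr)\frac{\mu}{n_q}-O(\mu^{-1})$, the remaining smaller-order factors being absorbed into the error term. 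Combining the estimates for the two ranges yields the stated inequality.

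The one genuinely delicate point is this last estimate: one must ensure that the correction caused by the edges of the maximum-weight matchings between $N_p$ and $M(N_q)$ having weight slightly below $1$ stays inside the $O(\mu^{-1})$ error and does not leak into the constant-order main term $\bigl(\frac{\alpha}{\beta}-\frac{n_q+1}{\alpha(m-k)}\bigr)\frac{\mu}{n_q}$. This is precisely why part~$(\mathrm{b})$ of Lemma~\ref{lem:no_heavy_edge} — rather than merely part~$(\mathrm{a})$ — is invoked, and why condition~\ref{item:assumption-c} is needed (it forces $\log n_q=O(\log\mu)$, so the polylogarithmic factors multiplying the weight correction remain negligible); the rest is bookkeeping on harmonic sums.
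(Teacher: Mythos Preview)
Your argument is essentially identical to the paper's: both combine Lemmas~\ref{claim:expected_diff_p} and~\ref{claim:expected_diff_q} with Lemma~\ref{lem:no_heavy_edge}(b), split the $X_p$-sum at $r=\mu$, bound the tail via $m-r'k\ge 2k$, and compare the head to $T=\sum_{r\le\mu}\frac{1}{r}\sum_{r'\le r}\frac{1}{n_q-r'+2}$ through the same pointwise inequality $\frac{1}{m-r'k}\le\frac{n_q+1}{(m-k)(n_q-r'+2)}$ before lower-bounding $T$ by $\mu/n_q$. The only notable difference is that the paper bounds $T$ above by $\frac{\pi^2}{6}+o(1)$ via a separate Lemma~\ref{lemma:bound_by_pi^2/6}, whereas you use the cruder $T=O(\log\mu)$; both suffice for absorbing the weight-correction error, and your explicit appeal to condition~\ref{item:assumption-a} to ensure the coefficient $\frac{\alpha}{\beta}-\frac{n_q+1}{\alpha(m-k)}$ is nonnegative is something the paper's proof also needs but leaves implicit.
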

We explain implications of Lemma~\ref{lemma:expected_diff}. 
If we have $k > \max(\frac{1}{2\alpha},\frac{\beta}{\alpha^2})$ and $m\ge k \max_{p}(n_p+2)$, then we obtain $1-\frac{1}{2\alpha k} > 0$ and $\frac{\alpha}{\beta} - \frac{n_q+1}{\alpha (m-k)} \ge \frac{\alpha}{\beta} - \frac{1}{\alpha k} > 0$.
Therefore, when $n_p\ge n_q$, there exists a positive constant $c>0$ such that the expected difference is at least $c - O(n_q^{-1})$. 
When $n_p<n_q$, the expected difference is at least $\left(\frac{\alpha}{\beta} - \frac{n_q+1}{\alpha (m-k)}\right)\frac{n_p}{n_q} - O(n_p^{-1})$, where the lower bound is mainly determined by the ratio of $n_p$ to $n_q$. 

Before proceeding to show Lemma~\ref{lemma:expected_diff}, we present the following lemma (Lemma~\ref{lemma:bound_by_pi^2/6}). This lemma is proven by some calculations, and the proof can be found in Appendix~\ref{appendix:bound_by_pi}.
\begin{restatable}{lemma}{LemBoundByPi}\label{lemma:bound_by_pi^2/6}
    \[
        \sum_{r=1}^{\min(n_p,n_q)} \frac{1}{r}\sum_{r'=1}^r \frac{1}{n_q-r'+2}
        \le \frac{\pi^2}{6} + o(1).
    \]
\end{restatable}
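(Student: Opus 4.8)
The plan is to put the double sum into closed form with harmonic numbers, replace a harmonic difference by a logarithm via an integral comparison, expand the logarithm as a power series, and recognise the resulting series as a sub-series of $\sum_{k\ge 1}k^{-2}=\pi^2/6$. (This also explains why $\pi^2/6$ is the correct constant: up to $o(1)$ the sum is a Riemann-sum approximation of $\int_0^1 \frac{-\ln(1-x)}{x}\,dx=\pi^2/6$.) Write $N=n_q$ and $n=\min(n_p,n_q)$, so that $n\le N$. First I would note that the inner sum telescopes,
\[
\sum_{r'=1}^r \frac{1}{N-r'+2}=\sum_{j=N-r+2}^{N+1}\frac1j=H_{N+1}-H_{N-r+1},
\]
with $H_m=\sum_{i=1}^m 1/i$, and that every denominator is at least $2$ since $r'\le r\le n\le N$. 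Because $1/x$ is decreasing, $\frac1j\le\int_{j-1}^{j}\frac{dx}{x}$, so summing over $j$ gives $H_{N+1}-H_{N-r+1}\le\int_{N-r+1}^{N+1}\frac{dx}{x}=\ln\frac{N+1}{N-r+1}=-\ln\bigl(1-\frac{r}{N+1}\bigr)$, the last equality because $\frac{N-r+1}{N+1}=1-\frac{r}{N+1}$ and $\frac{r}{N+1}<1$.

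Next I would substitute the power series $-\ln(1-t)=\sum_{k\ge1}t^k/k$ with $t=r/(N+1)\in[0,1)$ and interchange the two sums (all terms are non-negative, so Tonelli applies), obtaining
\[
\sum_{r=1}^{n}\frac1r\sum_{r'=1}^r\frac{1}{N-r'+2}\le\sum_{k=1}^{\infty}\frac{1}{k(N+1)^k}\sum_{r=1}^{n}r^{k-1}.
\]
Since $x^{k-1}$ is non-decreasing, $\sum_{r=1}^{n}r^{k-1}\le\int_{1}^{n+1}x^{k-1}\,dx=\frac{(n+1)^k-1}{k}\le\frac{(n+1)^k}{k}$, and hence the right-hand side is at most $\sum_{k\ge1}\frac1{k^2}\bigl(\frac{n+1}{N+1}\bigr)^k$. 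Finally $n\le N$ forces $\frac{n+1}{N+1}\le1$, so the whole expression is bounded by $\sum_{k\ge1}k^{-2}=\pi^2/6$; this in fact gives the slightly stronger statement with the $o(1)$ term removed.

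The argument is essentially routine; the one point demanding a little care is not losing the factor $1/k$ in the estimate $\sum_{r\le n}r^{k-1}\le (n+1)^k/k$, since that factor is precisely what turns $1/k$ into $1/k^2$ and makes the final series converge to a constant --- a cruder bound such as $\sum_{r\le n}r^{k-1}\le n^k$ would instead yield $-\ln\bigl(1-\tfrac n{N+1}\bigr)$, which is only $O(\log n_q)$ and useless here. Keeping the index shifts straight in the telescoped inner sum, and checking positivity of all denominators (which is exactly where $r\le\min(n_p,n_q)\le n_q$ is used), are the remaining places to be attentive.
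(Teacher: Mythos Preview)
Your proof is correct and in fact yields the clean bound $\sum\le \pi^2/6$ without any $o(1)$ slack. The overall strategy---reduce the inner sum to a logarithm, expand $-\ln(1-t)=\sum_{k\ge1}t^k/k$, and recognise $\sum_{k\ge1}k^{-2}=\pi^2/6$---is the same as the paper's, but the execution differs in a useful way. The paper first swaps the order of summation to obtain $\sum_{j}\frac{1}{n_q-j+2}\sum_{r\ge j}\frac{1}{r}$, then replaces the inner harmonic difference by $\log(\min(n_p,n_q)/j)$ via the asymptotic $H_m=\log m+\gamma+\tfrac{1}{2m}+O(m^{-2})$, converts the outer sum to an integral, and only then expands the logarithm; this generates several $o(1)$ error terms that must be tracked. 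Your route is tighter: you bound the original inner sum directly by the integral $\int_{N-r+1}^{N+1}\frac{dx}{x}=-\ln(1-\tfrac{r}{N+1})$, and after expanding the logarithm you control $\sum_{r=1}^{n}r^{k-1}$ by $(n+1)^k/k$, which is exactly the step that produces the second factor of $1/k$. The result is a fully non-asymptotic inequality, whereas the paper's argument inherently needs the $o(1)$ because of the harmonic-number and sum-to-integral approximations.
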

We now prove Lemma~\ref{lemma:expected_diff}.
\begin{proof}[Proof of Lemma~\ref{lemma:expected_diff}]
By $($b$)$ in Lemma~\ref{lem:no_heavy_edge},
for every $r=1,2,\ldots,\min(n_p,n_q)$, there is a constant $c>0$ such that
$\mathbf{Pr}\left[ W_q^{(r)} \geq 1- c\frac{(\log \min(n_p,n_q))^4}{\min(n_p,n_q)} \right] $ $= 1-O(\min(n_p,n_q)^{-3}).$
For every $r=1,2,\ldots,\min(n_p,n_q)$, since $W_q^{(r)}\geq 0$, we obtain $\mathbb{E}[W_q^{(r)}] = 1-O(\min(n_p,n_q)^{-1})$.
Thus, combined with Lemma~\ref{claim:expected_diff_q},
we get
\begin{align*}
    \mathbb{E}[X_{pq}] 
    &\le  \min(n_p,n_q) -  \frac{\alpha}{\beta} \sum_{r=1}^{\min(n_p,n_q)} \frac{1}{r}\sum_{r'=1}^r \frac{\mathbb{E}\bigl[W_q^{(r)}\bigr]}{n_q-r'+2} \nonumber \\
    &\le  
        \min(n_p,n_q) -  \frac{\alpha}{\beta} \sum_{r=1}^{\min(n_p,n_q)} \frac{1}{r}\sum_{r'=1}^r \frac{1 - O(\min(n_p,n_q)^{-1})}{n_q-r'+2}  \\
    &\le 
        \min(n_p,n_q) 
        -  \frac{\alpha}{\beta} \sum_{r=1}^{\min(n_p,n_q)} \frac{1}{r}\sum_{r'=1}^r \frac{1}{n_q-r'+2} 
    + 
        \frac{\alpha}{\beta} \sum_{r=1}^{\min(n_p,n_q)} \frac{1}{r}\sum_{r'=1}^r \frac{O(\min(n_p,n_q)^{-1})}{n_q-r'+2}.
\end{align*}
By Lemma~\ref{lemma:bound_by_pi^2/6}, we have $\sum_{r=1}^{\min(n_p,n_q)} \frac{1}{r}\sum_{r'=1}^r \frac{1}{n_q-r'+2} = O(1)$. 
The above inequality and Lemma~\ref{claim:expected_diff_p} implies that
\begin{align*}
    &\mathbb{E}[X_p] - \mathbb{E}[X_{pq}] \\
    &\ge 
        n_p - \frac{1}{\alpha} \sum_{r=1}^{n_p}\frac{1}{r} \sum_{r'=1}^r  \frac{1}{m - r'\cdot k} -\min(n_p,n_q)  +\frac{\alpha}{\beta}\sum_{r=1}^{\min(n_p,n_q)} \frac{1}{r}\sum_{r'=1}^r \frac{1}{n_q-r'+2}- O(\min(n_p,n_q)^{-1}) \\
    &\ge
        n_p-\min(n_p,n_q) -O(\min(n_p,n_q)^{-1}) 
        + \left(\frac{\alpha}{\beta} - \frac{n_q+1}{\alpha (m-k)}\right) \sum_{r=1}^{\min(n_p,n_q)} \frac{1}{r}\sum_{r'=1}^r \frac{1}{n_q-r'+2} \\
    &\quad-
        \frac{1}{\alpha} \sum_{r=\min(n_p,n_q)+1}^{n_p}
        \frac{1}{r}\sum_{r'=1}^r \frac{1}{m-r'\cdot k} \\
    &\geq 
        n_p-\min(n_p,n_q)  - O(\min(n_p,n_q)^{-1}) \\
    &\quad+ 
        \left(\frac{\alpha}{\beta} - \frac{n_q+1}{\alpha (m-k)}\right) \sum_{r=1}^{\min(n_p,n_q)} \frac{1}{r}\sum_{r'=1}^r \frac{1}{n_q} - \frac{1}{\alpha k} \sum_{r=\min(n_p,n_q)+1}^{n_p}
        \frac{1}{r}\sum_{r'=1}^r \frac{1}{2}  \tag{by $m\ge k(n_q+2)$, $\frac{1}{n_q-r'+2}\ge \frac{1}{n_q}$ and $\frac{1}{2}\ge \frac{1}{m-r'\cdot k}$}\\
    &= 
        n_p-\min(n_p,n_q) + \left(\frac{\alpha}{\beta} - \frac{1}{\alpha k}\right)  \frac{\min(n_p,n_q)}{n_q} - \frac{n_p-\min(n_p,n_q)}{2\alpha k}  - O(\min(n_p,n_q)^{-1})\\
    &= 
        \left(1- \frac{1}{2\alpha k}\right) \left(n_p- \min(n_p,n_q) \right)+ \left(\frac{\alpha}{\beta} - \frac{n_q+1}{\alpha (m-k)} \right) \frac{\min(n_p,n_q)}{n_q} 
        - O(\min(n_p,n_q)^{-1}).
\end{align*}
This completes the proof.
\end{proof}


\subsubsection{Proof of Theorem~\ref{thm:main_theorem_1}}

Finally, we prove Theorem~\ref{thm:main_theorem_1}. 
We denote the right hand-side of the inequality in Lemma~\ref{lemma:expected_diff} by $D(n_p,n_q)$.
We first introduce the Efron-Stein inequality and Chebyshev’s inequality as the following.

\begin{lemma}[Efron–Stein inequality~\cite{BoucheronBook,Efron_Stein}]\label{lemma:efron_stein}
    Suppose that $n$ random variables $X_1,X_2,\ldots,X_n$
    are independent. Let $f:\mathbb{R}^{n}\to \mathbb{R}$ be an arbitrary measurable function of $n$ random variables. Let $X = (X_1,X_2,\ldots,X_n)$ and $X^{(i)} = (X_1,X_2,\ldots,X_{i-1}, 0, X_{i+1},\ldots,X_n)$.
    Then, we have
        $\mathrm{Var}[f(X_1,X_2,\ldots,X_n)] \le \sum_{i=1}^n \mathbb{E}[(f(X) - f(X^{(i)}))^2_{+}]$.
    Here, $(x)_{+} = \max(x,0)$.
\end{lemma}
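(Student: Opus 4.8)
The plan is to run the classical analysis-of-variance / Doob-martingale argument for the Efron–Stein inequality, using independence to make each substituted copy ``invisible'' to the corresponding martingale increment. We may assume $Z := f(X_1,\dots,X_n)$ is square-integrable, since otherwise the right-hand side is infinite. Write $\mathcal F_i = \sigma(X_1,\dots,X_i)$ for $i = 0,1,\dots,n$, with $\mathcal F_0$ trivial, put $Z_i = \mathbb E[Z \mid \mathcal F_i]$ and $\Delta_i = Z_i - Z_{i-1}$. Then $Z - \mathbb E[Z] = \sum_{i=1}^n \Delta_i$ with $\mathbb E[\Delta_i \mid \mathcal F_{i-1}] = 0$, and a one-line computation using the tower property gives $\mathbb E[\Delta_i \Delta_j] = 0$ for $i \neq j$, so $\mathrm{Var}(Z) = \sum_{i=1}^n \mathbb E[\Delta_i^2]$. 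The key observation is that $W_i := f(X^{(i)})$ depends only on $(X_j)_{j \neq i}$, hence by independence $\mathbb E[W_i \mid \mathcal F_i] = \mathbb E[W_i \mid \mathcal F_{i-1}]$; therefore $W_i$ cancels out of the $i$-th increment and $\Delta_i = \mathbb E[Z - W_i \mid \mathcal F_i] - \mathbb E[Z - W_i \mid \mathcal F_{i-1}]$.

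Setting $Y_i = \mathbb E[Z - W_i \mid \mathcal F_i]$, so that $\Delta_i = Y_i - \mathbb E[Y_i \mid \mathcal F_{i-1}]$, we get $\mathbb E[\Delta_i^2 \mid \mathcal F_{i-1}] = \mathrm{Var}(Y_i \mid \mathcal F_{i-1})$. Conditioned on $\mathcal F_{i-1}$, the random variable $Y_i$ is a function of $X_i$ alone, so by the symmetrization identity $\mathrm{Var}(Y) = \mathbb E[(Y - Y')^2_{+}]$ for i.i.d.\ $Y, Y'$ we obtain $\mathbb E[\Delta_i^2 \mid \mathcal F_{i-1}] = \mathbb E[(Y_i - Y_i')^2_{+} \mid \mathcal F_{i-1}]$, where $Y_i'$ is the analogue of $Y_i$ built from an independent copy $X_i'$ of $X_i$. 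Since $W_i$ does not depend on $X_i$ or $X_i'$, it again cancels, and $Y_i - Y_i'$ equals a conditional expectation of $f(X) - f(X'^{(i)})$, where $X'^{(i)}$ agrees with $X$ except that coordinate $i$ is resampled; conditional Jensen for the convex map $t \mapsto (t)^2_{+}$ then passes the positive-part square through, giving $\mathbb E[\Delta_i^2] \le \mathbb E[(f(X) - f(X'^{(i)}))^2_{+}]$. Summing over $i$ yields the inequality with an independent-copy substitution, which is exactly the displayed bound once we use the deterministic substitution by $0$: this is legitimate here because $0$ is the left endpoint of the support $[0,1]$ of each $X_i$ and the objective functions $f$ to which the lemma is applied are nondecreasing in every coordinate, so $f(X) \ge f(X^{(i)})$ almost surely and the positive-part square coincides with the ordinary square.

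The main obstacle is precisely this last point. The ANOVA bookkeeping delivers $\mathrm{Var}(Z) \le \sum_i \mathbb E[(f(X) - f(X^{(i)}))^2]$ essentially for free via conditional Jensen for $t \mapsto t^2$; the work is in sharpening $(\cdot)^2$ to $(\cdot)^2_{+}$ while simultaneously using a \emph{fixed} coordinate substitution rather than an independent resampling, which is where the endpoint/monotonicity structure enters. The remaining ingredients — orthogonality of martingale differences, the ``invisible substitution'' cancellation $\mathbb E[W_i \mid \mathcal F_i] = \mathbb E[W_i \mid \mathcal F_{i-1}]$, and the reduction of $\mathbb E[\Delta_i^2 \mid \mathcal F_{i-1}]$ to a one-dimensional variance in $X_i$ — are routine but should be written out carefully so that each conditioning step is justified.
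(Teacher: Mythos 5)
The paper never proves this lemma --- it is quoted from the cited references --- so what matters is whether your argument establishes the statement as written. The core of your proposal is sound and is essentially the textbook proof: the Doob martingale decomposition, the cancellation $\mathbb{E}[W_i\mid\mathcal{F}_i]=\mathbb{E}[W_i\mid\mathcal{F}_{i-1}]$, the conditional symmetrization identity for a function of $X_i$ alone, and conditional Jensen for the convex map $t\mapsto (t)_+^2$ together yield the standard one-sided Efron--Stein bound $\mathrm{Var}[f(X)]\le\sum_{i=1}^n\mathbb{E}\bigl[(f(X)-f(X'^{(i)}))_+^2\bigr]$, where the $i$-th coordinate is \emph{resampled} from an independent copy $X_i'$.

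The gap is the final bridge from the resampled version to the statement's deterministic substitution by $0$ while keeping the positive part. You justify it by assuming that the functions to which the lemma is applied are nondecreasing in every coordinate and that $0$ is the left endpoint of the support. Neither assumption appears in the lemma, which is asserted for an arbitrary measurable $f$, and in that generality the claimed inequality is false: take $n=1$, $X_1$ uniform on $[0,1]$, $f(x)=-x$; then $\mathrm{Var}[f(X_1)]=1/12$ while $(f(X)-f(X^{(1)}))_+=(-X_1)_+=0$, so the right-hand side vanishes. More importantly, the monotonicity you invoke does not hold for the functions the paper actually plugs into this lemma: $X_p(W)$ and $X_{pq}(W)$, the round-robin utilities, are not coordinatewise nondecreasing in the edge weights, since raising the weight of an edge incident to a rival class can divert an item away from class $p$ and strictly decrease its utility. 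So your patch is unavailable exactly where the lemma is used. What does hold for an arbitrary fixed substitution is the plain-square bound $\mathrm{Var}[f(X)]\le\sum_i\mathbb{E}\bigl[(f(X)-f(X^{(i)}))^2\bigr]$, which follows from the conditional-variance form of Efron--Stein; combining a fixed substitution with the positive part requires extra structure. The imprecision originates in the paper's hybrid statement, but your proof as proposed does not close it.
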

\begin{lemma}[Chebyshev’s inequality]\label{lemma:Chebyshev}
    If $X$ is any random variable, then for any $\varepsilon>0$ we have $\mathbf{Pr}[\,|X - \mathbb{E}[X]| \geq \varepsilon\,] \leq \frac{\mathrm{Var}[X]}{\varepsilon^2}.$
\end{lemma}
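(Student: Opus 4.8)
The plan is to derive Chebyshev's inequality from Markov's inequality applied to the non-negative random variable $(X-\mathbb{E}[X])^2$. First I would recall (or briefly establish) Markov's inequality: for any non-negative random variable $Y$ and any $a>0$, one has $\mathbf{Pr}[Y\geq a]\leq \mathbb{E}[Y]/a$. This follows immediately from the pointwise inequality $a\cdot\mathbbm{1}(Y\geq a)\leq Y$ together with monotonicity and linearity of expectation, since $a\cdot\mathbf{Pr}[Y\geq a]=\mathbb{E}[a\cdot\mathbbm{1}(Y\geq a)]\leq\mathbb{E}[Y]$.

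Next, I would instantiate this with $Y=(X-\mathbb{E}[X])^2$, which is non-negative, and with $a=\varepsilon^2>0$. The key observation is the event identity $\{\,|X-\mathbb{E}[X]|\geq\varepsilon\,\}=\{\,(X-\mathbb{E}[X])^2\geq\varepsilon^2\,\}$, which holds because $t\mapsto t^2$ is monotone on $[0,\infty)$ and $\varepsilon>0$. Combining this with Markov's inequality gives
\[
    \mathbf{Pr}[\,|X-\mathbb{E}[X]|\geq\varepsilon\,]
    =\mathbf{Pr}\big[\,(X-\mathbb{E}[X])^2\geq\varepsilon^2\,\big]
    \leq\frac{\mathbb{E}\big[(X-\mathbb{E}[X])^2\big]}{\varepsilon^2}
    =\frac{\mathrm{Var}[X]}{\varepsilon^2},
\]
where the final equality is just the definition of the variance.

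There is essentially no hard part here: the argument is a two-line reduction to Markov's inequality, and I would present it in full rather than merely sketch it. The only point requiring a word of care is the implicit assumption that $\mathbb{E}[X]$ and $\mathrm{Var}[X]$ are well defined and finite; if $\mathrm{Var}[X]=\infty$ the claimed bound is vacuously true, so this does not constitute a genuine obstacle.
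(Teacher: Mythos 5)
Your proof is correct: it is the standard two-line reduction to Markov's inequality applied to the non-negative random variable $(X-\mathbb{E}[X])^2$ with threshold $\varepsilon^2$, and the handling of the degenerate case $\mathrm{Var}[X]=\infty$ is fine. The paper itself states Chebyshev's inequality as a classical fact without proof, so there is nothing in the paper to compare against; your argument is exactly the textbook derivation one would supply.
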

Moreover, we use an inequality according to variances.
\begin{lemma}
For arbitrary two random variables $X$ and $Y$,
\begin{equation}\label{eq:variance_inq}
    \mathrm{Var}[X+Y] = \mathrm{Var}[X]+\mathrm{Var}[Y] + \mathrm{Cov}(X,Y) \leq 2(\mathrm{Var}[X]+\mathrm{Var}[Y]),
\end{equation}
where $\mathrm{Cov}(X,Y)$ is the covariance of $X$ and $Y$. 
\end{lemma}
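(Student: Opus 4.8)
The plan is to derive the inequality $\mathrm{Var}[X+Y] \le 2(\mathrm{Var}[X] + \mathrm{Var}[Y])$ directly from the definition of variance, using only the elementary pointwise bound $(a+b)^2 \le 2a^2 + 2b^2$ for real numbers $a,b$ (which is just $0 \le (a-b)^2$ rearranged) together with linearity and monotonicity of expectation. The middle expression $\mathrm{Var}[X]+\mathrm{Var}[Y] + 2\,\mathrm{Cov}(X,Y)$ is the standard expansion of the variance of a sum, obtained by expanding the square and applying linearity of expectation; I would state it as a known identity and not dwell on it, since nothing in the lemma's use requires more than the final inequality.

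First I would set $a := X - \mathbb{E}[X]$ and $b := Y - \mathbb{E}[Y]$, so that $\mathrm{Var}[X+Y] = \mathbb{E}\big[(a+b)^2\big]$. Applying $(a+b)^2 \le 2a^2 + 2b^2$ inside the expectation and using monotonicity of expectation gives $\mathrm{Var}[X+Y] \le \mathbb{E}[2a^2 + 2b^2]$, and then linearity yields $\mathrm{Var}[X+Y] \le 2\,\mathbb{E}[(X-\mathbb{E}[X])^2] + 2\,\mathbb{E}[(Y-\mathbb{E}[Y])^2] = 2\,\mathrm{Var}[X] + 2\,\mathrm{Var}[Y]$, which is exactly the asserted bound. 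If one prefers to route the argument through the covariance term explicitly, the same conclusion follows by combining the expansion with the Cauchy--Schwarz bound $|\mathrm{Cov}(X,Y)| \le \sqrt{\mathrm{Var}[X]\,\mathrm{Var}[Y]}$ and the AM--GM inequality $\sqrt{\mathrm{Var}[X]\,\mathrm{Var}[Y]} \le \tfrac12\big(\mathrm{Var}[X] + \mathrm{Var}[Y]\big)$, which together give $2\,\mathrm{Cov}(X,Y) \le \mathrm{Var}[X] + \mathrm{Var}[Y]$ and hence the claim.

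There is essentially no obstacle: the only point needing (trivial) attention is finiteness of the second moments, so I would implicitly assume $X$ and $Y$ are square-integrable. This is automatic in the paper's setting, where the lemma is applied to bounded random variables (assignment valuations bounded by class sizes, built from edge weights in $[0,1]$), so every step above is either an identity or a monotone application of expectation and the lemma follows at once.
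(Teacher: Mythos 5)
Your proof is correct and complete: the pointwise bound $(a+b)^2 \le 2a^2+2b^2$ applied to the centered variables, together with linearity and monotonicity of expectation, gives exactly the inequality the paper needs, and your alternative route via Cauchy--Schwarz and AM--GM is equally valid. The paper states this lemma as a standard fact without any proof, so there is nothing to compare against; note only that the paper's displayed identity omits the factor $2$ on the covariance term (it should read $\mathrm{Var}[X]+\mathrm{Var}[Y]+2\,\mathrm{Cov}(X,Y)$), a typo that your write-up correctly repairs, and your remark that square-integrability is automatic here (the variables are bounded) is the right level of care.
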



From the condition~\ref{item:assumption-c} in Theorem~\ref{thm:main_theorem_1}, we obtain
\begin{equation}
    \max(n_p,n_q) \le n \le C\cdot (\mathrm{min}_{p\in [k]}n_{p})^{5/4} \le C\cdot \min(n_p,n_q)^{5/4} \quad \forall p,q\in [k].
\end{equation}
From this, we get
\begin{equation}
    n_p\to\infty \quad\text{ as }\quad n\to\infty \quad \forall p\in [k].
\end{equation}

\begin{proof}[Proof of Theorem~\ref{thm:main_theorem_1}]
To show Theorem~\ref{thm:main_theorem_1} by the Chebyshev’s inequality, we will bound the variances of $\mathrm{Var}\left[X_p \right]$ and $\mathrm{Var}\left[X_{pq} \right]$.
\paragraph{Bound for $\mathrm{Var}\left[X_p \right]$.}
First, we investigate the probability that the value of the random variable $X_p$ deviates from its expected value. Let $W(i,j) = u_i(j)$ denote the weight of edge $\{i,j\}$ for each $i \in N$ and $j\in I$. 
Let $X_p(W)$ denote 
denote the total utility that class $p$ obtains from the matching produced by the round-robin algorithm when the input is $(W(i,j))_{i,j}$.
Note that $\mathrm{Var}[X_{p}(W)] = \mathrm{Var}[X_{p}]$.

Consider another weight function. 
Let $\delta_p = c_p \frac{(\log n_p)^2}{n_p}$, and $\overline{W}(i,j) = \max(W(i,j), 1-\delta_p )$ for all $i \in N$ and $j\in I$. 
By the definition, we have $\overline{W}(i,j) \ge 1-\delta_p$ for all $i \in N$ and $j\in I$. 
Let $\overline{W} = (\overline{W}(1,1),\overline{W}(1,2),\ldots,\overline{W}(n,m))$.
We denote by $X_p(\overline{W})$ the total utility attained by class $p$ under the round-robin algorithm when weights of edges are $\{\overline{W}(i,j)\}_{i,j}$ instead of $\{W(i,j)\}_{i,j}$.

We now investigate $\mathrm{Var}\left[ X_p(\overline{W}) \right]$.
Let $\overline{W}^{(i,j)} = (\overline{W}(1,1),\overline{W}(1,2),\ldots,\overline{W}(i,j-1),0, \overline{W}(i,j+1),\ldots,\overline{W}(n,m))$. Similar to $X_p(\overline{W})$, we define $X_p(\overline{W}^{(i,j)})$.
By applying the Efron–Stein inequality (Lemma~\ref{lemma:efron_stein}), we obtain
\begin{align*}
    \mathrm{Var}\left[X_p(\overline{W}) \right] 
    \leq 
        \sum_{i=1}^{n} \sum_{j=1}^{m} \mathbb{E}\Bigl[(X_p(\overline{W}) - X_p(\overline{W}^{(i,j)}))^2_{+}\Bigr].
\end{align*}
By the definition of $\overline{W}$, we have $(X_p(\overline{W}) - X_p(\overline{W}^{(i,j)}))^2_{+} \le 2\delta_p^2$ for all $i\in N$ and $j\in I$. 
We count the number of edges $\{i,j\}$ for which $X_p(\overline{W}^{(i,j)})$ becomes strictly less than $X_p(\overline{W})$ when the weight of edge $\{i,j\}$ is replaced with $0$.
From the proof in Lemma~\ref{lemma:the_length_of_alternating}, for each $p\in [k]$, we have, at least a probability of $1-O(n_p^{-3})$, the length of alternating path is at most $O(\log n_p)$. 
Since the number of edges involved in the matching produced by the algorithm is, with a probability at least $1-O(\min_{p\in [k]}n_p^{-3})$, at most $\sum_{p=1}^k O(\log n_p) n_p$, and
$
    \delta_p^2\sum_{p = 1}^k n_p O(\log n_p)  
    \le 
        \delta_p^2 \sum_{p=1}^k n_p O(\log n)
    =
        \delta_p^2 nO( \log n),
$
we obtain
\[
    \mathbb{E}\Bigl[\sum_{i=1}^{n} \sum_{j=1}^{m} (X_p(\overline{W}) - X_p(\overline{W}^{(i,j)}) )^2_{+}\Bigr] \le \delta_p^2 nO( \log n) + \delta_p^2 n^2 O(\mathrm{min}_{p\in [k]}n_p^{-3})
    \le \delta_p^2 nO( \log n) +  \delta_p^2 n_p^{-1/2}.
\]
Hence, $\mathrm{Var}\bigl[ X_p(\overline{W}) \bigr] \le \delta_p^2 O(n \log n) + O(n_p^{-1}) $.
Furthermore, from (a) in Lemma~\ref{lem:no_heavy_edge}, we have $2 n_p^2 \cdot \mathbf{Pr}\left[X_p(W) \neq X_p(\overline{W}) \right] = 2 n_p^2\cdot O(n_p^{-3}) = O(n_p^{-1})$.

Next, we bound the variance of $X_p(W)$ as follows.
\begin{align*}
    \mathrm{Var}[X_p(W)] 
    &= 
        \mathrm{Var}\left[X_p(W)\cdot \mathbbm{1}[X_p(W)=X_p(\overline{W})] + X_p(W)\cdot \mathbbm{1}[X_p(W) \neq X_p(\overline{W})]\right]  \\
    &\leq 
        2 \mathrm{Var}\left[\overline{X}_p \right] 
        + 2 \mathrm{Var}\left[ X_p(W)\cdot \mathbbm{1}[X_p(W) \neq X_p(\overline{W})] \right] \tag{by Inequality~\ref{eq:variance_inq}}\\
    &\leq 
        2 \mathrm{Var}\left[\overline{X}_p \right] 
        + 2 n_p^2 \cdot\mathrm{Var}\left[\mathbbm{1}[X_p(W) \neq X_p(\overline{W})] \right] \tag{by $X_p \leq n_p$} \\
    &\leq 
        2 \mathrm{Var}\left[ X_p(\overline{W})\right] 
        + 2 n_p^2 \cdot\mathbf{Pr}\left[X_p(W) \neq X_p(\overline{W}) \right] \tag{by $\mathrm{Var}\left[\mathbbm{1}(\mathcal{A}) \right] = \mathbf{Pr}\left[\mathcal{A}\right] - \mathbf{Pr}\left[\mathcal{A}\right]^2 \leq \mathbf{Pr}\left[\mathcal{A}\right]$ for an event $\mathcal{A}$}\\
    &\le 
        2 \delta_p^2 nO( \log n) + O(n_p^{-1}).
\end{align*}
Finally, from the above inequality and the Chebyshev's inequality, we achieve
\begin{equation}\label{eq:D(n_p,n_q)_1}
    \mathbf{Pr}\left[X_p - \mathbb{E}[X_p] < -\frac{1}{2}D(n_p,n_q)\right] 
    \leq 
        \frac{4 \mathrm{Var}\left[X_p \right]}{D(n_p,n_q)^2}
    \leq 
        \frac{8\delta_p^2 n O(\log n)  + O(n_p^{-1})}{D(n_p,n_q)^2}.
\end{equation}

\paragraph{Bound for $\mathrm{Var}\left[ X_{pq} \right]$}

Subsequently, we investigate the concentration around the expected value of the random variable $X_{pq}$. Similarly, let us define $\delta_{pq} = c_{q} \frac{(\log \min(n_p,n_q))^4}{\min(n_p,n_q)}$.
We define $\widetilde{W}(i,j) = \max(W(i,j), 1-\delta_{pq} )$ for all $i \in N$ and $j\in I$.
Let $X_{pq}(W)$ be the function that receives all utilities and returns the assignment valuation of class $p$ to class $q$ when the input is $W$.
We also define $\widetilde{W}^{(i,j)} = (\widetilde{W}(1,1),\widetilde{W}(1,2),\ldots,\widetilde{W}(i,j-1),0, \widetilde{W}(i,j+1),\ldots,\widetilde{W}(n,m))$, and
define the random variable of the assignment valuation for class $p$ toward the bundle of class $q$ induced by the output of the round-robin using $\{\widetilde{W}^{(i,j)}\}_{i,j}$ as 
$X_{pq}(\widetilde{W}^{(i,j)})$.
Similarly to the bound for $\mathrm{Var}[X_p(W)]$, we obtain 
\[
    \mathrm{Var}[X_{pq}(W)] 
    \leq 2 \mathrm{Var}\left[X_{pq}(\widetilde{W}^{(i,j)}) \right] 
    + 2 \min(n_p,n_q)^2\cdot \mathbf{Pr}\left[X_{pq}\neq\overline{X}_{pq} \right].
\]
From (b) in Lemma~\ref{lem:no_heavy_edge}, we obtain $\min(n_p,n_q)^2 \cdot\mathbf{Pr}\left[X_{pq}\neq\overline{X}_{pq} \right]= O(\min(n_p,n_q)^{-1})$.
By applying the Efron-Stein inequality, we obtain
\[
    \mathrm{Var}[X_{pq}(W)] 
    \leq 
        2 \delta_{pq}^2 nO( \log n) + O(\min(n_p,n_q)^{-1}).
\]
Thus, from the Chebyshev's inequality,
\begin{equation}\label{eq:D(n_p,n_q)_2}
    \mathbf{Pr}\left[X_{pq} - \mathbb{E}[X_{pq}] 
    > 
        \frac{1}{2}D(n_p,n_q)\right] 
    \leq 
        \frac{4 \mathrm{Var}\left[X_{pq} \right]}{D(n_p,n_q)^2}
    \leq 
        \frac{8 \delta_{pq}^2 nO( \log n) + O(\min(n_p,n_q)^{-1}) }{D(n_p,n_q)^2}.
\end{equation}


We now have 
\[
    \delta_{p}^2 + \delta_{pq}^2 
    = 
        c_p n_p^{-2}(\log n_p)^4 + c_q \min(n_p,n_q)^{-2} (\log \min(n_p,n_q))^{8} \le 2 c_q  \min(n_p,n_q)^{-2} (\log n)^{8}.
\]

\paragraph{The probability that class $p$ envies class $q$.}
Finally, we bound the probability that class $p$ envies class $q$.
From Lemma~\ref{lemma:expected_diff}, we have $\mathbb{E}[X_p] - \mathbb{E}[X_{pq}] \geq D(n_p,n_q)$.
If class $p$ envies class $q$, then $X_p < \mathbb{E}[X_p] - \frac{1}{2}D(n_p,n_q)$, or $X_{pq} > \mathbb{E}[X_{pq}] +\frac{1}{2} D(n_p,n_q)$ must hold. 
Hence, from~\eqref{eq:D(n_p,n_q)_1} and~\eqref{eq:D(n_p,n_q)_2}, we obtain
\begin{align*}
    \mathbf{Pr}\left[\text{Class $p$ envies class $q$} \right] 
    &=
        \mathbf{Pr}\left[X_p < X_{pq} \right] \\
    &\leq 
        \mathbf{Pr}\left[ X_p < \mathbb{E}[X_p] - \frac{1}{2}D(n_p,n_q)\right] +\mathbf{Pr}\left[X_{pq} > \mathbb{E}[X_{pq}] + \frac{1}{2}D(n_p,n_q)\right] \\
    &\le 
        \frac{8 \delta_p^2 nO( \log n) + O(n_p^{-1})}{D(n_p,n_q)^2} +\frac{8 \delta_{pq}^2 nO( \log n) + O(\min(n_p,n_q)^{-1})}{D(n_p,n_q)^2} \\
    &\le 
        \left( 8 (\delta_p^2 + \delta_{pq}^2) nO( \log n) + O(\min(n_p,n_q)^{-1})\right) \cdot D(n_p,n_q)^{-2} \\
    &\le 
        \left( 16 c_q\cdot \min(n_p,n_q)^{-2} nO((\log n)^9) + O(\min(n_p,n_q)^{-1})\right) \cdot D(n_p,n_q)^{-2},
\end{align*}
where we use $\delta_{p}^2 + \delta_{pq}^2 \le 2 c_q  \min(n_p,n_q)^{-2} (\log n)^{8}$ for the last inequality.

Now, we consider the two cases: $n_p\ge n_q$ or $n_p < n_q$.
When $n_p\ge n_q$, there exist a constant $c>0$ such that $D(n_p,n_q) \ge c - O(n_q^{-1})$. Then, we get
\begin{align*}
    \mathbf{Pr}\left[\text{Class $p$ envies class $q$} \right]
    &\le 
        (16 c_q\cdot n_q^{-2} nO((\log n)^9) + O(n_q^{-1}) ) \cdot D(n_p,n_q)^{-2} \\
    &\le 
        (16 c_q\cdot n_q^{-2} nO((\log n)^9) + O(n_q^{-1}) ) \cdot (c - O(n_q^{-1}))^{-2} \\
    &= 
        16 c_q c^{-2}\cdot n_q^{-2} nO((\log n)^9) + O(n_q^{-1})  \\
    &\le 
        16 c_q c^{-2} C\cdot n_q^{-3/4} O((\log n)^9) + O(n_q^{-1})  \tag{by $n \le C\cdot n_q^{5/4}$}\\
    &= 
        \tilde{O}(n_q^{-3/4}) \\
    &= 
        \tilde{O}(n^{-3/5}),
\end{align*}
where $\tilde{O}$ is the big-O notation that ignores logarithmic factors.
When $n_p < n_q$, we have $D(n_p,n_q) \ge (\frac{\alpha}{\beta}-\frac{1}{\alpha k})\frac{n_p}{n_q} - O(n_p^{-1})$. Then,
we obtain
\begin{align*}
    D(n_p,n_q)^{-2} 
    &\le 
        \left(\left(\frac{\alpha}{\beta}-\frac{1}{\alpha k}\right)\frac{n_p}{n_q} - O(n_p^{-1})\right)^{-2} \\
    &\le 
        \left(\left(\frac{\alpha}{\beta}-\frac{1}{\alpha k}\right)C^{-1}n_p^{-1/4} - O(n_p^{-1})\right)^{-2} \tag{Using $n_q \leq C\cdot n_p^{5/4}$}\\
    &=  
        O(n_p^{1/2}).
\end{align*}
Thus, we obtain
\begin{align*}
    \mathbf{Pr}\left[\text{Class $p$ envies class $q$} \right] 
    &\le 
        (16 c_q\cdot n_p^{-2} nO((\log n)^9) + O(n_p^{-1}) ) \cdot D(n_p,n_q)^{-2} \\
    &= 
        (16 c_q\cdot n_p^{-2} nO((\log n)^9) + O(n_p^{-1}) ) \cdot O(n_p^{1/2}) \\
    &= 
        (16 c_q C\cdot n_p^{-3/4} O((\log n)^9) + O(n_p^{-1}) ) \cdot O(n_p^{1/2}) \tag{by $n \leq C\cdot n_p^{5/4}$}\\
    &= 
        \tilde{O}(n_p^{-1/4}) \\
    &= 
        \tilde{O}(n^{-1/5}).
\end{align*}

Therefore, we establish that the probability that class $p$ envies class $q$ approaches $0$ as $n\to \infty$ for all $p,q\in [k]$.
Finally, from condition~\ref{item:assumption-a}, we achieve
\begin{align*}
    \mathbf{Pr}\left[\text{Matching $M$ is not class envy-free} \right] 
    &\le 
        \sum_{p,q\in [k]} \mathbf{Pr}\left[\text{Class $p$ envies class $q$} \right] \\
    &\le 
        k^2\cdot \tilde{O}(n^{-1/5}) \\
    &= 
        o(1).
\end{align*}
This concludes the proof of Theorem~\ref{thm:main_theorem_1}.

\end{proof}


\section{Conclusion}
This paper addressed the problem of achieving fairness in matching across different classes and suggests several open problems for future research. In Theorem~\ref{thm:main_theorem_1}, we made several assumptions. In particular, while \cite{Dickerson2014} and \cite{ManurangsiSuksompong2021}, as well as our Theorem~\ref{thm:maxmatching}, provide asymptotic results when the number of items approaches infinity, Theorem~\ref{thm:main_theorem_1} assumes the number of agents approaches infinity. This assumption is necessary for our analysis because the assignment valuation of each class is bounded by the number of agents in that class. However, it remains unclear whether the round-robin algorithm produces asymptotically a class envy-free matching when these assumptions do not hold. We leave this as an interesting open question for future work. Also, it would be interesting to explore the case of asymmetric agents, where each agent has a different probability distribution for their utilities. 


\section*{Acknowledgments}
This work was partially supported by JST FOREST Grant Numbers JPMJPR20C1 and by JST ERATO under grant number JPMJER2301. We thank the anonymous AAMAS 2025 for their valuable comments.







\bibliographystyle{plain} 
\bibliography{full_version}

\appendix

\section{Omitted Material From Section~\ref{sec:Preliminaries}}\label{app:nesting_lemma}


\begin{proof}[Proof of Lemma~\ref{lemma:uniqueness_of_maximum_weight_matching}]
Let $w(e)$ denote the weight of each edge $e$ in $H$.
Let $M_1$ and $M_2$ be two distinct matchings in $H$. 
For an edge $e \in M_i\setminus M_j$ with $i \neq j$ and $i,j \in \{1,2\}$, we define $s(e)$ by
$
    s(e) 
    =  w(M_i) - \sum_{e' \in M_j \setminus \{e\}} w(e'),
$
where $w(M_i)= \sum_{e'\in M_i}w(e')$.
Here, the random variable $s(e)$ is independent of $w(e)$. Thus, given $s(e)$, we have $\mathbf{Pr}[w(e)=s(e)]=0$ due to the non-atomicity of the distribution.
If $w(M_1)=w(M_2)$, then it follows that $w(e)=s(e)$ by the definition of $s(e)$. Let $\mathcal{E}$ denote the event that there exist two distinct matchings of the same total weight. 
Now, we have 
\[
\mathbf{Pr}[\mathcal{E}] \leq \sum_{M_1,M_2 \in \mathcal{M},\,  M_1\neq M_2} \ \sum_{e\in M_1\Delta M_2}\mathbf{Pr}[w(e)=s(e)] = 0,
\] 
where $\mathcal{M}$ is the set of matchings in $H$. 
This concludes that no two distinct matchings have the same total weight almost surely.
\end{proof}


\begin{proof}[Proof of Lemma~\ref{lemma:nesting_lemma}]
In this proof, let  $M^r$ denote the maximum-weight matching with $r$ edges in $H[A',B']$. Fix $r>1$. 
For a subset $E' \subseteq E$, let $w(E')$ denote the total weight of edges in $E'$.
From Lemma~\ref{lemma:uniqueness_of_maximum_weight_matching}, no two distinct matchings have the same total weight in $H[A',B']$ almost surely. 
Let $E_0=M^{r-1} \Delta M^r$.
Since all vertices have degree of at most two, $E_0$ consists of paths or cycles.
We claim that $E_0$ consists of a single path.  
Suppose towards a contradiction that $E_0$ is not composed of a single path. Then $|M^r \setminus M^{r-1}|\geq 1$ and $|M^{r-1} \setminus M^{r}|\geq 1$. Moreover, there is a subset $E_1 \subseteq E_0$ such that $E_1$ includes an equal number of edges from each $M^{r-1}$ and $M^r$, and $E_1 \Delta M^{r-1}$ and $E_1 \Delta M^{r}$ are matchings.

To see this, let $C_1,\ldots,C_{\ell}$ be the connected components of $E_0$. There exists at least one $C_j \in \{C_1,\ldots,C_{\ell}\}$ that is a single path with $|C_j \cap M^r| = |C_j \cap M^{r-1}|+1$. By our assumption, $\ell \geq 2$. If there is a connected component $C_{j'}$ that includes an equal number of edges from each $M^{r-1}$ and $M^r$, then setting $E_1=C_{j'}$ satisfies the desired properties. Suppose that every connected component $C_{j'}$ contains a different number of edges from $M^{r-1}$ and $M^r$, which means $||C_{j'} \cap M^r| - |C_{j'} \cap M^{r-1}||=1$. This together with the fact that $\sum_{j' \in [\ell]\setminus \{j\}} |C_{j'} \cap M^r|= \sum_{j' \in [\ell]\setminus \{j\}} |C_{j'} \cap M^{r-1}|$
implies that there must exist a pair of connected components with the same total number of edges from $M^{r-1}$ and from $M^r$. We can select such a pair as $E_1$. 

Note that $E_1 \Delta M^{r-1}$ is a matching of size $r-1$ in $H[A',B']$, and $E_1 \Delta M^{r}$ is a matching of size $r$ in $H[A',B']$. Thus, by Lemma~\ref{lemma:uniqueness_of_maximum_weight_matching}, 
\[
    w(E_1 \Delta M^{r-1}) \neq w(E_1 \Delta M^{r}).
\]
Now, we have
\begin{align*}
    w(E_1 \Delta M^{r-1}) + w(E_1 \Delta M^{r}) 
    &= w(M^{r-1} \setminus E_1) + w(M^r \cap E_1) + w(M^{r} \setminus E_1) + w(M^{r-1} \cap E_1) \\
    &= w(M^{r-1}) + w(M^{r}).
\end{align*}
From these, it follows that either $w(E_1 \Delta M^{r-1}) > w(M^{r-1}) $ or $w(E_1 \Delta M^{r}) > w(M^{r}) $. 
This contradicts the fact that $M^{r-1}$ and $M^r$ are the maximum-weight matching with $r-1$ and $r$ edges, respectively, in $H[A',B']$.
Therefore, $E_0$ must consist of a single path. 
When creating $M^r$ from $M^{r-1}$ along the single alternating path, every vertex that appears in $M^{r-1}$ continues to appear in $M^r$.
\end{proof}


\section{Omitted Material From Section~\ref{sec:mainresults}}\label{app:for_contributions}

\begin{restatable}{proposition}{CompleteClassEnvy}\label{complete_class_envy-free}
    Suppose that that $\mathcal{D}$ is non-atomic. Let $\mathrm{e}$ be Napier's constant and $\varepsilon>0$ be any constant. If $m\geq (\mathrm{e}+\varepsilon)\cdot \sum_{p\in [k]} n_p$, then the probability that a class envy-free matching under which every agent obtains exactly one item exists approaches $1$ as $m \to \infty$. 
\end{restatable}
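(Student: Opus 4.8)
The plan is to reduce Proposition~\ref{complete_class_envy-free} to the known result on the one-to-one house allocation problem due to Manurangsi and Suksompong~\cite{ManurangsiSuksompong2021} (see the discussion following Theorem~\ref{thm:maxmatching}). First I would observe that a class envy-free matching in which \emph{every agent receives exactly one item} is nothing more than an envy-free one-to-one assignment in the ordinary (individual) sense, applied to the $n = \sum_{p\in[k]} n_p$ agents. Indeed, if $M$ assigns a distinct item to every agent and $M$ is individually envy-free --- meaning $u_i(M(i)) \ge u_i(M(i'))$ for all agents $i,i'$ --- then for any two classes $p,q$ and any item $j \in M(N_q)$ matched to some agent $i' \in N_q$, every agent $i \in N_p$ satisfies $u_i(M(i)) \ge u_i(j)$; hence the maximum-weight matching between $N_p$ and $M(N_q)$ (which has size $n_p = |M(N_p)|$) assigns each $i\in N_p$ a value no larger than $u_i(M(i))$, so $v_p(M(N_q)) \le \sum_{i\in N_p} u_i(M(i)) = u_p(M)$. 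Thus class $p$ does not envy class $q$, and $M$ is class envy-free.

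Next I would invoke the cardinal-to-ordinal equivalence noted in the Related Work: since $\mathcal{D}$ is non-atomic, almost surely no agent is indifferent between two items, so each agent's cardinal utilities induce a strict preference order, and individual envy-freeness of a one-to-one assignment depends only on these orders. Manurangsi and Suksompong~\cite{ManurangsiSuksompong2021} proved that if the $n$ agents' preference orders are drawn (uniformly) at random --- which holds here because the utilities are i.i.d.\ from the non-atomic $\mathcal{D}$, making the induced orders uniform and independent --- and if the number of items $m$ satisfies $m/n \ge \mathrm{e} + \varepsilon$ for a constant $\varepsilon > 0$, then an (individually) envy-free one-to-one assignment exists with probability tending to $1$ as $n \to \infty$, equivalently as $m \to \infty$ since $m \le$ (some polynomial in) $n$ here, or more directly: $m \ge (\mathrm{e}+\varepsilon) n$ forces $n \to \infty$ whenever $m \to \infty$. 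Combining this with the reduction in the previous paragraph yields the claim.

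The main step requiring care is confirming that the hypotheses of the house-allocation result are genuinely met: specifically that (i) the induced preference profile is uniformly random over all profiles of strict orders, which follows from the i.i.d.\ non-atomic sampling; (ii) the regime $m \ge (\mathrm{e}+\varepsilon) \sum_{p\in[k]} n_p = (\mathrm{e}+\varepsilon) n$ matches the $m/n \ge \mathrm{e}+\varepsilon$ threshold of~\cite{ManurangsiSuksompong2021}; and (iii) ``as $m\to\infty$'' coincides with ``as $n\to\infty$'' under this linear relationship. None of these is technically hard --- the only subtlety is that the cited house-allocation theorem is typically phrased for $n$ agents and $m$ items with $m \geq n$, so one should note that the greedy/serial-dictatorship-style algorithm of~\cite{ManurangsiSuksompong2021} does produce a perfect matching of all $n$ agents, which is exactly what ``every agent obtains exactly one item'' requires. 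I do not anticipate a genuine obstacle; the proposition is essentially a corollary, and the bulk of the write-up is the short argument that individual envy-freeness of a perfect matching upgrades to class envy-freeness via the assignment-valuation inequality $v_p(M(N_q)) \le \sum_{i\in N_p} \max_{j} u_i(j) \cdot [\text{$j$ available}]$ bounded by $u_p(M)$.
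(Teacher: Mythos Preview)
Your proposal is correct and follows essentially the same route as the paper: reduce class envy-freeness of a perfect matching to individual envy-freeness (via the inequality $u_p(M)\ge v_p(M(N_q))$), then invoke the house-allocation result of Manurangsi and Suksompong~\cite{ManurangsiSuksompong2021} under the hypothesis $m\ge(\mathrm{e}+\varepsilon)n$. One small slip: the inequality $m\ge(\mathrm{e}+\varepsilon)n$ does \emph{not} force $n\to\infty$ as $m\to\infty$ (it is only a lower bound on $m$), but the missing case where $n$ stays bounded while $m\to\infty$ is trivial, so the argument goes through.
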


\renewcommand{\proofname}{Proof}
\begin{proof}
    We say that a matching $M$ is \emph{envy-free} if we have $u_i(j) \geq u_i(j')$ for every pair of agents $i,i' \in N$ where $j$ (resp. $j'$) is an item matched to $i$ (resp. $i'$) under $M$. 
    The proof of Theorem 6.1 in~\cite{ManurangsiSuksompong2021} implies that, 
    when $m\geq (\mathrm{e}+\varepsilon)\cdot n = (\mathrm{e}+\varepsilon)\cdot \sum_{p\in [k]} n_p$, a greedy assignment algorithm for an envy-free matching (Algorithm $6.1$ in~\cite{ManurangsiSuksompong2021}) computes an envy-free matching $M$ with a probability that converges to $1$ as $n\to \infty$.
    Let $\mathcal{M}(N_p,M(N_q))$ denote the set of all maximum-weight matchings between $N_p$ and $M(N_q)$.
    Observe that since $M$ is envy-free, we have that, for every pair $p,q\in [k]$ of classes and every matching $M'\in \mathcal{M}(N_p,M(N_q))$, 
    \[
    u_p(M)=\sum_{\{i,j\} \in M,\, i\in N_p}u_i(j) \geq 
    \sum_{\{i,j\}\in M'}u_i(j).
    \]
    Thus, $u_p(M) \geq v_p(M(N_q))$ for every pair $p,q\in [k]$ of classes.     
    This concludes the proof. 
\end{proof}


\section{Proof of Proposition~\ref{proposition:1_2_ef1_non_waste}}

\begin{proof}[Proof of Proposition~\ref{proposition:1_2_ef1_non_waste}]
A matching $M$ is \emph{marginal class envy-free up to one item (MCEF1)} if for every pair of classes $p, q \in [k]$, class $p$ does not envy class $q$, or there exists an item $j\in M(N_q)$ such that $u_p(M) \geq v_p(M(N_p) \cup M(N_q)\setminus \{j\}) - u_p(M)$.
The resulting matching $M$ is MCEF1 by Corollary~3.2 in~\cite{Montanari2024} and $1/2$-CEF1 by Theorem 3.8 in~\cite{Amanatidis2023}. 

Next, we prove that matching $M$ is non-wasteful.
It is immediate that the matching satisfies the condition~\ref{item:consition-a-of-non-wasteful} in Definition~\ref{def:nowasteful} by construction of the algorithm.
To show that $M$ satisfies the condition~\ref{item:consition-b-of-non-wasteful} in Definition~\ref{def:nowasteful}, we prove a stronger statement that no class is allocated to an item whose removal does not change their total utility. 
We use the fact that for assignment valuations, the greedy algorithm that constructs a size-$r$ subset by sequentially adding an item with the highest marginal utility is guaranteed to compute an optimal set of items that maximizes the valuation across all size-$r$ subsets (See Theorem~17.2 in~\cite{Schrijver}). 
This implies that for each round $r$ and each class $p$, we have
\begin{equation}\label{eq:greedy}
    M^{r}(N_p) \in \mathrm{argmax} \bigl\{\, v_p(I')\, \bigm|\, \, |I'|=r ~\land~I' \subseteq M^{r-1}(N_p) \cup I_p^{r}\,\bigr\}.
\end{equation}
Suppose towards a contradiction that there exist class $p$ and item $j \in M(N_p)$ such that $v_p(M(N_p)) - v_p(M(N_p) \setminus \{j\})=0$. Then, $v_p(M(N_p) \setminus \{j\})=v_p(M(N_p))$. Since each item $j_p^r$ added to the bundle of class $p$ has a positive contribution, we have
$$
    v_p(M(N_p))>v_p(M^{r^*-1}(N_p)),
$$
where $r^*$ denotes the last round when class $p$ obtains some item. However, this implies $v_p(M(N_p) \setminus \{j\})>v_p(M^{r^*-1}(N_p))$, which contradicts~\eqref{eq:greedy} since $M(N_p) \setminus \{j\}$ has the same cardinality as $M^{r^*-1}(N_p)$ and yields a higher value for class $p$. 
\end{proof}


\section{
Proof of Lemma~\ref{lemma:bound_by_pi^2/6}
}\label{appendix:bound_by_pi}

\begin{proof}[Proof of Lemma~\ref{lemma:bound_by_pi^2/6}]
    From $\sum_{i=1}^n \frac{1}{i} = \log n + \gamma + \frac{1}{2n} + O(n^{-2})$, where $\gamma$ is a constant~\cite{young1991euler},
    we have
    \begin{align*}
        &\sum_{r=1}^{\min(n_p,n_q)} \frac{1}{r}\sum_{r'=1}^r \frac{1}{n_q-r'+2} \\
        &= 
            \sum_{j=1}^{\min(n_p,n_q)} \frac{1}{n_q-j+2} \sum_{r=j}^{\min(n_p,n_q)} \frac{1}{r} \\
        &= 
            \sum_{j=1}^{\min(n_p,n_q)} \frac{1}{n_q-j+2} \left(\log \min(n_p,n_q) + \frac{1}{2\min(n_p,n_q)} - \log j - \frac{1}{2j} + O(n^{-2})\right) \\
        &= 
            \sum_{j=1}^{\min(n_p,n_q)} \frac{1}{n_q-j+2} \log\left(\frac{\min(n_p,n_q)}{j}\right) + o(1) \\
        &= 
            \sum_{j=n_q - \min(n_p,n_q)}^{n_q-1} \frac{1}{i+2} \log\left(\frac{\min(n_p,n_q)}{n_q - i}\right) + o(1) \\
        &= 
            -\int_{x = n_q- \min(n_p,n_q)}^{n_q-1} \frac{1}{x} \log\left(1- \frac{x-n_q+\min(n_p,n_q)}{\min(n_p,n_q)}\right) \mathrm{d}x + o(1) \\
        &= 
            -\int_{y=0}^{\min(n_p,n_q)} \frac{1}{y + n_q- \min(n_p,n_q)} \log\left(1- \frac{y}{\min(n_p,n_q)}\right) \mathrm{d}y + o(1) \\
        &=
            \sum_{s=1}^{\infty}\int_{y=0}^{\min(n_p,n_q)} \frac{1}{y + n_q- \min(n_p,n_q)} \frac{y^s}{s\cdot \min(n_p,n_q)^s} \mathrm{d}y + o(1) \\
        &\le 
            \sum_{s=1}^{\infty}\int_{y=0}^{\min(n_p,n_q)} \frac{y^{s-1}}{s\cdot \min(n_p,n_q)^s} \mathrm{d}y + o(1) \\
        &= 
            \sum_{s=1}^{\infty}\frac{1}{s^2} + o(1) \\
        &=
            \frac{\pi^2}{6} + o(1).
    \end{align*}
Here, for the fourth equation, we use the fact the remaining terms are all $o(1)$ when transforming the sum into an integral, and for the last equation, we adapt the well-known result $\sum_{s=1}^{\infty}\frac{1}{s^2} = \frac{\pi^2}{6}$. See~\cite{hofbauer2002simple}.
\end{proof}





\end{document}